\documentclass[12pt]{article} 
\usepackage[letterpaper, margin=1in]{geometry}  % Set 1-inch margins
\usepackage{setspace}  % For double spacing
\usepackage{lmodern}  % Improved font rendering
\usepackage[hypertexnames=false]{hyperref}
\usepackage[numbers,authoryear]{natbib}  % For author-year citations
\usepackage{multibib}
\newcites{SM}{Appendix References}
\usepackage{amsmath, amsfonts, amssymb, mathrsfs}
\usepackage{array}
\usepackage{booktabs}
\usepackage{etoc}
\usepackage{graphicx}
\usepackage{multicol, multirow}
\usepackage{pdflscape}
\usepackage{amsthm}
\usepackage{hyperref}
\usepackage{graphicx}
\usepackage{algorithm}
\usepackage{upgreek}
\usepackage{authblk}
\usepackage{enumitem}
\usepackage{etoc}
\usepackage{algpseudocode}  
\hypersetup{colorlinks=true, linkcolor=blue, citecolor=blue}

\newtheorem{definition}{Definition}

\newtheorem{lemma}{Lemma}
\newtheorem{prop}{Proposition}
\newtheorem{assumption}{Assumption}

\newtheorem{theorem}{Theorem}
\newtheorem{corollary}{Corollary}
\newtheorem{remark}{Remark}

\DeclareMathOperator*{\argmin}{arg\,min}
\newcommand{\R}{\mathbb{R}}  % Real numbers
\newcommand{\E}{\mathbb{E}}  % Expectation
\newcommand{\Var}{\text{Var}} % Variance
\newcommand{\Prob}{\mathbb{P}} % Probability
\newcommand{\indep}{\perp\!\!\!\perp}  % Independence symbol
\newcommand{\CS}{\mathcal{S}_{Y \mid {X}}}  % Independence symbol
\newcommand{\CMS}{\mathcal{S}_{\E(Y \mid {X})}}  % Independence symbol
\newcommand{\CCMS}{\mathcal{S}_{\E(Y^{x})}}  % Independence symbol
\newcommand{\beq}{\begin{equation}}
\newcommand{\eeq}{\end{equation}}
\newcommand{\bal}{\begin{aligned}}
\newcommand{\eal}{\end{aligned}}

\newcommand{\po}[1]{Y^{#1}}               % \po{x} -> Y^x
\newcommand{\calA}{\mathfrak{A}}
\newcommand{\Apo}[1]{\calA_{\po{#1}}}            % \Apo{x} -> \mathfrak{A}_{Y^x}
\usepackage{mathtools}

\title{\textbf{Causal Sufficient Dimension Reduction for Multiple Continuous Exposures with an Application to Environmental Mixtures}}

\author[]{Thomas W. Hsiao\thanks{Corresponding author: thsiao3@emory.edu}}
\author[]{Howard H. Chang}
\author[]{Razieh Nabi}

\affil[]{Department of Biostatistics and Bioinformatics, Emory University}

\date{\today}

\begin{document}
\begin{singlespace}
\maketitle

\begin{abstract}
Estimating causal effects with multivariate continuous exposures is challenging because causal exposure-response surfaces can be high-dimensional, complicating estimation and interpretation of joint exposure effects. Such settings arise in environmental epidemiology, where interest centers on the health effects of chemical and pollutant mixtures. We develop \textit{causal sufficient dimension reduction} (CSDR), a semiparametric framework for representing causal exposure-response surfaces through low-dimensional exposure summaries. We formalize the reduction target as the \textit{causal central mean subspace} and propose a modular two-stage estimator that decouples nuisance-function estimation from subspace estimation, simplifying implementation relative to existing marginal structural model-based approaches. The reduced exposure preserves the information needed to characterize joint causal effects while enabling efficient downstream estimation. We establish a convergence rate for causal subspace recovery accounting for first-stage nuisance estimation error, show that the structural dimension can be estimated consistently, and introduce a \textit{subspace importance score} that quantifies the contribution of each exposure to the reduction. In simulations, CSDR yielded more accurate estimation and uncertainty quantification of the exposure-response surface than methods using noncausal dimension reduction or the original exposure. We apply CSDR to study the effect of maternal exposure to PFAS chemical mixtures on infant birth weight in the Atlanta African American Maternal-Child Cohort.
\end{abstract}

\vspace{1em}
\noindent \textbf{Keywords:} Causal inference; continuous treatments; exposure-response surface, semiparametric estimation; environmental mixtures
\end{singlespace}
\newpage
% \setcounter{page}{1}
% \pagenumbering{arabic}
%%%%%%%%%%%%%%%%%%%%%%%%%%%%%%%%%%%%%%%%%%%%%%%%%%%%%%%%%%%%%%%%%%%%%%%%%%%%%%%%%%%%%%

\newpage

%%% Introduction
\section{Introduction}\label{sec-intro}

Causal inference with multivariate continuous treatments arises in many scientific applications, including environmental epidemiology where populations are simultaneously exposed to multiple, highly correlated exposures \citep{carlinUnravelingHealthEffects2013}. In these settings, investigators are often interested in estimating the causal exposure-response surface (ERS), $\mu(x) \coloneqq \E(Y^x)$, where $Y^x$ denotes the potential outcome under exposure level $x\in\R^p$. The ERS characterizes how the expected potential outcome varies under interventions on the exposure vector. Estimating the ERS is challenging when $p$ is moderate or large: flexible nonparametric regression and density estimation suffer from the curse of dimensionality, and the resulting high-dimensional surface can be difficult to interpret.

A promising way to address these challenges is to identify low-dimensional structure in the causal ERS. In environmental mixtures applications, such structure is often scientifically plausible because many exposures arise from shared environmental sources and exposure pathways. We refer to this problem as \textit{causal sufficient dimension reduction} (CSDR). The goal is to identify a low-dimensional linear summary of the exposure vector $X$ that preserves its causal relationship with $Y$. Specifically, suppose there exists a matrix $\beta\in\R^{p\times d}$ with $1\leq d<p$ and an unknown smooth function $g$ such that $\mu(x) = g(\beta^\top x)$. Because $\beta$ is identifiable only up to a change of basis, the statistical estimand is the subspace spanned by the columns of $\beta$. We refer to the minimal subspace satisfying this representation as the \textit{causal central mean subspace} (CCMS). Under this framework, the causal ERS depends on the original exposure $X$ only through the lower-dimensional summary $Z \coloneqq \beta^\top X\in \R^d$, allowing $X$ to be replaced by $Z$ for subsequent estimation and interpretation. The reduction can improve the efficiency of downstream ERS estimation while yielding interpretable low-dimensional summaries that preserve the causal ERS.

The proposed causal sufficient dimension reduction is distinct from both conventional exposure mixtures analysis \citep{joubertPoweringResearchInnovative2022} and classical sufficient dimension reduction (SDR) \citep{liSufficientDimensionReduction2018, cookPrincipalComponentsSufficient2018}. Existing mixtures methods span a spectrum of modeling assumptions. Flexible approaches such as Bayesian kernel machine regression (BKMR; \citealp{bobbBayesianKernelMachine2015}) and Bayesian additive regression trees (BART; \citealp{englertModelingJointHealth2025}) estimate complex multivariate exposure-response relationships. However, these methods may have limited power, be computationally expensive, and do not directly learn low-dimensional exposure summaries. At the other extreme, methods such as  weighted quantile sum regression \citep{carricoCharacterizationWeightedQuantile2015} and quantile $g$-computation \citep{keilQuantileBasedGComputationApproach2020} impose a one-dimensional exposure index that may be overly restrictive. Unsupervised approaches such as principal components pursuit \citep{gibsonPrincipalComponentPursuit2022} strike a compromise by reducing the dimension of the exposure prior to estimation, but ignore the outcome entirely and therefore need not recover causally relevant exposure summaries. More fundamentally, these approaches are generally formulated in terms of observed outcome regressions rather than the causal exposure-response surface itself.

Supervised dimension reduction methods are closer in spirit to the CSDR objective. In classical SDR, a common target is the central mean subspace, the minimal subspace that preserves the conditional mean of $Y$ given $X$. However, in observational studies with confounding, the conditional mean $\E(Y \, | \, X = x)$ can differ substantially from the causal mean $\E(Y^x)$. Consequently, classical SDR procedures na{\"i}vely applied to $(Y, X)$ may recover low-dimensional summaries that fail to preserve the causal exposure-response relationship. 

Existing work has begun to connect SDR with causal inference. Much of this literature focuses on reducing the dimension of covariates for estimating average treatment effects \citep{maRobustEfficientApproach2019, luoMatchingUsingSufficient2020,chengSufficientDimensionReduction2022}, or individualized treatment regimes \citep{sonDimensionreducedOutcomeweightedLearning2026}. In contrast, our goal is to reduce the dimension of a continuous exposure vector while preserving the causal ERS. Our work builds on the foundation established by \cite{nabiSemiparametricCausalSufficient2022}, who developed the first causal SDR estimator for multivariate continuous exposures by embedding semiparametric estimating equations for the central mean subspace \citep{maSemiparametricApproachDimension2012} within a marginal structural model framework \citep{robinsMarginalStructuralModels2000a}. While theoretically appealing, implementation requires estimating multiple nuisance functions that depend on the unknown reduction together with computationally intensive score-based optimization procedures.

In this paper, we propose a modular framework for estimating the CCMS that substantially reduces the nuisance estimation burden of the original marginal structural model approach. Rather than deriving estimating equations indexed by the unknown reduction $\beta$, we decouple nuisance-function estimation from subspace estimation through a two-stage procedure. In the first stage, we construct transformed variables $(\tilde{Y}, \tilde{X})$ whose central mean subspace coincides with the target causal central mean subspace. We show that these transformations can be constructed using nuisance-function estimators developed for causal inference with continuous treatments. In the second stage, we apply existing SDR methods to estimate the central mean subspace of these transformed variables. The two-stage formulation allows existing tools for causal inference and sufficient dimension reduction to be combined in a transparent and computationally efficient manner. 

Our work also relates to recent developments in mixtures analysis and causal inference. In the mixtures literature, \cite{mcgeeBayesianMultipleIndex2023} proposed a Bayesian multiple index model as a supervised dimension reduction alternative to BKMR, although the resulting indices do not generally admit a causal interpretation. More recently, \cite{shinTreatmentEffectHeterogeneity2025} introduced causal estimands for environmental mixtures using nonparametric Bayesian models for heterogeneous dose-response estimation, while \cite{kramerCausalInferenceHighDimensional2026} studied doubly robust estimation of the causal mean for high-dimensional continuous treatments under sparsity assumptions. Our framework is also closely connected to the growing literature on causal inference with continuous treatments. Kernel-based identification strategies have been used to extend inverse probability weighting and doubly robust estimation from discrete to continuous exposures \citep{kallusPolicyEvaluationOptimization2018, klosinAutomaticDoubleMachine2021, colangeloDoubleDebiasedMachine2026, zhangDoublyRobustInference2025}. Our approach differs from these methods by using a causal inference machinery to learn low-dimensional exposure summaries that preserve the causal exposure-response surface itself.

The main contributions of this paper are as follows: 
(i) We clarify the causal central mean subspace as the target estimand and develop a modular two-stage estimation framework that substantially simplifies implementation relative to existing marginal structural model approaches.
(ii) We establish theoretical guarantees for the proposed estimator, including consistency of both the estimated subspace and its structural dimension.
(iii) We introduce subspace importance scores for interpreting the contribution of the original exposures to the estimated causal subspace.
(iv) We evaluate the proposed approach in simulation studies and apply it to study the effect of maternal PFAS exposure mixtures on infant birth weight in the Atlanta African-American Maternal Child cohort. 

The remainder of the paper is organized as follows. Section~\ref{sec:prelim} introduces the causal sufficient dimension reduction framework and defines the causal central mean subspace. Section~\ref{sec:estimation} presents the proposed estimation strategy and theoretical results. Section~\ref{sec:importance_score} introduces subspace importance scores for interpretation of the estimated subspace. Section~\ref{sec:sims} contains simulation studies, and Section~\ref{sec:real_data} presents the PFAS application. We conclude with a discussion in Section~\ref{sec:discussion}. All proofs are deferred to the supplementary materials. 

\section{Causal sufficient dimension reduction framework}
\label{sec:prelim}

\subsection{Setup and identification}
\label{subsec:setup}

Let the observed data consist of $n$ independent and identically distributed observations $({O}_1,...,{O}_n)$, where $O=(Y, X, C)$ with support $\mathcal{O}=\mathcal{Y}\times \mathcal{X} \times \mathcal{C}$. Here, $Y\in\mathbb{R}$ denotes the outcome of interest, $X\in\mathbb{R}^p$ the continuous multidimensional exposure, and $C\in\mathbb{R}^q$ a vector of covariates. We define causal effects using the potential outcomes framework, letting $Y^{x}$ denote the potential outcome when $X=x$. Our primary estimand is the causal mean, or exposure-response surface (ERS), defined as the map given by $\mu: \mathbb{R}^p \to \mathbb{R}$  

\vspace{-1.2cm}
\begin{align*}
\mu(x) \coloneqq \E(Y^x), \qquad x\in\mathcal{X},
\end{align*}

\vspace{-0.5cm} \noindent
which characterizes the expected potential outcome under exposure level $x$. 

To reduce the dimension of the exposure while preserving the ERS, we consider a low-dimensional representation $Z \coloneqq \beta^\top X$, where $\beta\in\mathbb{R}^{p\times d}$ is full rank with $1\leq d < p$. We denote by $Y^z$ the potential outcome under the reduced exposure level $Z=z$, and define the corresponding reduced ERS as
\[
\mu_\beta(z) \coloneqq \E(Y^z),
\]
where the subscript $\beta$ emphasizes that the reduced ERS depends on the choice of dimension reduction subspace. The support of $Z$ is denoted by $\mathcal{Z} \coloneqq \beta^\top\mathcal{X}$. 

Let $\Prob$ denote the distribution of $O$. For the original exposure $X$, define the \textit{outcome regression} as $m(x,c)\coloneqq\E(Y\, | \, X=x, C=c)$ and the \textit{generalized propensity score} (GPS) as $\pi(x\, | \, c)=\partial \Prob(X\leq x\, | \, C=c) / \partial x$. For the reduced exposure $Z$, define analogously $m_\beta(z,c)\coloneqq\E(Y\, | \, Z=z, C=c)$ and $\pi_\beta(z\, | \, c)\coloneqq \partial \Prob(Z \leq z \, | \, C = c) / \partial z$. 

We assume the following standard identification conditions for the  effect of $X$ on $Y$. 

\begin{assumption}[Identification conditions for the effect of $X$]\label{ass:identificationX}
Assume: 
\begin{enumerate}[label=(\alph*)]
\item {Consistency:}  
If $X=x$ then $Y=Y^{x}$, for all $x\in\mathcal{X}$.

\item {Strong positivity:}  
For $\varepsilon>0$, $\pi(x \mid c) \geq \varepsilon$ for all $(x,c)\in \mathcal{X}\times\mathcal{C}$.

\item {Conditional ignorability:}  
$Y^{x} \indep X \mid C$ for all $x\in\mathcal{X}$.
\end{enumerate}
\end{assumption}

Although weaker positivity conditions suffice for identification (i.e., $\pi(x \, | \, c) > 0$ for all $(x,c)\in\mathcal{X}\times\mathcal{C}$), strong positivity is commonly imposed for estimation and inference with continuous exposures. In practice, estimation of the ERS should therefore be restricted to regions of the exposure space that are adequately represented in the observed data \citep{antonelliCausalAnalysisAir2024}.

Under Assumption~\ref{ass:identificationX}, the ERS is identified by the $g$-formula \citep{robinsNewApproachCausal1986}:

\vspace{-1.2cm}
\begin{align}
\mu(x)=\E[m(x,C)].
\label{eq:identification_gcomp}
\end{align}

\vspace{-0.5cm}
For continuous exposures, however, the ERS is a nonregular parameter and therefore does not admit a pathwise differentiable influence function or estimation at the usual $\sqrt{n}$-rate \citep{bickelEfficientAdaptiveEstimation1998}. As a result, modern ERS estimation typically relies on kernel localization, which requires additional smoothness assumptions \citep{colangeloDoubleDebiasedMachine2026}.

\begin{assumption}[Smoothness conditions]
For any $(y,x,c)\in\mathcal{Y} \times \mathcal{X} \times \mathcal{C}$, assume:  
\begin{enumerate}[label=(\alph*)]
\item The outcome regression $m(x,c)$ and generalized propensity score $\pi(x \mid c)$ are three times continuously differentiable with respect to $x$, with bounded derivatives over $\mathcal{X}\times\mathcal{C}$. 

\item $\Var(Y\mid X=x, C=c)$ and its derivatives w.r.t. $x$ are uniformly bounded over $\mathcal{X}\times \mathcal{C}$;  

\item Define the product kernel $K_h(X_i-x) = \prod_{j=1}^p k\!\left(\frac{X_{ij}-x_j}{h}\right)$, where $k$ is a second-order symmetric kernel satisfying 
$\int k(u)\,du = 1$, 
$\int uk(u)\,du = 0$, and 
$\int u^2k(u)\,du < \infty$. 
Assume further that $k$ is bounded and continuously differentiable, and that for some positive constants $c$, $\bar u$, and $\nu>1$, $|dk(u)/du| \leq c|u|^{-v}$ for $|u| > \bar{u}$. 
\end{enumerate}
\label{ass:smooth}
\end{assumption}

Under Assumptions~\ref{ass:identificationX} and \ref{ass:smooth}, the ERS also admits the following inverse probability of treatment weighted (IPTW) representation: 
\begin{align}
\mu(x) 
&= \lim_{h\to 0}\E\!\Bigg[\frac{1}{h^p}\,
\frac{K_h\!\left({X-x}{}\right)}{\pi(X \mid C)}\, Y \Bigg].
\label{eq:identification_ipw}
\end{align}

These identification results motivate several standard estimators of the ERS. Regression adjustment (RA), or the $g$-formula estimator, estimates $\mu(x)$ by averaging an estimated outcome regression:

\vspace{-1.6cm}
\begin{align}\label{eq:RA}
    \hat{\mu}_{RA}(x) = \frac{1}{n} \sum_{i=1}^n \widehat{m}(x, C_i),
\end{align}

\vspace{-0.15cm} \noindent
where $\widehat{m}$ denotes an estimator of the outcome regression $m$. 

The IPTW estimator instead weights observations using the inverse GPS: 

\vspace{-1.1cm}
\begin{align}
    \hat{\mu}_{IPTW}(x) = \frac{1}{n \, h^p} \sum_{i=1}^n \frac{K_h\left({X_i - x}\right)}{\widehat{\pi}(X_i \mid C_i)}\, Y_i, 
    \label{eq:IPW}
\end{align} 

\vspace{-0.25cm} \noindent 
where $\hat{\pi}$ denotes an estimator of the GPS $\pi$. 

A doubly robust (DR) estimator combines both the RA and IPTW estimators as:

\vspace{-1.cm}
\begin{align}
    \hat{\mu}_{DR}(x) = \frac{1}{n \, h^p} \sum_{i=1}^n \bigg\{\frac{K_h\left({X_i - x}\right)}{\widehat{\pi}(X_i \mid C_i)}\, [Y_i - \widehat{m}(x, C_i)] + h^p \,  \widehat{m}(x, C_i)\bigg\}.
    \label{eq:DR}
\end{align}

\vspace{-0.15cm}
These estimators provide principled approaches for estimating the ERS under multivariate continuous exposures. However, estimation becomes increasingly difficult as the exposure dimension $p$ grows due to the curse of dimensionality inherent in nonparametric regression and density estimation. This motivates replacing the original exposure vector with a lower-dimensional representation that still preserves the causal ERS. We next review classical sufficient dimension reduction to set up the concepts underlying the causal framework developed below. 

\subsection{From classical SDR to causal SDR}

Sufficient dimension reduction (SDR) seeks a low-dimensional linear representation of a high-dimensional predictor vector ${X} \in \mathbb{R}^p$ that preserves the information in $X$ relevant to an outcome $Y$. A standard target in classical SDR is a reduction that preserves the conditional distribution of $Y$ given $X$. For a matrix $\beta \in \mathbb{R}^{p \times d}$ with $d < p$, the subspace $\text{span}(\beta)$ is a dimension reduction subspace if $Y \perp\!\!\!\perp X \mid \beta^\top X$, or equivalently,  

\vspace{-1.2cm}
\begin{align*}%\label{eq:CS}
    \Prob(Y\leq y\mid {X} = {x}) = \Prob(Y\leq y\mid \beta^\top {X} = \beta ^\top {x}).
\end{align*}

\vspace{-0.5cm} \noindent
The central subspace (CS), denoted $\CS$, is the intersection of all such dimension reduction subspaces. Because $\beta$ is identifiable only up to its  column space, the estimand is $\CS$, rather than $\beta$ itself. 

In many applications, interest centers on the conditional mean rather than the full conditional distribution. This motivates the central mean subspace (CMS), denoted $\CMS$, defined as the intersection of all subspaces $\text{span}(\beta)$ satisfying 

\vspace{-1.2cm}
\begin{align}\label{eq:CMS}
    \E(Y \mid X = x)
    =
    \E(Y \mid \beta^\top X = \beta^\top x).
\end{align} 

\vspace{-0.5cm} \noindent 
Under mild smoothness conditions, \eqref{eq:CMS} is equivalent to the multiple-index representation 

\vspace{-1.2cm}
\begin{align*}
    Y = g(\beta^\top X) + \varepsilon,
    \qquad
    \E(\varepsilon \mid X)=0,
\end{align*}

\vspace{-0.5cm} \noindent 
for some unknown smooth function $g$. 

A large literature studies estimation of $\CMS$. Semiparametric approaches use tools from semiparametric efficiency theory \citep{bickelEfficientAdaptiveEstimation1998, tsiatisSemiparametricTheoryMissing2006} to characterize regular asymptotically linear (RAL) estimators of $\beta$. Building on this framework, \cite{maSemiparametricApproachDimension2012} derived the class of influence functions for $\CMS$ estimators, while \cite{maEstimationEfficiencyCentral2014} and \cite{luoEfficientDimensionReduction2014} obtained the semiparametric efficient score
\begin{align}
    {U}(\beta)=\frac{1}{\sigma^2(X)}\left[ X - \frac{\E\{X / \sigma^2(X) \mid \beta^\top X\}}{\E\{1 / \sigma^2(X) \mid \beta^\top X\}} \right] \frac{\partial \E(Y\mid \beta^\top X)}{\partial (\beta^\top X)} \big[Y - \E(Y \mid \beta^\top {X})\big],
    \label{eq:S_eff}
\end{align}
where $\sigma^2(X) \coloneqq \Var(Y\mid X)$. These methods provide efficiency guarantees and support influence function based inference, but practical implementation can be challenging because evaluating \eqref{eq:S_eff} requires estimating several high-dimensional nuisance functions while repeatedly updating $\beta$ through numerical optimization. 

Gradient-based methods provide a computationally simpler alternative. A prominent example is the \textit{minimum average variance estimator} (MAVE) of \cite{xiaAdaptiveEstimationDimension2002}, which exploits the identity

\vspace{-1.2cm}
\begin{align*}
\CMS
= \text{span}\left\{\E\!\left[
\frac{\partial \E(Y \mid X = x)}{\partial x}
\frac{\partial \E(Y \mid X = x)}{\partial x^\top}
\right]\right\},
\end{align*} 

\vspace{-0.35cm} \noindent
provided the support of $Z=\beta^\top X$ is convex. Therefore, $\beta$ can be recovered through estimation of the gradient of the regression function. Under \eqref{eq:CMS}, MAVE jointly estimates both $\beta$ and the gradient of $\E(Y\mid X)$ through the local linear regression objective
\begin{equation}\label{eq:MAVE}
    \argmin_{\beta^\top\beta=I_d, {a}, {b}}  \sum_{j=1}^n \sum_{i=1}^n \big[Y_i - \{a_j + {b}_j^\top \beta^\top (X_i - X_j)\}\big]^2 w_{ij},
\end{equation}
where $a=(a_1,a_2,...,a_n)$, $b=({b}_1^\top,...,{b}^\top_n)^\top$, $w_{ij}$ are kernel weights based on the distance between $X_i$ and $X_j$, and $h$ is a bandwidth distinct from the bandwidth in \eqref{eq:DR}. MAVE alternates between least squares updates for $(a,b)$ and $\beta$, yielding a numerically stable procedure that avoids directly solving estimating equations derived from the efficient score. Although gradient-based methods may sacrifice some efficiency relative to semiparametric estimators, they are often substantially easier to implement in practice. 

Importantly, however, both semiparametric and gradient-based SDR methods target associational structure in the observed data distribution. In particular, they reduce the conditional mean regression surface $\E(Y\mid X=x)$ rather than the causal exposure-response surface $\E(Y^x)$. In the presence of confounding, these quantities can differ substantially, implying that classical SDR subspaces need not preserve causal effects. 

Our goal is instead to identify a low-dimensional representation of the exposure vector that preserves the causal exposure-response surface. This motivates a causal analogue of the CMS, which we formalize in the next subsection through the causal central mean subspace. 

\subsection{Causal central mean subspace}

We now formalize a causal analogue of the central mean subspace. Suppose there exists a matrix $\beta \in \mathbb{R}^{p\times d}$ such that

\vspace{-1.2cm}
\begin{align}\label{eq:cs_constraint}
    \mu(x) = g(\beta^\top x), \qquad \forall x \in \mathcal{X}.
\end{align} 

\vspace{-0.35cm} 
We refer to \eqref{eq:cs_constraint} as the causal sufficiency condition. The above states that the causal exposure-response surface depends on the exposure vector only through the lower-dimensional projection $\beta^\top x$. As in classical SDR, the basis matrix $\beta$ is not uniquely identified because $\beta$ and $\beta A$ generate the same subspace for any full-rank $d\times d$ matrix $A$. Therefore, we focus on the subspace $\text{span}(\beta)$ rather than any particular basis representation. 

Because each $\beta^\top x$ corresponds uniquely to the orthogonal projection of $x$ onto $\text{span}(\beta)$, condition \eqref{eq:cs_constraint} can equivalently be written as 

\vspace{-1.2cm}
\begin{align}
    \mu(x)=\mu(P_\beta x), \quad \text{where} \quad P_\beta = \beta(\beta^\top\beta)^{-1}\beta^\top. 
\end{align} 

\vspace{-0.5cm} \noindent 
Here $P_\beta$ denotes the orthogonal projection matrix onto $\text{span}(\beta)$. We now define the causal central mean subspace.

\begin{definition}[Causal central mean subspace]\label{def:cCMS}
    If there exists a $p\times d$ matrix $\beta$ satisfying \eqref{eq:cs_constraint}, then $\text{span}(\beta)$ is called a causal sufficient dimension reduction subspace for the causal mean. The intersection of all such subspaces, provided the intersection itself also satisfies \eqref{eq:cs_constraint}, is called the causal central mean subspace (CCMS), denoted $\mathcal{S}_{\E(Y^{x})}$. 
\end{definition}

The CCMS therefore defines the minimal linear subspace preserving the causal exposure-response surface. Unlike the CMS $\CMS$, which summarizes the observational regression function $\E(Y\mid X=x)$, the CCMS summarizes the counterfactual mean function $\E(Y^x)$.

As in classical SDR theory, existence of the CCMS is not automatic and requires excluding pathological cases where intersections of sufficient subspaces fail to remain sufficient. 

\begin{assumption}[Existence of CCMS] 
\label{ass:existence_CCMS}
Let $\mathfrak{A}$ denote the collection of all causal sufficient dimension reduction subspaces satisfying \eqref{eq:cs_constraint}. Then
    $\mathcal{S}_{\E(Y^{x})} \coloneqq \cap \{ \mathcal{S}: \mathcal{S} \in \mathfrak{A} \} \in\mathfrak{A}$.
\end{assumption}

To ensure that the reduced exposure $Z \coloneqq \beta^\top X$ admits a meaningful causal interpretation, we additionally require that interventions on $Z$ correspond to well-defined interventions on the original exposure vector $X$.

\begin{assumption}[Well-defined reduced exposure]
    For any $\mathcal{S} \in \mathfrak{A}$ and any matrix $\beta$ satisfying $\text{span}(\beta)=\mathcal{S}$, if $\beta^\top x=\beta^\top x'$ then $Y^x=Y^{x'}$ almost surely.
    \label{ass:well_defined_Z}
\end{assumption}

Note that Assumption~\ref{ass:well_defined_Z} is not necessary to define the CCMS, but ensures that the potential outcome indexed by the reduced exposure is uniquely defined. Without this condition, multiple exposure vectors could map to the same value of $Z$ while producing different counterfactual outcomes, leading to a version of the ``multiple versions of treatment'' problem discussed by \cite{vanderweeleCausalInferenceMultiple2013}. Under Assumption~\ref{ass:well_defined_Z}, we may define the reduced potential outcome $Y^z$ unambiguously and can use $Z$ in causal ERS estimation.

We also require positivity for the reduced exposure mechanism.

\begin{assumption}[Strong positivity w.r.t. $Z$]
    \label{ass:strong_positivity_Z}
    For any $\mathcal{S} \in \mathfrak{A}$ and any matrix $\beta$ satisfying $\text{span}(\beta)=\mathcal{S}$, there exists  $\varepsilon>0$, such that $\pi_\beta(z \mid {c}) \geq \varepsilon > 0$ for all $({z}, {c})\in \mathcal{Z} \times \mathcal{C}$.
\end{assumption}

Although weak positivity for $X$ transfers naturally to the reduced exposure $Z$, strong positivity need not. Assumption~\ref{ass:strong_positivity_Z} therefore guarantees that causal effects indexed by $Z$ remain identifiable over the reduced exposure space.

Under these assumptions, the reduced exposure inherits the same identification structure established for the original exposure vector. 

\begin{corollary}[Identification of the causal mean with respect to $Z$]
Under Assumptions~\ref{ass:identificationX}-\ref{ass:strong_positivity_Z}, there exists a unique potential outcome $Y^z$ for each $z \coloneqq \beta^\top x$, for all $x \in \mathcal{X}$. Moreover, the reduced exposure $Z$ satisfies consistency, strong positivity and conditional ignorability with respect to covariates $C$. Consequently,  $\mu_\beta(z)\coloneqq\E(Y^z)$ is identified by the g-formula and inverse probability weighting functionals in \eqref{eq:identification_gcomp} and \eqref{eq:identification_ipw}, replacing $m(x,c)$ and $\pi(x\mid c)$ with $m_\beta(z, c)$ and $\pi_\beta(z\mid c)$. 
\label{cor:identification_Z}
\end{corollary}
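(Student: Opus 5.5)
We verify the three identification conditions for $Z=\beta^\top X$ in turn and then reuse the derivations behind \eqref{eq:identification_gcomp} and \eqref{eq:identification_ipw} with $(Z,m_\beta,\pi_\beta)$ in place of $(X,m,\pi)$. Throughout, fix $\mathcal{S}\in\mathfrak{A}$ and a basis $\beta$ with $\text{span}(\beta)=\mathcal{S}$.

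\emph{Existence and uniqueness of $Y^z$.} For $z\in\mathcal{Z}=\beta^\top\mathcal{X}$, choose any $x\in\mathcal{X}$ with $\beta^\top x=z$ and set $Y^z\coloneqq Y^x$. By Assumption~\ref{ass:well_defined_Z}, another preimage $x'$ gives $Y^{x'}=Y^x$ almost surely, so $Y^z$ is well defined up to null sets. The index set may be uncountable, so ``almost surely'' is meant pointwise in $z$. That is all the later steps need, since each step works with a fixed $z$.

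\emph{Consistency.} If $Z=z$, then $X=x$ for some $x$ with $\beta^\top x=z$. Consistency for $X$ gives $Y=Y^{x}$, and by the construction above $Y^{x}=Y^z$ a.s. Strictly, this step uses the realized $X$ as the preimage. I would handle it by stating consistency as $Y=Y^{X}$ and noting that $Y^{X}=Y^{\beta^\top X}$ a.s.

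\emph{Ignorability.} Fix $z$ and a preimage $x_z$. Then $Y^z=Y^{x_z}$ a.s., and $Y^{x_z}\indep X\mid C$ by Assumption~\ref{ass:identificationX}(c). Since $Z$ is a measurable function of $X$, conditional independence is preserved: $Y^z\indep Z\mid C$.

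\emph{Positivity.} Strong positivity for $Z$ is exactly Assumption~\ref{ass:strong_positivity_Z}. As the paper remarks, it does not follow from positivity for $X$.

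\emph{Identification.} For the g-formula,
\[
\E(Y^z)=\E\{\E(Y^z\mid C)\}=\E\{\E(Y^z\mid Z=z,C)\}=\E\{\E(Y\mid Z=z,C)\}=\E\{m_\beta(z,C)\},
\]
using ignorability, then consistency, and positivity so that the conditioning on $Z=z$ is well defined. For the IPW representation, write
\[
\E\Big[h^{-d}\frac{K_h(Z-z)}{\pi_\beta(Z\mid C)}Y\Big]=\E\Big[\int h^{-d}K_h(t-z)\,m_\beta(t,C)\,dt\Big].
\]
A change of variables and dominated convergence then give the limit $\E\{m_\beta(z,C)\}$ as $h\to0$.

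\textbf{Main obstacle.} The hard step is the IPW limit. It needs continuity and boundedness of $m_\beta(\cdot,c)$ and the existence of $\pi_\beta$, neither of which is literally assumed in Assumption~\ref{ass:smooth}. I would derive them from Assumption~\ref{ass:smooth}. Write $m_\beta(z,c)$ as an average of $m(x,c)$ over the fiber $\{x:\beta^\top x=z\}$, weighted by the conditional density of $X$ given $(Z=z,C=c)$. In fact the causal sufficiency condition \eqref{eq:cs_constraint} is not needed, because the g-formula calculation alone identifies $\E(Y^z)$. For the density, $\pi_\beta(z\mid c)$ is the integral of $\pi(x\mid c)$ over the same fiber, which inherits smoothness and bounded derivatives when $\mathcal{X}$ is bounded. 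Kernel regularity then follows from Assumption~\ref{ass:smooth}(c) applied in dimension $d$. If the smoothness transfer proves delicate, I would state it as an implicit regularity requirement accompanying Assumption~\ref{ass:strong_positivity_Z}.
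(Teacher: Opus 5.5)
Your proposal is correct and follows the paper's proof. Well-definedness of $Y^z$ and consistency come from Assumptions~\ref{ass:identificationX}(a) and~\ref{ass:well_defined_Z}, strong positivity is read directly off Assumption~\ref{ass:strong_positivity_Z}, and ignorability comes from fixing a preimage $x_z$ and passing $Y^{x_z}\indep X\mid C$ to the measurable function $Z=\beta^\top X$. The paper treats the g-formula and IPW steps as immediate, so your explicit derivations go beyond what it writes, as does your flag that the IPW limit needs regularity of $m_\beta,\pi_\beta$ not literally in Assumption~\ref{ass:smooth}.

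One caution on that extra discussion. Fiber integrals of a smooth $\pi(\cdot\mid c)$ over a bounded $\mathcal{X}$ need not be differentiable: for $X$ uniform on $[0,1]^2$, $X_1+X_2$ has a triangular density with a kink at $1$. So drop the claim that smoothness and bounded derivatives are inherited. Instead rely on what the limit actually uses, continuity and boundedness of $m_\beta(\cdot,c)$, and impose these explicitly if needed, as you propose.
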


See a proof in Appendix~\ref{sec:proof_cor_identification_Z}. 

A potential concern is that the reduced exposure $Z$ may not itself admit a natural intervention interpretation. Our objective, however, is not to reinterpret $Z$ as a new treatment, but rather to identify a sufficient causal summary of the original exposure vector $X$. If the exposure-response surface depends on $X$ only through $\beta^\top X$, then the CCMS characterizes the causal directions along which variation in the exposure alters the outcome while discarding directions that are causally irrelevant.

Finally, $\CCMS$ inherits the affine equivariance property of both $\CS$ and $\CMS$. This property justifies standardizing exposures prior to estimation and will be useful in subsequent sections when selecting smoothing parameters and interpreting variable importance.

\begin{prop}[Affine equivariance of $\mathcal{S}_{\E(Y^{{x}})}$]
Suppose Assumptions~\ref{ass:identificationX}-\ref{ass:existence_CCMS} hold, let $A\in \mathbb{R}^{p\times p}$ be full rank, and let $b\in \mathbb{R}^p$. Define $Y^{Ax+b}$ as the potential outcome under intervention $AX+b=Ax+b$. Then $\mathcal{S}_{\E(Y^{x})}= A^\top \mathcal{S}_{\E(Y^{A{x}+b})}$.
\label{prop:affine}
\end{prop}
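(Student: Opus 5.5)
The plan is to show that the two collections of causal sufficient subspaces correspond bijectively under the linear map $\mathcal{S}\mapsto A^\top\mathcal{S}$. Because this map preserves intersections, the two central subspaces then correspond as well. Let $W\coloneqq AX+b$ denote the transformed exposure, with potential outcomes $Y^{w}$. By the definition in the statement, intervening on $W=w$ is the same as intervening on $X=A^{-1}(w-b)$, so $Y^{w}=Y^{A^{-1}(w-b)}$. Writing $\mu_W(w)\coloneqq\E(Y^{w})$, this gives the basic identity
\[
\mu_W(w)=\mu\{A^{-1}(w-b)\}, \qquad \mu(x)=\mu_W(Ax+b).
\]
Invertibility of $A$ makes $x\mapsto Ax+b$ a bijection from $\mathcal{X}$ onto $\mathcal{W}\coloneqq A\mathcal{X}+b$. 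Identification of both surfaces follows from Assumption~\ref{ass:identificationX}, since conditioning on $X=x$ is the same event as conditioning on $W=Ax+b$.

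First step: let $\mathcal{S}=\text{span}(\gamma)$ be a causal sufficient subspace for $\mu_W$, so that $\mu_W(w)=g(\gamma^\top w)$ for all $w\in\mathcal{W}$. Then
\[
\mu(x)=g\{\gamma^\top(Ax+b)\}=\tilde g\{(A^\top\gamma)^\top x\},
\]
where $\tilde g(u)\coloneqq g(u+\gamma^\top b)$. Hence $\text{span}(A^\top\gamma)=A^\top\mathcal{S}$ is causal sufficient for $\mu$.

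Second step, the converse: if $\mu(x)=g(\beta^\top x)$, then $\mu_W(w)=g\{\beta^\top A^{-1}(w-b)\}=\tilde g\{(A^{-1\top}\beta)^\top w\}$. So $A^{-\top}\text{span}(\beta)$ is sufficient for $\mu_W$.

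Together these show that $\mathcal{S}\mapsto A^\top\mathcal{S}$ is a bijection from the collection of causal sufficient subspaces for $W$ onto the collection $\calA$ for $X$. Its inverse is $\mathcal{S}\mapsto A^{-\top}\mathcal{S}$. Dimensions are preserved because $A^\top$ is full rank.

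Third step: $A^\top$ is an injective linear map, so images commute with intersections, $A^\top\bigl(\cap_{\mathcal{S}}\mathcal{S}\bigr)=\cap_{\mathcal{S}} A^\top\mathcal{S}$. Applying this to the full collections gives $\CCMS=\cap\calA=A^\top\mathcal{S}_{\E(Y^{Ax+b})}$. Membership of the intersection in its own collection, as Definition~\ref{def:cCMS} requires, transfers through the bijection. Assumption~\ref{ass:existence_CCMS} for $X$ therefore implies the existence of the CCMS for $W$.

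The main point of care is the first step: checking that the potential outcome relabeling $Y^{Ax+b}=Y^{x}$ is legitimate. This rests on consistency and on the definition of $Y^{Ax+b}$ in the statement. One must also confirm that $g$ is evaluated only on the support $\mathcal{W}$, which maps bijectively onto $\mathcal{X}$. The intersection step is routine linear algebra, needing only injectivity of $A^\top$.
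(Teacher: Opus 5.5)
Your proposal is correct and follows essentially the same route as the paper's proof. Like the paper, you define the transformed surface $\mu\{A^{-1}(u-b)\}$, use the two translation/reparametrization computations to show that the collection of causal sufficient subspaces for $X$ equals $A^\top$ applied to the collection for $AX+b$, and then pass the intersection through the injective map $A^\top$ (the paper's Lemma~\ref{lemma:subspace_tranpose}); your further remark that membership of the intersection in its own collection transfers through this bijection, so Assumption~\ref{ass:existence_CCMS} for $X$ guarantees the CCMS for $AX+b$ exists, is a detail the paper leaves implicit.
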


See a proof in Appendix~\ref{sec:proofs_prop:affine}. 

The results above establish that the CCMS is identified through the causal exposure-response surface under the aforementioned assumptions. We now turn to estimation. Existing semiparametric approaches to causal SDR extend efficient-score methods from association-based SDR, but can be computationally burdensome in practice. We therefore develop a modular two-stage framework that separates causal effect estimation from dimension reduction. 

\section{Modular estimation of the CCMS}
\label{sec:estimation}

For association-based SDR, \cite{luoEfficientDimensionReduction2014} proposed a general semiparametric framework for estimating $\beta$ when the target is a statistical functional of the conditional distribution $Y \mid X$. The CCMS falls outside this setting because identification of the causal exposure-response surface depends additionally on the distribution of covariates $C$. To address this challenge, 
\cite{nabiSemiparametricCausalSufficient2022} reformulated the causal sufficiency condition in \eqref{eq:cs_constraint} as a marginal structural model (MSM), allowing association-based estimating equations to be transformed into valid causal estimating equations using the framework of \cite{robinsMarginalStructuralModels2000a}. We briefly review this approach before introducing our modular two-stage estimation framework for causal SDR.

\subsection{Existing semiparametric approaches}

An MSM specifies a parametric model for the exposure-response surface, $\E(Y^x) = g(x;\beta)$, where $\beta$ is finite-dimensional and $g$ is known up to $\beta$. In contrast, CSDR assumes $\E(Y^x) = g(\beta^\top x)$, where $g$ is an \textit{unknown} smooth function of a low-dimensional index $\beta^\top x$. The causal sufficiency condition in \eqref{eq:cs_constraint} can therefore be viewed as a nonparametric multiple-index MSM. 

Extending the general MSM framework of \cite{robinsMarginalStructuralModels2000a} to CSDR (details in Lemma~\ref{lem:robins_no_phi}), \cite{nabiSemiparametricCausalSufficient2022} derived a class of locally efficient regular and asymptotically linear (RAL) estimators for $\CCMS$. Let $\mathbb{E}^*(\cdot)$ denote expectation under the interventional distribution so that $\E^*(Y\mid X=x) = \E(Y^x)$. Their estimators are defined as solutions to the empirical estimating equations corresponding to $\mathbb{E}[\tilde{U}(O, \beta)] = 0$, where 
\begin{equation}\label{eq:RAL_MSM_beta}
\begin{aligned}
\tilde{U}(O; \beta)
&=
\frac{\pi^*(X)}{\pi(X \mid C)}
\left\{
U^*(O; \beta)
- \mathbb{E}\!\left[ U^*(O; \beta)\mid X, C \right]
\right\}
+
\mathbb{E}^*\big[
\mathbb{E}\!\left[U^*(O; \beta)\mid X, C\right]
\ \big| \ C
\big],
\\
U^*(O; \beta)
&=
\{Y - \mu_\beta(Z)\}
\{\alpha(X) - \nu_\beta(Z)\}.
\end{aligned}
\end{equation}
Here, $\pi^*(X)$ is any specified density with the same support as $X$, $\nu_\beta\coloneqq\E^*[\alpha(X) \mid Z]$, and $\alpha(\cdot)$ is an arbitrary function of $X$. The estimating function $U^*(\beta)$ mirrors the structure of a valid estimating function for $\CMS$ \citep{maSemiparametricApproachDimension2012}, but replaces all regression quantities with their causal analogues, namely $\mu_\beta(\cdot)$ and $\nu_\beta(\cdot)$ in place of $\E(Y\mid Z)$ and $\E[\alpha(X) \mid Z]$. Efficiency of $\tilde{U}(O;\beta)$ can be further improved by choosing $U^*(O; \beta)$ to be the causal analogue of the association-based efficient score in \eqref{eq:S_eff}, yielding a practical approximation to the efficient score for the MSM \citep{robinsMarginalStructuralModels2000a}.

Estimating $\beta$ through the MSM approach presents several challenges. First, suitable choices of $\pi^*(X)$ and $\alpha(\cdot)$ are required, yet performance can be sensitive to these choices and optimal specifications are generally not straightforward. Second, the estimating equation for $\beta$ depends on multiple nuisance functions, including (i) $\mu_\beta(\cdot)$, (ii) $\nu_\beta(\cdot)$, (iii) $\pi(\cdot\mid \cdot)$, and (iv) $m(\cdot,\cdot)$. Under the practical efficient score approximation for the MSM described above, additional nuisances must be estimated, including (v) $\mu'_\beta(z)\coloneqq\partial \mu_\beta(z) / \partial z$, (vi) $\Var(Y^x)$, (vii) $\mathbb{E}^*\left[X / \Var(Y^X) \mid \beta^\top X\right]$, and (viii) $\mathbb{E}^*\left[1 /\Var(Y^X) \mid \beta^\top X\right]$. Although the double robustness of \eqref{eq:RAL_MSM_beta} relaxes the need for all nuisances to be correctly specified, valid inference and stable implementation still require accurate estimation of a substantial subset of these quantities. Another complication is that the nuisance models for $\mu_\beta(\cdot)$ and $m(\cdot, \cdot)$ are not \textit{variationally independent}, meaning that specification of one model implicitly restricts the other. Finally, score-based estimation of $\beta$ relies on iterative numerical optimization in which all $\beta$-dependent nuisance functions must be re-estimated at each new update. Even though nuisances such as $\mu_\beta(\cdot)$ and $\nu_\beta(\cdot)$ are defined on the lower dimensional $Z$, nonparametric causal mean estimation remains computationally intensive relative to standard regression. Numerical optimization additionally requires finite-difference approximations of the Hessian, introducing further approximation error. These difficulties motivate an alternative strategy that separates causal effect estimation from dimension reduction. 

\subsection{A modular framework for causal SDR}

Rather than deriving and solving a causal SDR estimating equation directly, we reduce estimation of $\CCMS$ to a standard SDR problem on transformed data. Suppose the causal sufficiency condition in \eqref{eq:cs_constraint} holds. Further suppose we can construct transformed variables $(\tilde Y,\tilde X)$ satisfying 

\vspace{-1.6cm}
\begin{align}
\label{eq:modular_condition}
    \E(\tilde Y\mid \tilde X)
    =
    \tilde g(\beta^\top \tilde X)
\end{align}

\vspace{-0.35cm} \noindent
for some smooth function $\tilde g$. Then the CMS of $\tilde Y$ given $\tilde X$ coincides with the target CCMS. This observation motivates a modular estimation strategy: (i) Construct transformed variables $(\tilde Y,\tilde X)$ satisfying \eqref{eq:modular_condition}. (ii) Apply a standard SDR estimator of $\CMS$ to  $(\tilde Y,\tilde X)$ so that $\text{span}(\hat\beta)$ estimates $\CCMS$. (iii) Form the reduced exposure $\hat Z=\hat\beta^\top X$ and estimate the causal ERS through the reduced surface $\mu_\beta(z)$. Separating causal nuisance estimation from the dimension reduction avoids much of the technical and computational burden associated with MSM-based estimation of the CCMS, and is similar in spirit to pseudo-outcome methods for continuous treatments and heterogeneous treatment effects \citep{kennedyNonparametricMethodsDoubly2017, bonviniFastConvergenceRates2026, kramerCausalInferenceHighDimensional2026}.

\subsubsection{Constructing causal SDR targets}

We consider three constructions of $(\tilde Y,\tilde X)$ satisfying \eqref{eq:modular_condition}. The first uses direct estimates of the exposure-response surface, the second uses doubly robust pseudo-outcomes whose conditional mean given $X$ equals the ERS, and the third uses residualized variables under an additive confounding model. These constructions entail different tradeoffs between robustness, statistical efficiency, and computational complexity.

\paragraph{Exposure-response surface targets.}

The most direct way to satisfy \eqref{eq:modular_condition} is to use the exposure-response surface itself by setting

\vspace{-1.2cm}
\begin{align}
    \tilde X = X,
    \qquad
    \tilde Y = \mu(X).
\end{align}

\vspace{-0.5cm} \noindent
Under the causal sufficiency condition \eqref{eq:cs_constraint}, $\E[\mu(X)\mid X=x]=\mu(x)=\mu_\beta(\beta^\top x)$. Since $\mu(X)$ is not directly observed, we estimate it at the observed exposures $\{X_j\}_{j=1}^n$ using the RA or DR estimators from Section~\ref{subsec:setup}. The DR estimator evaluated at $X_j$ is 
\begin{equation*}
\hat{\mu}_{DR}(X_{j})=\frac{1}{nh^p}\sum_{i=1}^n \left\{\frac{K_{h}(X_{i}-X_{j})}{\hat{\pi}(X_{i}\mid C_{i})}\big[Y_{i}-\widehat{m}(X_{j},C_{i})\big] \right\}+\frac{1}{n}\sum_{i=1}^n\widehat{m}(X_{j}, C_{i}),
%\label{eq:DR_outcome}
\end{equation*}
while the RA estimator is
\begin{equation*}
\hat{\mu}_{RA}(X_{j})=\frac{1}{n}\sum_{i=1}^n\widehat{m}(X_{j}, C_{i}).
\end{equation*}

The transformed sample $\{\big(\hat{\mu}(X_i), X_i\big)\}_{i=1}^n$ can then be used within any valid CMS estimator. We refer to these variants as RA and DR depending on whether $\hat\mu_{RA}$ or $\hat\mu_{DR}$ is used in the first stage.

\paragraph{Pseudo-outcome targets.}

A second construction uses a pseudo-outcome (PO) whose conditional mean given $X$ equals the ERS. Extending the pseudo-outcome of \cite{kennedyNonparametricMethodsDoubly2017} for continuous treatments to the multivariate setting, define
\begin{equation}
\xi_{j} \coloneqq \frac{\frac{1}{n}\sum_{i=1}^n \pi(X_{j}\mid C_{i}) }{\pi(X_{j}\mid C_{j})} \left\{ Y_{j}-m(X_{j},C_{j}) \right\} + \frac{1}{n}\sum_{i=1}^n m(X_{j}, C_{i}).
\label{eq:pseudo_outcome}
\end{equation}

Under the identification assumptions of Section~\ref{subsec:setup}, $\E(\xi\mid X = x)=\mu(x)=\mu_\beta(\beta^\top x)$ even when $p>1$, satisfying \eqref{eq:modular_condition}; see Lemma~\ref{lemma:pseudo_outcome_unbiased}. Since $\xi$ is also not observed, we construct $\hat\xi$ by replacing $m(\cdot,\cdot)$ and $\pi(\cdot\mid\cdot)$ in \eqref{eq:pseudo_outcome} with estimates. 

Both the DR and pseudo-outcome constructions are doubly robust in the sense that $\E(\tilde Y\mid X=x)=\mu(x)$ whenever either the outcome regression or generalized propensity score is correctly specified. A key distinction is that the DR construction targets the deterministic function $\mu(X)$, whereas the pseudo-outcome construction uses a stochastic response whose conditional mean equals $\mu(X)$. This distinction may matter in finite samples, as the pseudo-outcome construction can present a harder second-stage SDR problem due to the additional layer of noise.

The preceding constructions still require estimation of $m(\cdot,\cdot)$ and $\pi(\cdot\mid\cdot)$, which may appear at odds with the goal of avoiding direct high-dimensional modeling in $X$. However, the MSM approach requires these same nuisance functions together with several additional nuisance quantities inside iterative score-based optimization. In contrast, the modular framework uses nuisance estimation only to construct first-stage targets before applying standard SDR methods. Moreover, the first-stage estimator need only be sufficiently accurate to recover a useful estimate of the CCMS, after which the final ERS can be estimated on the reduced exposure space. We formalize this point in Theorem~\ref{thm:csMAVE_convergence}.

\paragraph{Residualized pairs.} 

Under additional structural assumptions, however, it is possible to avoid direct regression on $X$ altogether. Specifically, under an additive confounding model, we propose a third construction, the \emph{residualized pair} (RP). 

\begin{assumption}[Additive confounding model]
\label{ass:additive_confounding}
There exist measurable functions $g$, $h$, and $f$ such that
\[
\begin{aligned}
Y &= g(\beta^\top X) + h(C) + \varepsilon_Y, \qquad \text{where} \quad \mathbb{E}(\varepsilon_Y \mid X, C) = 0, \\
X &= f(C) + \varepsilon_X, \qquad \text{where} \quad \mathbb{E}(\varepsilon_X \mid C) = 0 \ \text{and} \ \varepsilon_X \indep C.
\end{aligned}
\]
\end{assumption}

Under Assumption~\ref{ass:additive_confounding}, it is possible to avoid estimation of both $m(\cdot,\cdot)$ and $\pi(\cdot\mid\cdot)$ by appealing to a variation of the classical Frisch-Waugh-Lovell (FWL) theorem \citep{frischPartialTimeRegressions1933, lovellSeasonalAdjustmentEconomic1963}.

\begin{prop}
\label{prop:FWL}
    Let $\widetilde{Y}=Y-\E(Y\mid C)$ and $\tilde{X}=X-\E(X\mid C)$. Under Assumption~\ref{ass:additive_confounding}, there exists a smooth function $\tilde{g}$ such that $\E(\tilde{Y} \mid \tilde{X}) = \tilde{g}(\beta^\top \tilde{X})$. 
\end{prop}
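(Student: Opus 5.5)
The plan is to compute $\tilde Y$ explicitly under Assumption~\ref{ass:additive_confounding}, condition on $\tilde X$, and use the independence $\varepsilon_X \indep C$ to remove the confounding. First, $\E(X\mid C)=f(C)$ because $\E(\varepsilon_X\mid C)=0$, so $\tilde X=\varepsilon_X$. Next, the outcome model gives
\begin{align*}
\E(Y\mid C)=\E\{g(\beta^\top f(C)+\beta^\top\varepsilon_X)\mid C\}+h(C)+\E\{\E(\varepsilon_Y\mid X,C)\mid C\}.
\end{align*}
The last term vanishes by iterated expectations. Since $\varepsilon_X\indep C$, the first term equals $G(\beta^\top f(C))$, where $G(t)\coloneqq\E\{g(t+\beta^\top\varepsilon_X)\}$ (with $t\in\R^d$). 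Hence
\begin{align*}
\tilde Y=g(\beta^\top f(C)+\beta^\top\tilde X)-G(\beta^\top f(C))+\varepsilon_Y.
\end{align*}
The $h(C)$ terms cancel; this is the Frisch--Waugh--Lovell step.

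Now condition on $\tilde X=\varepsilon_X$. The noise contributes $\E(\varepsilon_Y\mid \varepsilon_X)=\E\{\E(\varepsilon_Y\mid X,C)\mid\varepsilon_X\}=0$, because $\varepsilon_X$ is a function of $(X,C)$ (indeed $\varepsilon_X=X-f(C)$). For the remaining terms, $\varepsilon_X\indep C$ means the conditional law of $C$ given $\tilde X$ is just the marginal law $P_C$. Therefore
\begin{align*}
\E(\tilde Y\mid\tilde X=e)=\int\big\{g(\beta^\top f(c)+\beta^\top e)-G(\beta^\top f(c))\big\}\,dP_C(c)\eqqcolon\tilde g(\beta^\top e).
\end{align*}
This depends on $e$ only through $\beta^\top e$. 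Concretely, $\tilde g(u)=\E\{g(\beta^\top f(C)+u)\}-\E\{G(\beta^\top f(C))\}$, which is the required form with the same $\beta$.

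The main obstacle is regularity rather than the algebra. First, $g(\beta^\top X)$ must be integrable so that all expectations exist and Fubini applies; this follows from the moment conditions implicit in Assumption~\ref{ass:smooth}, or we add it. Second, $\tilde g$ must be smooth. Smoothness follows by differentiating under the integral, using the bounded derivatives of $g$. These derivatives are inherited from Assumption~\ref{ass:smooth}(a), since $m(x,c)=g(\beta^\top x)+h(c)$. I would state a dominated-convergence step for these derivatives.

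I would also note a limitation. The argument only shows $\text{span}(\beta)$ is a dimension reduction subspace for $\E(\tilde Y\mid\tilde X)$, so the CMS of $(\tilde Y,\tilde X)$ is contained in the CCMS. Equality requires $\tilde g$ not to lose directions, since $\tilde g$ is $g$ convolved with the distribution of $\beta^\top f(C)$. Proposition~\ref{prop:FWL} claims only existence, so this is not needed here.
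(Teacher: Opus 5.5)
Your proof is correct and follows the paper's route: residualize to get $\tilde X=\varepsilon_X$ and $\tilde Y=g(\beta^\top f(C)+\beta^\top\tilde X)-G(\beta^\top f(C))+\varepsilon_Y$ (your $G$ is the paper's $\psi$), then use $\varepsilon_X\indep C$ to integrate out $C$, so the conditional mean depends on $\tilde X$ only through $\beta^\top\tilde X$. Your version is slightly tighter than the paper's in three ways: you condition on all of $\tilde X$ rather than silently switching to $U=\beta^\top\varepsilon_X$, you justify $\E(\varepsilon_Y\mid\tilde X)=0$ via $\sigma(\varepsilon_X)\subseteq\sigma(X,C)$, and you flag the integrability and smoothness conditions and the fact that only containment of the CMS in the CCMS is guaranteed, none of which the paper addresses.
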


See a proof in Appendix~\ref{sec:proofs_prop:FWL}. 

Proposition~\ref{prop:FWL} shows that nuisance estimation of $m(\cdot,\cdot)$, a regression with $p+q$ regressors, and $\pi(\cdot\mid\cdot)$, a $p$-variate conditional density with $q$ regressors, can instead be replaced by estimation of $p+1$ independent regressions each involving only the confounders $C$. These regressions are lower-dimensional, can be fit in parallel, and may converge faster than direct nonparametric estimators of $m(\cdot,\cdot)$ and $\pi(\cdot\mid\cdot)$, particularly when $p$ is large. However, unlike the DR and PO constructions, the residualized-pair approach is not doubly robust. Validity of Proposition~\ref{prop:FWL} requires all $p+1$ residualization regressions to be correctly specified. 

\begin{remark}
At first glance, the quantity $\E(X\mid C)$ may resemble the inverse regression $\E(X\mid Y)$ used in sliced inverse regression (SIR) \citep{liSlicedInverseRegression1991}. However, the objectives are fundamentally different. In 
SIR, inverse regression is used to estimate covariance structures characterizing the central subspace. Here, $\E(X\mid C)$ is used only to remove confounding variation from the exposure vector. Consequently, only accurate pointwise estimation is required, and higher-order features of the conditional distribution of $X$ given $C$ are not of interest. 
\end{remark}

\subsubsection{Estimating the CCMS via csMAVE}

Once a transformed pair $(\tilde{Y}, \tilde{X})$ satisfying \eqref{eq:modular_condition} has been constructed, the CCMS can be estimated using any valid CMS estimator applied to the transformed data. For the second stage, we propose using MAVE in \eqref{eq:MAVE} due to its favorable balance between statistical and computational efficiency. We refer to the resulting estimator as \textit{causally sufficient MAVE} (\textit{csMAVE}). 

Our theoretical analysis focuses on csMAVE applied to the ERS construction ${\{(\hat{\mu}}(X_i), X_i)\}_{i=1}^n$, which captures the essential structure of the modular framework while avoiding additional stochastic variation contained in the PO and RP variants. The following assumptions adapt standard MAVE regularity conditions to this setting.  

\begin{assumption}[csMAVE]\label{ass:csMAVE}
The following conditions hold:
\begin{enumerate}[label=(\alph*)]
\item The marginal density $f_X(x)$ of $X$ has bounded fourth derivative and is strictly bounded away from zero on a compact support $\mathcal{X}$.
\item The ERS satisfies $\mu(x)=g(\beta^\top x)$ where $g(\cdot)$ has bounded continuous third derivatives.
\item The kernel $K(\cdot)$ used to construct the MAVE weights is a spherically symmetric density function with a bounded derivative and support. All moments of $K(\cdot)$ exist and $\int UU^\top K(U) dU=I$. 
\item Let $V(z)= \partial g(z_1,...,z_{d_0}) / \partial z$ and $V_k(z)=\partial g(z_1,...,z_{d_0}) / \partial z_k$. The matrix 

$\E\{ V(\beta_0^\top X)V^\top(\beta_0^\top X) \}$ is nonsingular.
\item The first-stage estimator satisfies the empirical uniform convergence condition 

$\max\limits_{1 \le i \le n} \| \hat{\mu}(X_i) - \mu(X_i) \| = O_p(R_n)$ where $R_n=o(h^2)$. 
\end{enumerate}
\end{assumption}

Further discussion of these assumptions is provided in Appendix~\ref{sec:ass}. The next result characterizes how first-stage estimation error propagates through the second-stage MAVE procedure. 

\begin{theorem}[Convergence of csMAVE]\label{thm:csMAVE_convergence}
    Let $\beta_0$ denote a basis matrix of the true CCMS and $\hat{\beta}$ denote a minimizer of the csMAVE objective obtained using an ERS construction. Define $\delta_n=\{\log n / (nh^p)\}^{1/2}$. Under Assumptions~\ref{ass:identificationX}-\ref{ass:existence_CCMS}, \ref{ass:csMAVE}, and \ref{ass:mave-local-moments} (see Appendix~\ref{subsec:intermediate_thm1}) if $h\to 0$, $d \geq d_0$, and $nh^p / \log(n)\to \infty$, then 
    \begin{equation*}
        \big\lVert (I_p-\hat{\beta}\hat{\beta}^\top) \beta_0\big\rVert = O_p(h^3 + h \delta_n + h^{-1}R_n)
    \end{equation*}
\end{theorem}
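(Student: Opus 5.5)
The plan is to treat csMAVE as the oracle MAVE of \cite{xiaAdaptiveEstimationDimension2002} applied to the noiseless response $\mu(X_i)=g(\beta_0^\top X_i)$, plus a perturbation $e_i \coloneqq \hat\mu(X_i)-\mu(X_i)$ with $\max_i|e_i|=O_p(R_n)$ by Assumption~\ref{ass:csMAVE}(e). By causal sufficiency and Assumption~\ref{ass:existence_CCMS}, $\E\{\mu(X)\mid X\}=g(\beta_0^\top X)$. Hence the population CMS of $(\mu(X),X)$ coincides with $\CCMS$ (the ERS construction satisfies \eqref{eq:modular_condition}). The oracle problem is therefore a standard CMS problem, and the only nonstandard ingredient is how $e_i$ propagates. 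I would follow Xia's alternating-minimization analysis. For fixed $\beta$, the inner step solves a local linear least-squares problem for $(a_j,b_j)$ at each $X_j$ with weights $w_{ij}$. For fixed $(a,b)$, the outer step is a weighted least-squares problem for $\mathrm{vec}(\beta)$. I would show that one full iteration, started from any $\beta$ with $\|(I_p-\beta\beta^\top)\beta_0\|$ small, is a contraction toward $\beta_0$ up to an additive remainder. The fixed point, and hence $\hat\beta$, then inherits that remainder as its rate.

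\textbf{Step 1 (local linear step).} Decompose $(\hat a_j,\hat b_j)$ into the oracle part plus the linear smoother applied to $(e_i)$. For the oracle part, Taylor expand $g$ to third order using Assumption~\ref{ass:csMAVE}(b). Kernel moments (Assumption~\ref{ass:csMAVE}(c)) and uniform kernel-sum bounds under $nh^p/\log n\to\infty$ (Assumption~\ref{ass:mave-local-moments}) then give $\hat b_j=\beta^\top\beta_0 V(\beta_0^\top X_j)+O_p(h^2+\delta_n)$, uniformly in $j$, plus a term linear in the current error $(I-\beta\beta^\top)\beta_0$. Because $f_X$ is bounded below and the slope estimator has the form $h^{-1}\times$(local smoother of $e_i\,(X_i-X_j)/h$), the perturbation contributes $O_p(h^{-1}R_n)$ to $\hat b_j$ uniformly. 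I would record that $R_n=o(h^2)$ makes this $o_p(h)$, small enough not to destroy the local approximation.

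\textbf{Step 2 (outer step and contraction).} Plug these expansions into the normal equations for $\beta$. The Hessian is asymptotically $\E\{VV^\top\}\otimes$(design covariance). It is nonsingular by Assumption~\ref{ass:csMAVE}(d), where $d\ge d_0$ is handled as in Xia by projecting away redundant directions. The resulting recursion has the form $\|(I-\beta_{k+1}\beta_{k+1}^\top)\beta_0\|\le \tfrac12\|(I-\beta_k\beta_k^\top)\beta_0\| + O_p(h^3+h\delta_n+h^{-1}R_n)$ for the following reasons. The bias terms from the third-order Taylor remainder contribute $h^3$. The stochastic fluctuation of the weighted cross-products $\sum_i w_{ij}(X_i-X_j)\{\cdot\}$, normalized by $h$, contributes $h\delta_n$. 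The first-stage error enters via $\hat b_j$ and directly via $\sum_i w_{ij}e_i(X_i-X_j)$ divided by the $h^2$-order local design, which contributes $h^{-1}R_n$. Iterating the recursion and using that $\hat\beta$ is a fixed point yields the stated bound. Note that because $\mu(X)$ carries no response noise, the usual $n^{-1/2}$ term is absent, which explains why only these three terms appear.

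\textbf{Main obstacle.} The hardest part is controlling the $e_i$ terms uniformly. The errors $e_i$ are dependent across $i$ and on the same sample used for the kernel weights, so the centered empirical-process arguments used in MAVE for the noise $\varepsilon_i$ do not apply. I would avoid any averaging gain and bound these terms crudely through $\max_i|e_i|$ times deterministic, uniformly bounded kernel sums such as $\sup_j\sum_i w_{ij}\|X_i-X_j\|/h$ over $\sum_i w_{ij}$. This is exactly why the rate is $h^{-1}R_n$ and not something sharper. The delicate point is verifying that these worst-case bounds remain valid uniformly over $\beta$ in a shrinking neighborhood of $\beta_0$, as the contraction argument requires. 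I expect Assumption~\ref{ass:mave-local-moments} to supply exactly those uniform local-moment bounds.
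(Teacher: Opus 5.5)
Your argument is correct in substance but organized differently from the paper's. The paper never analyzes an iteration of the algorithm. It works with the profiled least-squares score $\mathcal S_{\widehat Y,k,n}(\beta)$ and takes the exact-response expansion of Lemma~\ref{lem:exact-response-WY-expansion}: leading term $-h^2\sum_\ell M_{k\ell,n}\delta_{\beta,\ell}$, remainder $O_p(h^5+h^3\delta_n)$. In Lemma~\ref{lem:score-perturbation} it shows that replacing $\mu(X_i)$ by $\hat\mu(X_i)$ perturbs the score by $O_p(hR_n)+o_p(h^2)\|\Delta_\beta\|$. It then sets the score to zero at $\hat\beta$ and inverts $h^2M_n$. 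You instead show that the alternating map contracts up to an additive remainder, and use that $\hat\beta$ is a fixed point.

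The two are closer than they look. At a joint minimizer, the profiled score is, by the envelope theorem, the outer-step normal equation with $(\hat a_j,\hat b_j)$ held fixed. Your treatment of the first-stage error is exactly the content of Lemmas~\ref{lem:local-fit-bounds} and~\ref{lem:score-perturbation}: linearity of the local fit in the response, sup-norm bounds, and an $O(hR_n)$ term in the normal equations divided by the $O(h^2)$ local design. The paper's packaging is more economical: it needs a linear expansion of one equation at $\hat\beta$, and never has to control $\hat b_j(\beta)$ for a generic $\beta$ or exhibit a contraction constant. Yours describes the estimator as it is actually computed. It would therefore also cover a stationary point reached by alternating least squares from a consistent start (e.g.\ OPG), not only an exact global minimizer.

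Three things to tighten.
\begin{enumerate}
\item To use the fixed-point step, $\hat\beta$ must already lie in the neighborhood where your contraction holds, which requires a preliminary consistency result. The paper relies on the same thing implicitly when it evaluates an expansion stated for fixed $\beta$ at the random $\hat\beta$. Once there, no iteration is needed: with $\hat\Delta=(I_p-\hat\beta\hat\beta^\top)\beta_0$ and $t_n$ your remainder, one application gives $\|\hat\Delta\|\le\tfrac12\|\hat\Delta\|+t_n$, hence $\|\hat\Delta\|\le 2t_n$.
\item For $d>d_0$ the outer-step Hessian is genuinely near-singular, because $\frac1n\sum_j\hat b_j\hat b_j^\top\approx\beta^\top\beta_0M\beta_0^\top\beta$ has rank $d_0$, so Assumption~\ref{ass:csMAVE}(d) does not make it invertible. ``Projecting away redundant directions'' should become a column-space argument. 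The span of the outer solution contains $\operatorname{span}(\beta_0)$ up to the remainder, because $\frac1n\sum_jV(\beta_0^\top X_j)\hat b_j^\top\approx M\beta_0^\top\beta$ has a bounded right inverse near $\beta_0$. The paper, by contrast, only ever inverts the $d_0\times d_0$ matrix $M_n$.
\item Relative to the classical MAVE rate $O_p(h^3+h\delta_n+h^{-1}\delta_n^2)$, the term that disappears because $\mu(X)$ is noiseless is $h^{-1}\delta_n^2$, not an $n^{-1/2}$ term.
\end{enumerate}
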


See a proof in Appendix~\ref{sec:proofs_thm:csMAVE_convergence}. 

Theorem~\ref{thm:csMAVE_convergence} shows that first-stage estimation using the ERS construction contributes the additional term $O_p(h^{-1}R_n)$ on top of the standard MAVE convergence rate $O_p(h^3 + h\delta_n + h^{-1}\delta_n^2)$. The $h^{-1}\delta_n^2$ term from the original MAVE is absent in the result because the new response $\mu(X)$ is a deterministic function of $X$. Consequently, consistency requires the first-stage ERS estimator to converge sufficiently quickly relative to the MAVE bandwidth. Assumption~\ref{ass:csMAVE}(e) imposes the stronger condition $R_n=o(h^2)$, which is used to establish consistency of the structural dimension estimator.

\begin{remark}
The convergence rate in Theorem~\ref{thm:csMAVE_convergence} depends on the exposure dimension $p$ because standard MAVE constructs kernel weights in the original exposure space. Refined MAVE (RMAVE) instead defines the kernel weights on the lower-dimensional index $\beta^\top X$, potentially yielding faster convergence rates \citep{xiaAdaptiveEstimationDimension2002}. We do not pursue a formal RMAVE analysis here because it introduces substantial additional technical complexity, but analogous arguments can be used to establish a corresponding result. 
\end{remark}

We estimate the structural dimension of $\CCMS$ using the  cross-validation (CV) procedure proposed by \cite{xiaAdaptiveEstimationDimension2002}, replacing the original data $(Y,X)$ with the transformed pair $(\tilde{Y}, \tilde{X})$. In the setting of our theoretical analysis, $\tilde{Y}=\hat{\mu}(X)$ and $\tilde{X}=X$. For each candidate dimension \( d \in \{0,1,2,\dots,p\} \) with corresponding csMAVE estimate \( \hat{\beta}_d \), we estimate \( \E\!\left[\hat{\mu}(X_i) \mid \hat{\beta}_d^\top X_i \right] \) using a Nadaraya--Watson kernel estimator. The cross-validation criterion \( \mathrm{CV}(d) \) is defined as the mean squared error between \( \hat{\mu}(X_i) \) and the kernel estimate, and the selected dimension is given by
\begin{equation*}
\hat{d} = \argmin_{d \in \{0,1,\dots,p\}} \mathrm{CV}(d).
\end{equation*}

\begin{theorem}[Consistency of $\hat{d}$]
   Under Assumptions~\ref{ass:identificationX}-\ref{ass:existence_CCMS} and \ref{ass:csMAVE},  the estimator $\hat{d}$ converges to the true dimension $d_0$ in probability as $n\to\infty$. 
    \label{thm:structural_dimension_consistency}
\end{theorem}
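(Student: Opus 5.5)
The plan is to follow the proof of consistency of the cross-validation dimension estimator in \cite{xiaAdaptiveEstimationDimension2002}, replacing the noisy response $Y$ with the first-stage target $\hat\mu(X)$. The argument splits into two cases, $d<d_0$ (underfitting) and $d\ge d_0$ (overfitting), after first separating $\hat\mu$ into its population part and its first-stage error.

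\textbf{Step 1: decomposition.} Write $\hat\mu(X_i)=\mu(X_i)+e_i$. By Assumption~\ref{ass:csMAVE}(e), $\max_i|e_i|=O_p(R_n)$ with $R_n=o(h^2)\to 0$. Let $\hat m_d(\cdot)$ be the leave-one-out Nadaraya--Watson smoother of $\hat\mu(X)$ on $\hat\beta_d^\top X$. Because the smoother is linear with nonnegative weights summing to one, it is a contraction in sup-norm. Hence
\[
\mathrm{CV}(d)=\frac1n\sum_i\{\mu(X_i)-\hat m^{0}_d(X_i)\}^2+O_p(R_n),
\]
where $\hat m^0_d$ is the same smoother applied to the noiseless $\mu(X)$. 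This uses boundedness of $\mu$, which holds on the compact $\mathcal{X}$ of Assumption~\ref{ass:csMAVE}(a),(b). It therefore suffices to analyze the deterministic-response criterion $\mathrm{CV}^0(d)$, keeping track of the $O_p(R_n)$ perturbation.

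\textbf{Step 2: underfitting, $d<d_0$.} Let $\beta_d^*$ be the population minimizer of $\E[\{\mu(X)-\E(\mu(X)\mid B^\top X)\}^2]$ over $p\times d$ orthonormal $B$. Uniform consistency of the kernel smoother over orthonormal $B$ gives $\mathrm{CV}^0(d)\to_p \sigma_d^2:=\min_B\E[\mathrm{Var}(\mu(X)\mid B^\top X)]$. This holds even though $\hat\beta_d$ need not converge, since the bound $\mathrm{CV}^0(d)\ge \inf_B(\cdot)$ suffices, with standard rates under $nh^{d}/\log n\to\infty$. Minimality of the CCMS (Assumption~\ref{ass:existence_CCMS}), together with nonsingularity of $\E\{VV^\top\}$ in Assumption~\ref{ass:csMAVE}(d), gives $\sigma_d^2>0$. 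If it were zero, $\mu$ would depend on $X$ only through a $d$-dimensional projection, and then the subspace $\mathrm{span}(B)$ would be causally sufficient of dimension below $d_0$. Hence $\mathrm{CV}(d)\ge c>0$ with probability tending to one.

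\textbf{Step 3: overfitting and exact fitting, $d\ge d_0$.} Theorem~\ref{thm:csMAVE_convergence} shows that $\mathrm{span}(\hat\beta_d)$ contains $\mathrm{span}(\beta_0)$ up to error $O_p(h^3+h\delta_n+h^{-1}R_n)=o_p(1)$; here $R_n=o(h^2)$ makes $h^{-1}R_n=o(h)$. Since $\mu=g(\beta_0^\top x)$ with $g$ of bounded derivatives, the smoother on $\hat\beta_d^\top X$ incurs only bias $O(h^2)$ plus Lipschitz misalignment error. There is no variance term because the response is deterministic. Therefore $\mathrm{CV}(d)=o_p(1)$ for every $d\ge d_0$, which already separates $d\ge d_0$ from $d<d_0$.

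The \textbf{main obstacle} is ruling out $d>d_0$, since both $\mathrm{CV}(d_0)$ and $\mathrm{CV}(d)$ vanish. Following Xia et al., I would compare their leading orders. For a $d$-dimensional smoother with bandwidth $h_d$, the bias-driven term behaves like $c_d h_d^4$ plus a leave-one-out term of order $(nh_d^{d})^{-1}$ from the $e_i$ and discretization. Both grow with $d$ at the optimal bandwidth, so $\mathrm{CV}(d_0)<\mathrm{CV}(d)$ with probability tending to one. The delicate point is that, with a deterministic response, the variance contribution comes only from $e_i$, which is bounded by $R_n^2$. I would therefore need the extra dimensions to inflate the bias or design term at a rate dominating $O_p(R_n)$. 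I expect to achieve this using $R_n=o(h^2)$ and the refined bound $\frac1n\sum e_i(\mu-\hat m^0)=O_p(R_n\cdot o(1))$ in place of the crude $O_p(R_n)$ from Step 1. If that comparison fails, the fallback is to show directly that adding irrelevant directions strictly increases the leading kernel-bias constant.
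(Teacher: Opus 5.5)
Your architecture matches the paper's proof. You split $\hat\mu(X_i)=\mu(X_i)+e_i$ and use linearity and the convex weights of the leave-one-out Nadaraya--Watson smoother to isolate the noiseless criterion $\mathrm{CV}^0(d)$ plus a cross term and a squared-error term. You then separate $d>d_0$ from $d_0$ through the deterministic-response expansion $\mathrm{CV}^0(d)=h_d^4J_d+O_p(h_d^5)$, with $J_d>0$ and $h_d\asymp n^{-1/(d+4)}$ so that $h_{d_0}^4=o(h_d^4)$. Your treatment of $d<d_0$ is more careful than the paper's, which simply asserts that case. The gap is in the step that carries the theorem: controlling the first-stage perturbation when $d\ge d_0$. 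You leave it open, and your sketched route cannot close it. The crude $O_p(R_n)$ from Step~1 is too weak. Your stated goal, making the extra dimensions inflate the criterion at a rate dominating $O_p(R_n)$, is unattainable: the gap $\mathrm{CV}^0(d)-\mathrm{CV}^0(d_0)$ is only of order $h_d^4$, while $R_n=o(h_d^2)$ permits $R_n\gg h_d^4$ (e.g.\ $R_n\asymp h_d^3$). Your refinement $O_p(R_n\cdot o(1))$ fails for the same reason, since $R_n\cdot o(1)$ need not be $o(h_d^4)$.

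The missing idea is to feed the sharp rate of the noiseless criterion, not just its consistency, into Cauchy--Schwarz. Write $a_j=\mu(X_j)-\hat m^0_d(X_j)$ and $b_j=e_j-\bar e_{d,j}$, where $\bar e_{d,j}$ is the leave-one-out smooth of the errors. This smoothed error enters the cross term too, not only $e_j$. Then $\mathrm{CV}(d)=\mathrm{CV}^0(d)+2n^{-1}\sum_j a_jb_j+n^{-1}\sum_j b_j^2$. Convexity of the weights gives $|b_j|\le 2\max_i|e_i|$, so the last term is $O_p(R_n^2)$. Cauchy--Schwarz gives $|n^{-1}\sum_j a_jb_j|\le\{\mathrm{CV}^0(d)\}^{1/2}\{n^{-1}\sum_j b_j^2\}^{1/2}=O_p(h_d^2R_n)$, precisely because $\mathrm{CV}^0(d)=O_p(h_d^4)$ rather than merely $o_p(1)$. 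With $R_n=o(h_d^2)$ this yields $\mathrm{CV}(d)=h_d^4J_d+o_p(h_d^4)$ for every $d\ge d_0$. Since $h_{d_0}^4=o(h_d^4)$, you get $\mathrm{CV}(d)>\mathrm{CV}(d_0)$ with probability tending to one. Do not rely on a leave-one-out term of order $(nh_d^d)^{-1}\asymp h_d^4$ generated by the $e_i$. They are first-stage estimation errors controlled only in sup norm, so their total contribution is at most $O_p(R_n^2)=o_p(h_d^4)$; the paper correspondingly sets the variance constant $\alpha_d$ to zero. The separation must therefore come entirely from $h_d^4J_d$ with $J_d>0$, which the paper likewise takes as given. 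Your fallback, showing that irrelevant directions raise the bias constant, is unnecessary: the ordering comes from the growth of $h_d$ in $d$, not from comparing $J_d$ with $J_{d_0}$.
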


See a proof in Appendix~\ref{sec:proofs_thm:structural_dimension_consistency}. 

\subsection{Efficient-score refinement in modular causal SDR}

In association-based SDR for $\CMS$, a standard route to efficient estimation is to obtain an initial estimate of $\beta$ using MAVE and then refine it through Newton--Raphson updates based on the semiparametric efficient score \citep{luoEfficientDimensionReduction2014, luoNewEstimatorEfficient2016}. The efficient score in \eqref{eq:S_eff} underlying this refinement contains inverse variance weighting through $\sigma^2(X)=\Var(Y\mid X)$, and is therefore well-defined only when $\sigma^2(x)$ is strictly positive over $\mathcal X$. Under the ERS construction, however, the new response variable is $\tilde Y=\mu(X)$, so that $\Var(\tilde Y\mid X)=0$. It follows that in our CSDR setting, the inverse variance weighting in the efficient score becomes degenerate and the usual semiparametric efficiency theory underlying efficient-score refinement no longer applies. MAVE, by contrast, does not require nondegenerate conditional variance. It remains well-defined whenever $\E(\tilde Y\mid X) = g(\beta^\top X)$ with sufficiently smooth $g$. Thus, the ERS construction is compatible with MAVE even though the efficient score is not theoretically justified.

The fact that efficient-score refinement is not available for modular CSDR estimation using the ERS construction may suggest that MAVE is merely a fallback estimator and that efficiency is lost. We argue that this is not the case. Proposition~\ref{prop:MAVE_asymptotic_efficient} shows that when $\sigma^2(x)=\sigma^2>0$ for all $x\in\mathcal X$, the estimating equation induced by MAVE converges to the semiparametric efficient score where the inverse-variance weights drop out as a constant. Hence, under homoscedasticity, MAVE is already asymptotically equivalent to the efficient score.

The ERS construction corresponds to the degenerate limiting case of this same idea. Once $\tilde{Y}=\mu(X)$ is used as the new response, the conditional variance is constant at zero and the efficient score is undefined. However, just as in the constant variance case, there is no conditional variance for the inverse-variance weighting to exploit. Thus, MAVE can be viewed as a natural second-stage estimator when the conditional variance is either constant or degenerate.  

\begin{prop}[Asymptotic efficiency of RMAVE]
\label{prop:MAVE_asymptotic_efficient}
    Let $h\to 0$ and $nh^{d+2} / \log(n) \to \infty$. If $\sigma^2(x)=\sigma^2>0$ for $x\in\mathcal{X}$, then the estimating equation implied by RMAVE is asymptotically equivalent to the efficient score for $\CMS$ in \eqref{eq:S_eff}.
    % then as $n\to\infty$, the RMAVE objective function is asymptotically equivalent to the efficient score in \eqref{eq:S_eff} 
\end{prop}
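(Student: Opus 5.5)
The plan is to write down the first-order conditions of the RMAVE objective with respect to $\beta$ after profiling out the local parameters $(a_j,b_j)$, and then show that their sample average converges to the efficient score \eqref{eq:S_eff} with constant weight $1/\sigma^2$. Recall that RMAVE minimizes
\[
\sum_{j}\sum_i \big[Y_i - a_j - b_j^\top \beta^\top(X_i-X_j)\big]^2 w_{ij}(\beta),
\]
where $w_{ij}(\beta)=K_h\{\beta^\top(X_i-X_j)\}/\sum_\ell K_h\{\beta^\top(X_\ell-X_j)\}$ are defined on the $d$-dimensional index. For fixed $\beta$, the minimizing $(a_j,b_j)$ are local linear estimates. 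Under $h\to0$ and $nh^{d+2}/\log n\to\infty$, standard uniform local-linear expansions give $\hat a_j=\E(Y\mid \beta^\top X=\beta^\top X_j)+o_p(1)$ and $\hat b_j=\mu'_\beta(\beta^\top X_j)+o_p(1)$ uniformly in $j$. Here $\mu'_\beta$ is the gradient of $\E(Y\mid\beta^\top X)$. The bandwidth condition is exactly what is needed for the derivative estimate on the $d$-dimensional index to be uniformly consistent.

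Next, differentiate the objective in $\beta$, treating $(a_j,b_j)$ as fixed by the envelope argument since they minimize the profiled criterion. This gives the estimating equation
\[
\frac1n\sum_j\sum_i w_{ij}\,\big[Y_i-\hat a_j-\hat b_j^\top\beta^\top(X_i-X_j)\big]\,(X_i-X_j)\,\hat b_j^\top = 0,
\]
plus a term from differentiating the weights $w_{ij}(\beta)$. I would show the weight-derivative term is asymptotically negligible. Because the local residual is orthogonal to the local design by the normal equations, that term vanishes to first order; I would verify this with the usual $O_p(h^2+\delta_n)$ bias-variance bounds.

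Then swap the order of summation and rewrite the double sum as $\sum_i(\cdot)$. Since $w_{ij}$ localizes $X_j$ near $X_i$ in index space, the inner sum over $j$ of $w_{ij}(X_i-X_j)$ behaves like $X_i-\E(X\mid\beta^\top X=\beta^\top X_i)$, up to a Jacobian. Likewise $Y_i-\hat a_j-\hat b_j^\top\beta^\top(X_i-X_j)\approx Y_i-\E(Y\mid\beta^\top X_i)$ plus $O(h^2)$ terms. The equation is therefore
\[
\frac1n\sum_i \{X_i-\E(X\mid\beta^\top X_i)\}\,\mu'_\beta(\beta^\top X_i)^\top\{Y_i-\E(Y\mid\beta^\top X_i)\}+o_p(n^{-1/2}),
\]
possibly after vectorizing. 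Finally, plug $\sigma^2(X)\equiv\sigma^2$ into \eqref{eq:S_eff}. The ratio $\E\{X/\sigma^2\mid\beta^\top X\}/\E\{1/\sigma^2\mid\beta^\top X\}$ reduces to $\E(X\mid\beta^\top X)$, and the prefactor $1/\sigma^2$ is a nonzero constant that does not change the root. So the limit coincides with $U(\beta)$ up to a constant multiple, which is the claimed equivalence.

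The main obstacle is the step of swapping summations and replacing the kernel-weighted centering $\sum_j w_{ij}(X_i-X_j)$ by $X_i-\E(X\mid\beta^\top X_i)$ with remainder $o_p(n^{-1/2})$. This is a U-statistic type double sum with estimated local slopes $\hat b_j$. I would handle it by a Hoeffding decomposition of the second-order U-statistic. The projection gives the efficient-score term, and the degenerate part is $O_p\{(n^2h^{d})^{-1/2}\}$. The bias is controlled by $h^2$ combined with the smoothness assumed in Assumption~\ref{ass:csMAVE}. The errors $\hat b_j-\mu'_\beta$ multiply a conditionally mean-zero residual, so they contribute only at second order.
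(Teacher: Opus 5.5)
Your skeleton (profile out $(a_j,b_j)$, use uniform local-linear rates, split the residual $\hat r_{ij}=Y_i-\hat a_j-\hat b_j^\top(Z_i-Z_j)$, with $Z_i=\beta^\top X_i$, into $e_i$ plus a bias part, and reduce to the constant-variance efficient score) is the same as the paper's. However, two steps do not go through as written.

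\textbf{The $o_p(n^{-1/2})$ target.} The paper only proves $S_{\mathrm{MAVE}}(\beta)=\E[\{X-\E(X\mid Z)\}\ell'(Z)e]+o_p(1)$. It bounds $T_{1n},T_{2n},T_{3n}$ crudely, using $\max_j\|\hat b_j-\ell'(Z_j)\|=O_p(h^2+\delta_n/h)$ and $|\rho_{ij}|=O_p(h^2+\delta_n)$ on the kernel support. It then keeps $j$ as the outer index, treats $\sum_i w_{ij}X_ie_i$ and $\sum_i w_{ij}e_i$ as Nadaraya--Watson estimators of $\E(Xe\mid Z_j)$ and $\E(e\mid Z_j)=0$, and finishes with the law of large numbers.

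Your sharper claim is not supported by the hypotheses $h\to0$, $nh^{d+2}/\log n\to\infty$.
\begin{itemize}
\item A bias ``controlled by $h^2$'' is not $o(n^{-1/2})$ under these rates.
\item Suppose you use the normal equations $\sum_i w_{ij}\hat r_{ij}\{1,(Z_i-Z_j)^\top\}=0$ to replace $X_i-X_j$ by $X_i-\E(X\mid Z_i)+O(h^2)$. This is the device that centres the $h^2$ terms, and your sketch does not invoke it for this purpose. Even then, the product of the second-order Taylor remainders of $\E(Y\mid Z)$ and $\E(X\mid Z)$ generically leaves an uncentred term of order $h^4$. That term is not $o(n^{-1/2})$ for, e.g., $d=1$ and $h=n^{-1/10}$, which satisfies both stated conditions.
\item $\hat b_j-\ell'(Z_j)$ multiplies not only $e_i$ but also the bias part $\rho_{ij}$ of the residual (the paper's $T_{3n}$). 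So ``second order'' does not mean $o_p(n^{-1/2})$ here.
\end{itemize}
An influence-function-level equivalence is what would genuinely transfer efficiency to the RMAVE root, but it needs additional undersmoothing conditions that are not in the statement. At the $o_p(1)$ level the paper claims, the Hoeffding decomposition is unnecessary. Your swapped-sum version also works there, since $\sum_j w_{ij}\to1$ and $\sum_j w_{ij}X_j\to\E(X\mid Z_i)$.

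\textbf{The weight-derivative term.} The normal equations are linear in the residual. By contrast, $\sum_i\hat r_{ij}^2\,\partial_\beta w_{ij}$ is quadratic in it and carries an extra factor $h^{-1}$ from $K'$, with $X_i-X_j=O(1)$ in RMAVE. So orthogonality does not remove it. Any negligibility would have to come from $\sum_i\partial_\beta w_{ij}=0$ together with homoscedasticity, and would need its own argument. The paper sidesteps this entirely. RMAVE recomputes $w_{ij}$ from the previous iterate, so the equation solved at convergence is the fixed-weight score. That is the natural reading of ``the estimating equation implied by RMAVE,'' and you should adopt it.
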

See a proof in Appendix~\ref{sec:proof_MAVE_eff}.

To further investigate this issue empirically, Appendix~\ref{sec:reproducing_sdr} documents our implementation of the efficient-score refinement procedure of \cite{luoNewEstimatorEfficient2016}, while Appendix~\ref{sec:sim_mave_vs_ee} compares MAVE and efficient-score updates within the modular CSDR framework. In our numerical experiments, efficient-score refinement degraded performance for the ERS construction and produced only marginal improvements for the pseudo-outcome construction. Consequently, all numerical results in the paper use csMAVE without additional efficient-score refinement.

\section{Interpreting the CCMS: subspace importance score}
\label{sec:importance_score} 

Interpretability is an important component of mixture analysis and dimension reduction. In principal component analysis (PCA), the entries of the loading vectors quantify how strongly each original variable contributes to the resulting principal components. Because PCA is obtained through an eigendecomposition, the loading vectors are unique up to sign, making interpretation of their entries relatively straightforward.
In causal SDR, however, direct interpretation of the entries of $\beta$ can be misleading because $\beta$ itself is not uniquely identified. Any nonsingular linear transformation of $\beta$ also spans the CCMS and therefore corresponds to an identical causal reduction. Consequently, different bases for the CCMS may yield substantially different coefficient values despite representing the same underlying subspace. To obtain an interpretable quantity that depends only on the CCMS itself rather than on a particular basis, we propose the \emph{subspace importance score}.

\begin{definition}[Subspace importance score]
Let $\beta$ be a basis matrix for a $d$-dimensional CCMS with orthogonal projection matrix $P = \beta(\beta^\top \beta)^{-1} \beta^\top$. The subspace importance score for exposure $j \in \{1,\dots,p\}$ is defined as $\lambda_j \coloneqq P_{jj}$, or $e_j^\top P e_j = \|P e_j\|^2$.
\end{definition}

The quantity $\lambda_j$ measures how strongly the coordinate direction corresponding to exposure $X_j$ aligns with the CCMS. Because $P$ depends only on the subspace $\text{span}(\beta)$, the importance scores are invariant to the particular basis representation used for estimation.

The subspace importance scores satisfy several immediate properties.

\vspace{-0.25cm}
\begin{enumerate}
    \item \textbf{Basis invariance:} $\lambda_j$ depends only on the CCMS $\text{span}(\beta)$ and is invariant to full-rank linear transformations of $\beta$.
    \item \textbf{Boundedness:} For each $j=1,\dots,p$, $\lambda_j \in [0,1]$.
    \item \textbf{Additive decomposition:} $\sum_{j=1}^p \lambda_j = d$.
\end{enumerate}

These properties imply that $\{\lambda_j\}_{j=1}^p$ decomposes the structural dimension across the original exposure coordinates. Larger values of $\lambda_j$ indicate that the $j$th exposure direction contributes more strongly to the CCMS. When $\lambda_j=1$, the coordinate direction $e_j$ lies entirely within the CCMS. In contrast, $\lambda_j=0$ implies that the coordinate direction is orthogonal to the CCMS and therefore does not contribute to the reduced causal representation. 

Although asymptotic theory for the estimates of the projection matrix and the subspace important scores, $\hat P$ and $\hat\lambda_j$ respectively, can in principle be derived using the Delta method, the resulting expressions are cumbersome because of the nonlinear dependence on the estimated projection matrix. In practice, we recommend using the nonparametric bootstrap to construct confidence intervals for the subspace importance scores.

\section{Simulation studies}
\label{sec:sims}

Our simulation study evaluates two aspects of causal SDR: (i) estimation of the CCMS, and (ii) estimation of the reduced causal exposure-response surface after replacing $X$ with the estimated reduced exposure $\hat Z=\hat\beta^\top X$.  

We generated $B=1000$ Monte Carlo replications with $p=10$ exposures and $q=5$ confounders. The confounders were distributed as $C\sim N_q(0, I_q)$. Conditioned on $C$, the exposures followed $[X \mid C]\sim N_p(\mu_{X\mid C}, \Sigma_{X\mid C})$ where $\mu_{X\mid C}\coloneqq\E(X\mid C)=(0.5C_1, 0.5C_2, 3C_3, 3C_4,\mathbf{0}_{p-4}^\top)^\top$ and $\Sigma_{X\mid C}[i,j]=0.8^{|i-j|}$, to resemble the strong correlations commonly observed in environmental mixtures. We additionally set $\Sigma_{X\mid C}[k,k]=10.0$ for $k\in\{5,6\}$. We define the latent causal exposure as $Z = \beta^\top X$, where $\beta = [e_1, e_2] \in \mathbb{R}^{p \times 2}$. Thus, the true structural dimension is $d_0 = 2$ and $Z = (X_1, X_2)$. The outcome was generated according to 
\begin{equation} \label{eq:main_sim_dgp}
\begin{aligned}
    Y &= h(Z)+g(C)+f(Z,C)+\varepsilon, \qquad \varepsilon \sim N(0, 0.5^2),\\
    h(Z) &= 4\sin(Z_1)+2Z_2^2+Z_1Z_2, \\
    g(C) &= 5 \{\tanh(C_1) + [C_2^2-1] + [C_3^2-1] + \sin(1.5C_4) \}, \\
    f(Z,C) &= 5 C_5 (Z_1+Z_2).
\end{aligned}
\end{equation}

Under this data generating process (DGP), the true causal ERS is $\mu_\beta({z})= h({z})$. We considered sample sizes $n=100, 500,1000,2500,$ and $5000$. For each simulated dataset, we estimated $\beta$ using baseline methods: (1) PCA, (2) partial canonical correlation analysis (pCCA), (3) association-based MAVE, along with causal SDR variants: (4) Oracle, (5) RA, (6) DR, (7) PO, and (8) RP. 

The Oracle estimator is not practically implementable and assumes access to the true values $\{\mu(X_i)\}_{i=1}^n$. It is included as a benchmark representing the ideal second-stage SDR procedure absent first-stage estimation error. For all causal SDR variants, the structural dimension was estimated rather than fixed at the true value $d_0$.

All nuisance functions were estimated using 5-fold cross-fitting. We estimated $\pi(x\, | \, c)$ under a working Gaussian model assumption by first estimating $\E(X\, | \, C)$ using linear regression and then estimating the covariance with the sample covariance of the residuals $X-\hat{\E}(X\, | \, C)$. For estimation of $m(x,c)$, $\E(X\, | \, C)$, and $\E(Y\, | \, C)$ in the first stage, we used the Super Learner \citep{vanderlaanSuperLearner2007} with 10-fold cross-validation and three learners: linear regression (\texttt{SL.glm}), elastic net (\texttt{SL.glmnet}), and multivariate adaptive regression splines (\texttt{SL.earth}). For estimation of $m_\beta(\cdot, \cdot)$, we fit a neural network in \texttt{torch} with 2 hidden layers ($100 \times 50$ nodes) fit for 150 epochs, an initial learning rate of 0.05, a step-decay scheduler that halved the learning rate every 50 epochs, Sigmoid Linear Unit (SiLU) activation functions, MSE loss, no weight decay, and input standardization. 

We measured estimation error of the $\CCMS$ using $\lVert {P}_{\text{span}(\hat{\beta})}-{P}_{\text{span}(\beta_0)}\rVert_F$, where ${P}_{\text{span}(\hat{\beta})}=\hat{\beta}(\hat{\beta}^\top \hat{\beta})^{-1} \hat{\beta}^\top$ is the unique orthogonal projection into the $d$-dimensional subspace spanned by $\hat{\beta}$ in $\mathbb{R}^p$ and $\Vert \cdot \rVert_F$ is the Frobenius norm. Estimation error of $\hat{d}$ was evaluated by the proportion of simulations the correct dimension $d_0$ was chosen. To evaluate the estimation of the causal ERS, we computed the root mean squared error (RMSE) and pointwise 95\% coverage of $\hat{\mu}_\beta(z)$ across 100 evaluation points sampled from the true density of $(Z_1, Z_2)$. 

Additional simulations assessing CSDR performance under additive confounding in the DGP, alternative nuisance estimators for ERS estimation, and a less sparse $\beta$ are reported in Appendix~\ref{sec:additional_sim}. 

\subsection{Estimation of CCMS}

% Frobenius norm error
\begin{table}[t]
\centering
\caption{Frobenius norm error between estimated and true subspace using $\hat{d}$.}
\label{tab:frob_error_dhat}
% --- Start Resizebox ---
\resizebox{\textwidth}{!}{
\begin{tabular}{llccccc} % Added an 'l' here for the new category column
\toprule
Category & Method & $n = 100$ & $n = 500$ & $n = 1000$ & $n = 2500$ & $n = 5000$ \\ 
\midrule
\multirow{3}{*}{Baseline} 
& PCA         & 1.997 (0.002) & 1.997 (0.001) & 1.997 (0.001) & 1.997 (0.000) & 1.997 (0.000) \\
& pCCA        & 1.316 (0.119) & 1.202 (0.177) & 1.100 (0.202) & 0.925 (0.219) & 0.743 (0.203) \\
& MAVE        & 1.558 (0.221) & 1.292 (0.165) & 1.174 (0.103) & 1.097 (0.041) & 1.069 (0.021) \\
\midrule % Adding a midrule here makes the separation much cleaner
\multirow{5}{*}{CSDR} 
& Oracle & 0.216 (0.136) & 0.030 (0.010) & 0.015 (0.005) & 0.006 (0.002) & 0.003 (0.001) \\
& RA      & 1.343 (0.341) & 0.447 (0.392) & 0.223 (0.288) & 0.073 (0.157) & 0.026 (0.035) \\
& DR      & 1.537 (0.267) & 0.568 (0.369) & 0.303 (0.257) & 0.140 (0.151) & 0.078 (0.058) \\
& PO      & 1.487 (0.318) & 0.663 (0.450) & 0.432 (0.401) & 0.287 (0.345) & 0.221 (0.293) \\
& RP      & 1.768 (0.218) & 1.413 (0.259) & 1.192 (0.240) & 0.913 (0.255) & 0.676 (0.285) \\
\bottomrule
\end{tabular}
} % --- End Resizebox ---
\end{table}

Table~\ref{tab:frob_error_dhat} reports the average distance between the estimated and true $\CCMS$, with errors ranging from zero to two. The causal SDR methods outperformed the baseline approaches in recovering $\CCMS$, and this gap increased with $n$. The Oracle variant shows that knowledge of the true $\mu(X)$ was the main limiting factor for estimating $\beta$ well, where at $n=100$ the Oracle error was already down to $0.203$ whereas the next closest error was the RA variant at $1.343$. This indicates a strong dimension reduction signal disrupted by noise from confounding, interactions, and measurement error. Although both pCCA and MAVE improved with sample size, their error decreased noticeably slower than those of the causal SDR methods. As expected unsupervised PCA was poor across all settings, as it prioritized directions of high variance in $X$ which were unrelated to $\CCMS$. 

% Dhat table
\begin{table}[!b]
\centering
\caption{Dimension selection proportions. True dimension ($d_0 = 2$) is highlighted in bold.}
\label{tab:dim_selection}
\begin{tabular}{ll c c >{\bfseries}c c c c}
\toprule
% & & \multicolumn{6}{c}{Selected Dimension Proportion} \\
\cmidrule(lr){3-8}
Sample Size & Method & $d=0$ & $d=1$ & $d=2$ & $d=3$ & $d=4$ & $d \ge 5$ \\ 
\midrule
\multirow{6}{*}{$n = 500$} 
& MAVE        & 0.00 & 0.00 & 0.11 & 0.75 & 0.14 & 0.00 \\
& Oracle      & 0.00 & 0.00 & 1.00 & 0.00 & 0.00 & 0.00 \\
& RA     & 0.01 & 0.03 & 0.88 & 0.07 & 0.01 & 0.00 \\
& DR     & 0.01 & 0.05 & 0.85 & 0.07 & 0.01 & 0.00 \\
& PO     & 0.03 & 0.06 & 0.73 & 0.14 & 0.03 & 0.00 \\
& RP     & 0.00 & 0.28 & 0.29 & 0.23 & 0.15 & 0.04 \\
\midrule
\multirow{6}{*}{$n = 5000$} 
& MAVE        & 0.00 & 0.00 & 0.00 & 1.00 & 0.00 & 0.00 \\
& Oracle      & 0.00 & 0.00 & 1.00 & 0.00 & 0.00 & 0.00 \\
& RA     & 0.00 & 0.00 & 1.00 & 0.00 & 0.00 & 0.00 \\
& DR     & 0.00 & 0.00 & 1.00 & 0.00 & 0.00 & 0.00 \\
& PO     & 0.01 & 0.00 & 0.92 & 0.07 & 0.00 & 0.00 \\
& RP     & 0.00 & 0.21 & 0.74 & 0.05 & 0.00 & 0.00 \\
\bottomrule
\end{tabular}
\end{table}

Among the causal SDR approaches, the RA and DR variants, which rely on first-stage estimation of $\mu(X)$, consistently outperformed PO and RP. One possible explanation is that there exists an advantage of estimating a deterministic quantity like $\mu(x)$, in contrast to the stochastic PO and RP target variables. Furthermore, the presence of an interaction term violates the additive confounding assumption underlying the RP method, which explains its comparable performance to pCCA. In Appendix~\ref{subsec:add_sim_add_confound}, we examined the RP variant under an additive confounding DGP by setting $f(Z,C)=0$ in \eqref{eq:main_sim_dgp}. Under this setting, RP estimation of the CCMS dramatically improved, in line with theoretical expectations.  

Table~\ref{tab:dim_selection} reports the proportion of simulations in which the structural dimension is correctly estimated. All causal SDR methods recovered the true dimension $d_0 = 2$, although PO and RP did so less frequently. The association-based MAVE ultimately selected $d = 3$ due to confounding. The SIS in Figure~\ref{fig:sim_SIS} reveals additional insight that MAVE placed $X_3$ and $X_7$--$X_{10}$ in $\CCMS$, whereas CSDR correctly identified $X_1$ and $X_2$.

\begin{figure}[t]
    \centering
    \includegraphics[width=1\textwidth]{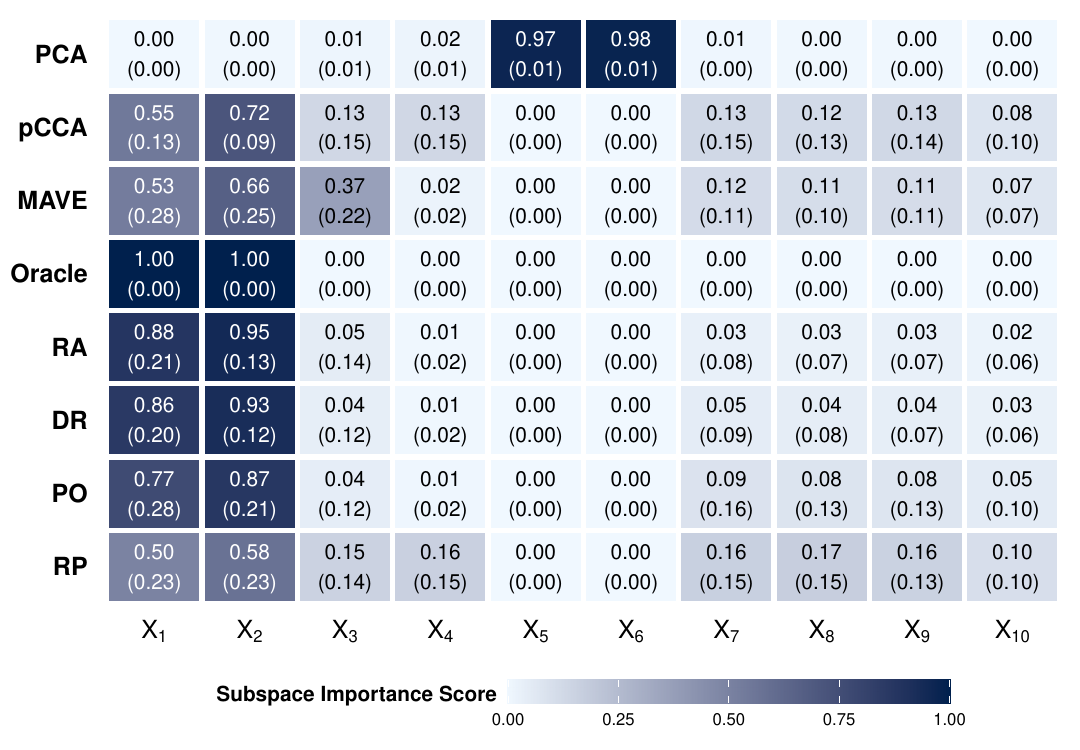}

    \vspace{-0.25cm}
    \caption{Subspace importance scores (SIS) for each method under a working dimension of $d=2$ at $n=500$. The top row in each cell reports the mean SIS, and the bottom row reports the standard deviation across simulations.}
    \label{fig:sim_SIS}
\end{figure}

\subsection{Estimation of ERS}

\begin{table}[htbp]
\centering
\caption{Mean (SD) RMSE for ERS estimation across the 100 evaluation points.}
\label{tab:rmse_ers}
\resizebox{\textwidth}{!}{
\begin{tabular}{llccccc}
\toprule
Category & Method & $n = 100$ & $n = 500$ & $n = 1000$ & $n = 2500$ & $n = 5000$ \\
\midrule
\multirow{3}{*}{Baseline}
& pCCA   & 14.07 (10.15) & 11.49 (6.41) & 10.95 (6.64) & 7.89 (5.15) & 5.75 (3.86) \\
& MAVE   & 10.09 (6.61)  & 4.17 (1.76)  & 3.13 (1.00)  & 2.52 (0.45) & 2.35 (0.34) \\
& Full $X$ & 9.36 (5.86)  & 3.98 (1.15)  & 3.01 (0.70)  & 2.35 (0.38) & 2.12 (0.26) \\
\midrule
\multirow{5}{*}{CSDR}
& Oracle & 5.03 (4.65) & 1.47 (1.06) & 0.97 (0.63) & 0.62 (0.43) & 0.48 (0.32) \\
& RA     & 7.48 (5.26) & 1.97 (1.36) & 1.22 (0.78) & 0.69 (0.53) & 0.49 (0.34) \\
& DR     & 8.65 (5.70) & 2.09 (1.29) & 1.28 (0.79) & 0.73 (0.50) & 0.53 (0.36) \\
& PO     & 8.26 (5.86) & 2.19 (1.47) & 1.36 (0.76) & 0.82 (0.57) & 0.62 (0.40) \\
& RP     & 14.36 (9.35) & 6.40 (3.59) & 4.41 (2.03) & 3.05 (1.16) & 2.48 (0.86) \\
\bottomrule
\end{tabular}
}
\end{table}

Table~\ref{tab:rmse_ers} reports the error in estimating $\mu_\beta(\cdot)$ over the 100 evaluation points, using $\hat{Z} = \hat{\beta}^\top X$ in place of $X$. PCA was excluded due to its poor performance in estimating $\CCMS$, and replaced with ``Full $X$,'' which applies the DR estimator directly to $X$ without dimension reduction. All CSDR methods (except the misspecified RP variant) outperformed Full $X$, pCCA, and MAVE in point estimation across all $n$. Although pCCA achieved the lowest error among baseline methods for estimating $\CCMS$ in Table~\ref{tab:frob_error_dhat}, it had much higher error than even MAVE for estimating $\mu_\beta(\cdot)$.

\begin{figure}[ht]
    \centering
    \includegraphics[width=1\textwidth]{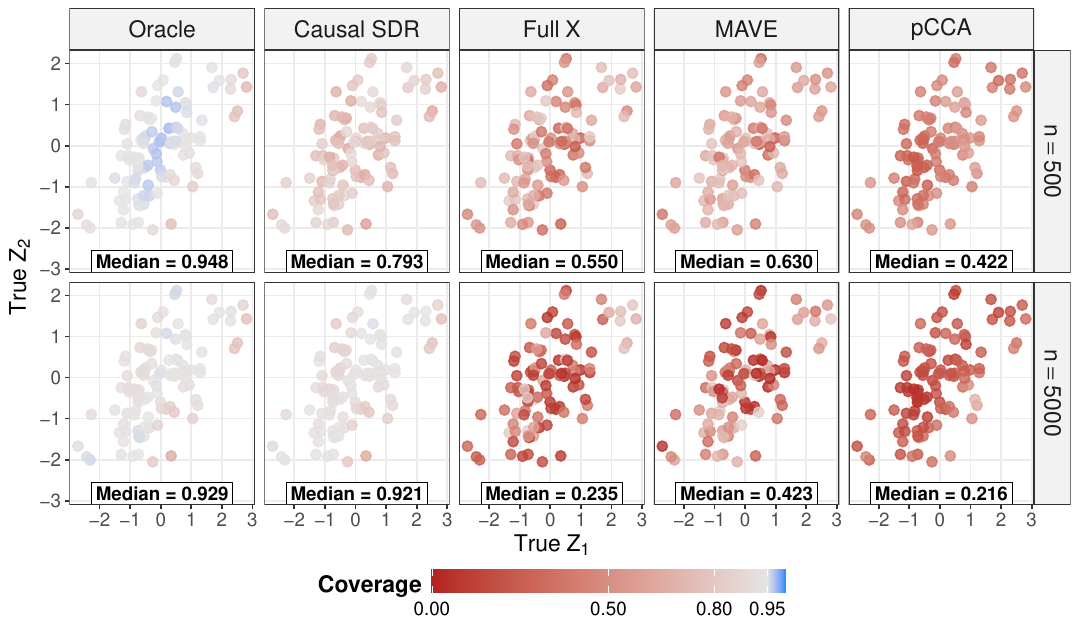}

    \vspace{-0.25cm}
    \caption{Pointwise 95\% coverage of ERS estimates across 100 evaluation points. Red is undercoverage, light gray is close to nominal 95\% level, and blue is overcoverage. The text box in each panel reports the median pointwise coverage for that specific setting. Causal SDR here refers to the RA variant (chosen due to lowest RMSE in Table~\ref{tab:rmse_ers}).}
    \label{fig:sim_pointwise_coverage}
\end{figure}

As shown in Figure~\ref{fig:sim_pointwise_coverage}, the causal SDR estimators achieved substantially improved pointwise 95\% coverage relative to the baseline methods. At $n=5000$, the RA variant attained coverage comparable to the Oracle. In contrast, Full $X$ exhibited notably low coverage, likely due to the increased difficulty of estimating the nuisance function $m(x,c)$ in the full covariate space, as opposed to the reduced representation $Z$. This may also explain its inferior coverage relative to MAVE and pCCA, which reduced the dimension even when they did not estimate the 
$\CCMS$ well. Alternatively, the undercoverage of Full $X$ may reflect poor fit in the neural network learner. In Appendix~\ref{subsec:add_sim_superlearner}, we assessed sensitivity to choice of nuisance estimator by replacing the neural network with a SuperLearner ensemble. This improved coverage of Full $X$ relative to the main simulation, but it still remained below that of the CSDR methods. In Appendix~\ref{subsec:add_sim_nonsparse}, we also considered the same DGP as in the main simulation study but with a less sparse $\beta$ where the two structural directions were linear combinations of $X_1$--$X_4$ and $X_7$--$X_{10}$. CSDR continued to outperform the baseline methods and Full $X$, but estimation of the CCMS, dimension selection, and pointwise accuracy and 95\% coverage of the ERS were all worse than in the main simulation. These results suggest that first-stage nuisance estimation error and final ERS estimation are sensitive to the dimension reduction structure of $\CCMS$. Nevertheless, the simulations highlight the potential efficiency gains granted by dimension reduction in causal effect estimation.  

\section{Real data analysis}
\label{sec:real_data}

We applied our method to estimate the joint effect of four per- and polyfluoroalkyl substances (PFOS, PFOA, PFNA, and PFHxS) on infant birthweight in grams. Data were drawn from a cohort of African American women with term births collected between 2014 and 2020 \citep{brennanProtocolEmoryUniversity2019}. Following the analysis on the same cohort by \cite{changPolyfluoroalkylSubstancePFAS2022}, all PFAS exposure variables were $\log_2$-transformed and effects were adjusted for maternal age, body mass index (BMI), education level, tobacco use, marijuana use, and infant sex.

Given the moderate sample size of this dataset ($n=305$) compared to our simulations, standard neural networks risk overfitting. To ensure flexible estimation without specifying strict parametric forms, we utilized the SuperLearner ensemble \citep{vanderlaanSuperLearner2007}. Our library included linear regression (\texttt{glm}), elastic net (\texttt{glmnet}), generalized additive models (\texttt{gam}), multivariate adaptive regression splines (\texttt{earth}), gradient boosted trees (\texttt{xgboost}), and random forests (\texttt{ranger}). For the nuisance estimation of conditional exposure densities ($\pi$), we continued using the multivariate normal density estimator.

\begin{figure}
    \centering
    \includegraphics[width=1\textwidth]{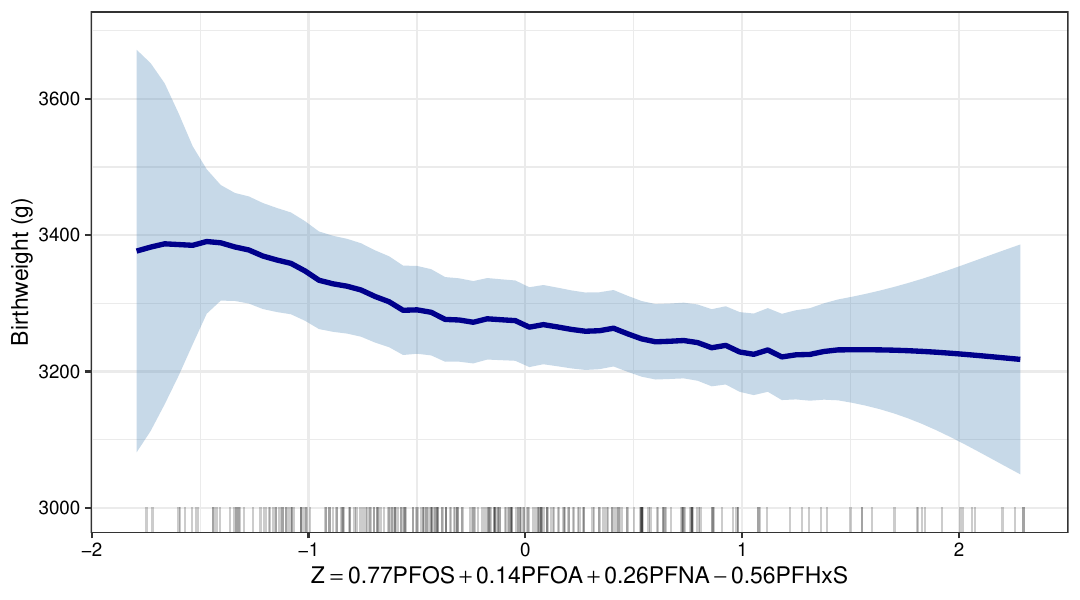}

    \vspace{-0.15cm}
    \caption{Estimated one-dimensional ERS for the effect of PFAS on birthweight ($n=305$). The density of observed $\hat{Z}\coloneqq\hat{\beta}^\top X$  is given by the rug plot.}
    \label{fig:PFAS_ERS}
\end{figure}

Our procedure estimated a structural dimension of $\hat{d}=1$, with a dimension reduction of $Z = 0.77 \times \text{PFOS} + 0.14 \times \text{PFOA} + 0.26 \times \text{PFNA} - 0.56 \times \text{PFHxS}$. In Figure~\ref{fig:PFAS_ERS}, the estimated ERS reveals a non-linear, monotonically decreasing relationship between $Z$ and birthweight that is mostly flat in the region of dense support between $Z=-0.5$ to $Z=1$. The rug plot on the bottom indicates data sparsity at the extreme ranges of $Z$, which is reflected by the wider confidence bands in those regions. 

The SIS for $\hat{\beta}$ were $\lambda_{ERS} = (0.60, 0.02, 0.07, 0.31)$ for PFOS, PFOA, PFNA, and PFHxS, respectively. This indicates that PFOS was the primary contributor to the CCMS, with PFHxS playing a secondary role acting in the opposite direction. In contrast, when applying standard MAVE directly to the exposures and outcome without our proposed causal adjustment, the resulting importance scores were $\lambda_{MAVE} = (0.17, 0.68, 0.07, 0.07)$, placing nearly all emphasis on PFOA. We also applied quantile $g$-computation \citep{keilQuantileBasedGComputationApproach2020} and BKMR \citep{bobbBayesianKernelMachine2015} to the same dataset for comparison with existing environmental mixtures methods, yielding similar overall conclusions regarding the harmful effect of maternal PFAS exposure (Appendix~\ref{sec:additional_RDA}).

\section{Discussion}
\label{sec:discussion}

We propose a new estimator for CSDR to study the causal effects of multiple continuous exposures, motivated by estimating the joint effect of environmental mixtures on health outcomes. By formalizing $\CCMS$ as the target estimand, our CSDR method provides a principled approach to dimension reduction that preserves the ERS. The method leverages existing tools for causal inference for continuous exposures and SDR for $\CMS$, substantially reducing computational and nuisance estimation burden relative to the original MSM approach. Simulation studies showed strong finite sample performance, yielding better recovery of the CCMS and more accurate estimation of the ERS using the reduced exposure compared to that of the original. We applied CSDR on an analysis of the effect of PFAS on newborn birthweight, yielding a one-dimensional exposure summary that provides an interpretable new variable for further investigation.

Our method simplifies CSDR into a modular series of steps: (1) first-stage estimation of new exposure and outcome variables, (2) second-stage SDR for $\CMS$ on the new variables to target $\CCMS$, and (3) causal ERS estimation using the newly estimated $\hat{Z}$. We provide specific theoretical guarantees for (1) ERS variant in the first stage and (2) csMAVE estimator in the second stage. However, the modular structure of our CSDR framework allows use of other estimators for these components, such as gradient-based kernel dimension reduction \citep{fukumizuGradientBasedKernelDimension2014} or stochastic MAVE \citep{pautrelRiemannianStochasticOptimization2026} for SDR, and higher-order influence function based causal dose-response estimators \citep{bonviniFastConvergenceRates2026} for ERS estimation, which may offer better statistical properties in some settings. 

Much work remains to improve causal inference for continuous exposures. \cite{zhangNonparametricInferenceDoseResponse2025} and \cite{zhangDoublyRobustInference2025} recently proposed estimators of both the ERS and its derivative for single continuous exposures that relax the positivity assumption under an additive confounding assumption. Extending these ideas to the multivariate case and more general confounding structures would be a valuable direction for future work. In addition, kernel-based identification for dose-response functions, although widely used in recent work \citep{kallusPolicyEvaluationOptimization2018, klosinAutomaticDoubleMachine2021, zhangDoublyRobustInference2025,bonviniFastConvergenceRates2026, colangeloDoubleDebiasedMachine2026}, requires smoothness conditions up to the third derivative for nuisance functions and bandwidth selection. Weakening the assumptions needed for causal inference for continuous exposures would have clear and direct benefits for CSDR.

\section*{Acknowledgments}

This work is supported by NIH grants R21ES036795 and R01ES037289. The authors thank Dr. Donghai Liang and Dr. Anne Dunlop for their valuable consultation on the applied analysis and for providing access to the Atlanta African American Maternal-Child Cohort, which was generated with support from R01ES035738, U24ES029490, and UH3OD023318. 

\section*{Data availability statement}
The code to reproduce the simulations and applied data analysis of the ATL-AA cohort are available at \url{https://github.com/tXiao95/causalpca}. The individual-level ATL-AA cohort data used in the application are not publicly available because they contain sensitive participant information and are subject to Emory University institutional, ethical, and data-use restrictions. Access to the cohort data may be requested from the ATL-AA study investigators, subject to appropriate approvals and data-use agreements. 

% ##############################################

% \newpage

\vspace{0.7cm}
\begingroup
\renewcommand{\baselinestretch}{1}
\selectfont  % tighter line spacing
\setlength{\bibsep}{11pt}    % vertical space between entries
\bibliographystyle{apalike}
\bibliography{references}
\endgroup

% ##############################################

\clearpage

\appendix

\renewcommand{\thetable}{S\arabic{table}}
\renewcommand{\thefigure}{S\arabic{figure}}
\setcounter{table}{0}  % Reset table counter to start at S1
\setcounter{figure}{0}  % Reset figure counter to start at S1
\setcounter{page}{1}
\pagenumbering{arabic}  % or roman, alph, etc.

\part*{Appendix}
\setcounter{lemma}{0}
\setcounter{theorem}{0}
\renewcommand{\thelemma}{S\arabic{lemma}}
\renewcommand{\theexample}{S\arabic{example}}
\renewcommand{\thetheorem}{S\arabic{theorem}}

\addcontentsline{toc}{part}{Appendix}
\etocsettocstyle{}{}
{\small
\localtableofcontents
}

\section{Assumptions}\label{sec:ass}

We discuss in detail the assumptions needed for our causal SDR framework by comparing them to the assumptions required for the original 1) MAVE and 2) causal ERS estimation.

\subsection{Assumptions for MAVE}\label{sec:supp_MAVE_assumption}

Listed below are the assumptions from \citeSM{wangAdaptiveEstimationMAVE2012SM}, which are similar to the assumptions stated in the original MAVE paper \citeSM{xiaAdaptiveEstimationDimension2002SM}. 

\begin{enumerate}[label=C\arabic*.]
  \item $\{(X_i, Y_i), i = 1, \ldots, n\}$ are i.i.d. samples from the density $f_{X,Y}(x,y)$.
  
  \item $\{\varepsilon_i\}$ are i.i.d. with $\mathbb{E}(\varepsilon_i) = 0$, $\mathbb{E}(|\varepsilon_i|^3) < \infty$. $\{X_i\}$ and $\{\varepsilon_i\}$ are mutually independent. Additionally, the predictor $X$ has a bounded support.
  
  \item The density $f_{\varepsilon}(\cdot)$ of $\varepsilon$ has bounded continuous derivatives up to order 4. Let $\ell(\varepsilon) = \log f_{\varepsilon}(\varepsilon)$.  Assume $\ell^{'''}(\cdot)$ is bounded and $\mathbb{E}[\ell'(\varepsilon)^2 + |\ell''(\varepsilon)| + |\ell^{'''}(\varepsilon)|] < \infty$.
  
  \item $\mathbb{E}|y|^k < \infty$ for all $k > 0$, $\mathbb{E}\|X\|^k < \infty$ for all $k > 0$.
  
  \item The density function $f_X(x)$ of $X$ has bounded derivatives up to order 4 
  and is bounded away from 0 in a neighborhood around 0.
  
  \item The density function $f_Y(\cdot)$ of $Y$ has a bounded derivative 
  and is bounded away from 0 on a compact support.
  
  \item The conditional densities $f_{X|Y}(\cdot)$ of $X$ given $Y$ and 
  $f_{(X_0, X_1)|(Y_0, Y_1)}(\cdot)$ of $(X_0, X_1)$ given $(Y_0, Y_1)$ 
  are bounded for all $l \ge 1$.
  
  \item $g(\cdot)$ has bounded, continuous third derivatives. 
  
  \item $\mathbb{E}(X \mid Y)$ and $\mathbb{E}(XX^\top \mid Y)$ have bounded, continuous third derivatives.
  
  \item $K(\cdot)$ is a spherical symmetric density function with a bounded derivative and support. All the moments of $K(\cdot)$ exist and  $\int U U^\top K(U) \, dU = I$. 
\end{enumerate}

In the proofs of Theorems~\ref{thm:csMAVE_convergence} and \ref{thm:structural_dimension_consistency}, we keep conditions C5, C8, and C10, partially keep conditions C2 and C9, and discard the rest. These are the basis for Assumption~\ref{ass:csMAVE}. Condition C1 is redundant. For Condition C2 we keep the bounded support on $X$ but remove the remaining part as it no longer applies to $\varepsilon=0$. In our case, we introduce a new additive error term $R_n$ which has no guarantee that $\E(R_n \mid X)=0$. We also do not assume $X$ and $R_n$ are independent, since $R_n$ represents the estimation error and is likely to be larger at certain values of $X$. This independence assumption was previously relaxed in \citeSM{xiaAdaptiveEstimationDimension2002SM}.  Condition C3 is not required, since the error distribution $f_\varepsilon$ is assumed to be Gaussian, as in the original MAVE procedure, and thus the boundedness and smoothness conditions on the density of $\varepsilon$ are automatically satisfied for a least-squares objective in the local linear regression. Condition C4 is redundant because Condition C2 states that $X$ has bounded support implying all moments of $X$ already exist. Likewise any response of the form $\mu(X) = g(\beta^\top X)$ is also bounded when $g$ is continuous, since $\beta^\top X$ has bounded support. This is clear by Condition C8, where $g$ is assumed to have bounded continuous third derivatives, which implies that $g$, $g'$, and $g''$ are bounded and continuous. C5 is a standard requirement for the uniform rate of consistency of the kernel regression methods used in MAVE. Conditions C6, C7, and C9 pertain to the marginal density of $Y$, the conditional density of $X$ given $Y$, and the first two moments of $X$ given $Y$, respectively. They are specific to inverse regression-based MAVE approaches focused on $\E(X\mid Y)$ and kernel smoothing of dependent data that are introduced in \citeSM{xiaAdaptiveEstimationDimension2002SM} and carried over into \citeSM{wangAdaptiveEstimationMAVE2012SM}, but are not needed in our setting. Condition C8 is critical for the third-order Taylor expansion of $g(\cdot)$ to attain the important $O(h^3)$ convergence rate of MAVE. Condition C10  is a list of standard kernel properties. The MAVE papers only allow for kernels with bounded support, such as the triangular, Epanechnikov, or truncated Gaussian kernels. We keep this condition for simplicity of the proofs, but note that a tail condition on the derivative can also be used to allow for kernels with unbounded support such as the Gaussian. 

\subsection{Assumptions for DML with continuous treatments}

We focus here on the smoothness assumptions needed to properly identify and estimate causal effects with continuous treatments. We adopt all of these assumptions in our work. Assumptions 2.1(a) and (b) in their paper are just the standard causal identification assumptions. We begin from Assumption 2.1(c). 
\begin{itemize}
  \item Assumption 2.1(c): $\pi_O(y,x,c)$ and $m(x,c)$ are three-times differentiable with respect to $x$ with all three derivatives being bounded uniformly over $(y,x^\top,c^\top)^\top \in \mathcal{O}$.
  \item Assumption 2.1(d): $\text{Var}(Y\mid X=x, C=c)$ and its derivatives with respect to $x$ are bounded uniformly over $\mathcal{X}\times \mathcal{C}$.
  \item Assumption 2.2: The second-order symmetric kernel function $k()$ is bounded differentiable, that is $\int^\infty_{-\infty} uk(u)du=0$ and $0<\kappa:=\int u^2 k(u)du<\infty$. For some finite positive constants $C, \bar{U}$, and for some $\nu>1$, $|dk(u) / du| \leq C|u|^{-\nu}$ for $|u| > \bar{U}$. 
  \item Assumption 2.3: There exist functions $\bar{m}(x,c)$ and $\bar{\pi}(x\mid c)$ that are three-times differentiable with respect to $x$ with all three derivatives being bounded uniformly over $\mathcal{X} \times \mathcal{C}$ and $\bar{\pi}(\cdot \mid \cdot)$ also satisfies strong positivity.
  \item Assumption 2.4: For $\ell=1,...,L$ and $x\in \mathcal{X}$, (a) $\lVert \hat{m}_\ell - \bar{m}\rVert_{F_{xC}}=o_p(1)$ and  $\lVert \hat{\pi}_\ell - \bar{\pi}\rVert_{F_{xC}}=o_p(1)$, (b) $\sqrt{nh^p} \lVert \hat{m}_\ell - \bar{m}\rVert_{F_{xC}}\lVert \hat{\pi}_\ell - \bar{\pi}\rVert_{F_{xC}}=o_p(1)$, (c) Either $\bar{m}=m$ or $\bar{\pi}=\pi$. 
  \item Theorem 3.1 (asymptotic normality): Let $h\to 0$, $nh^p \to \infty$, and $nh^{p+4}\to C\in [0,\infty)$. 
  \item Theorem 3.1 (bias correction): Let $\E\big[ |Y - \bar{m}(X,C) | ^3 \mid X=x, C \big]$ and its derivatives with respect to $x$ be bounded uniformly over $\mathcal{X}\times \mathcal{C}$ and $\int_{-\infty}^\infty k(u)^3 du<\infty$. 
\end{itemize}

Assumption 2.1(c) is used to justify the Taylor expansions underlying the kernel calculations. In particular, it gives the second-order bias expansion and a uniform remainder for localized quantities involving the treatment density and the outcome regression. In the paper this is used in the derivation of the asymptotic bias term \(B_x\) in Theorem 3.1 and in the kernel expansions in the supplement. Assumption 2.1(d) controls the local second moment of the kernel-weighted residual term, which determines the asymptotic variance. Although the kernel localizes in \(X\), the relevant stochastic term is conditional on both \(X\) and \(C\), because both the regression function and the generalized propensity score depend on \(C\).

Assumption 2.2 is a list of kernel properties. The derivative condition is used to control remainder terms in the tails uniformly, including kernels with unbounded support such as the Gaussian kernel. Assumptions 2.3 and 2.4 are conditions on the nuisance estimators. Assumption 2.3 guarantees that even the probability limits of the nuisance estimators are smooth enough for the same kernel expansions to go through. Assumption 2.4(a) gives consistency toward those limits in the partial norm relevant for a fixed target exposure value \(x\). Assumption 2.4(b) is the product-rate condition that makes the second-order nuisance remainder asymptotically negligible. Assumption 2.4(c) is the doubly robust requirement that at least one nuisance limit is correctly specified. All estimators are assumed to be cross-fit, avoiding any need of Donsker conditions. 

Finally, Theorem 3.1 has two layers of assumptions baked in. The first part gives a standard asymptotically linear representation of the kernel estimator that contains a $h^2$-order smoothing bias. The conditions \(h\to 0\) and \(nh^{p}\to\infty\) are standard. The additional condition \(nh^{p+4}\to C\) ensures that the smoothing bias is of the same asymptotic order as the estimator’s random sampling variability under \(\sqrt{nh^p}\)-scaling. Although the bias is not written explicitly in the first asymptotic representation, it is contained in the mean of the kernel-weighted summand on the right-hand side. Thus \(nh^{p+4}\to C\) is what makes the deterministic bias term \(h^2B_x\) asymptotically visible rather than negligible or dominant, which is necessary for an eventual bias correction. A special case is undersmoothing, where \(nh^{p+4}\to 0\), so the bias vanishes asymptotically \citepSM{wassermanAllNonparametricStatistics2006SM}. This formulation leads to the asymptotic representation
\[
\sqrt{nh^p} \big(\hat{\mu}(x) - \mu(x) \big)
=
\sqrt{\frac{h^p}{n}} \sum_{i=1}^n
\left\{
\frac{K_h(X_i-x)}{\bar{\pi}(x \mid C_i)} (Y_i - \bar{m}(x, C_i))
+ \bar{m}(x, C_i) - \mu(x)
\right\}
+ o_p(1).
\]

The second part of Theorem 3.1 adds a third-moment condition on the residual part of the estimator and the integrability condition \(\int k^3(u)\,du<\infty\). These are used to satisfy the Lyapunov condition allowing the authors to conclude asymptotic normality after subtracting the smoothing bias. This leads to
\[
\sqrt{nh^p}\big(\hat{\mu}(x) - \mu(x)-h^2B_x\big) \overset{d}{\to} N(0, V_x),
\]
for some variance \(V_x\) and bias term $B_x$. 
\clearpage

\section{Technical Lemmas}

The following lemmas are used as intermediary results to support the discussion and proofs for the main text and results in Appendix~\ref{sec:proofs}.

\begin{lemma}
    Let $A$ be an invertible matrix on a vector space $V$ and $\{\mathcal{S}_\alpha\}_{\alpha\in I}$ be a family of subspaces on $V$ indexed by $I$. Then $A^\top \cap_\alpha \{\mathcal{S}_\alpha\}=\cap_\alpha \{A^\top \mathcal{S}_\alpha\}$. 
    \label{lemma:subspace_tranpose}
\end{lemma}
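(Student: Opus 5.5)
The plan is to prove the two inclusions $A^\top \cap_\alpha \mathcal{S}_\alpha \subseteq \cap_\alpha A^\top \mathcal{S}_\alpha$ and $\cap_\alpha A^\top \mathcal{S}_\alpha \subseteq A^\top \cap_\alpha \mathcal{S}_\alpha$ separately. Here $A^\top \mathcal{S} \coloneqq \{A^\top s : s \in \mathcal{S}\}$ is the image of $\mathcal{S}$ under the linear map $A^\top$. Since $A$ is invertible, so is $A^\top$, with inverse $(A^{-1})^\top$. Each $A^\top\mathcal{S}_\alpha$ is a subspace, so both sides are subspaces and set inclusion is the only thing to check.

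\emph{Forward inclusion.} This holds for any linear map and needs no invertibility. Take $v \in A^\top \cap_\alpha \mathcal{S}_\alpha$, so $v = A^\top s$ for some $s$ lying in every $\mathcal{S}_\alpha$. Then for every $\alpha$ we have $v = A^\top s \in A^\top \mathcal{S}_\alpha$, hence $v \in \cap_\alpha A^\top \mathcal{S}_\alpha$.

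\emph{Reverse inclusion.} This is the only place invertibility enters, and it is the step I would watch; it is mild. Take $v \in \cap_\alpha A^\top \mathcal{S}_\alpha$. For each $\alpha$ there is some $s_\alpha \in \mathcal{S}_\alpha$ with $v = A^\top s_\alpha$. Because $A^\top$ is injective, all these preimages coincide with the single vector $s \coloneqq (A^\top)^{-1} v$. So $s \in \mathcal{S}_\alpha$ for every $\alpha$, i.e.\ $s \in \cap_\alpha \mathcal{S}_\alpha$, and therefore $v = A^\top s \in A^\top \cap_\alpha \mathcal{S}_\alpha$.

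An equivalent one-line route avoids choosing preimages. For a bijection $T = A^\top$, the image of a set equals the preimage under $T^{-1}$, namely $T\mathcal{S} = (T^{-1})^{-1}(\mathcal{S})$. Preimages commute with arbitrary intersections, which gives the lemma for an arbitrary (possibly uncountable) index set $I$. This matters because the lemma will be applied to the family $\mathfrak{A}$ in the proof of Proposition~\ref{prop:affine}.
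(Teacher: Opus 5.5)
Your proof is correct and matches the paper's argument: the forward inclusion holds for any linear map, and the reverse inclusion uses injectivity of $A^\top$ to collapse the preimages $s_\alpha$ into a single vector lying in $\cap_\alpha \mathcal{S}_\alpha$. Your closing remark, writing $A^\top\mathcal{S}$ as the preimage of $\mathcal{S}$ under $(A^\top)^{-1}$, is a valid compact restatement of the same idea, and it covers arbitrary index sets.
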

\begin{proof}
    Suppose $x\in\cap_\alpha\{\mathcal{S}_\alpha\}$. Then $x\in\mathcal{S}_\alpha$ and $A^\top x\in A^\top \mathcal{S}_\alpha$ for every $\alpha$. It follows that $A^\top \cap_\alpha\{\mathcal{S}_\alpha\}\subseteq \cap_\alpha \{A^\top \mathcal{S}_\alpha\}$.   

    Now suppose $y\in\cap_\alpha \{A^\top \mathcal{S}_\alpha\}$. Then there exists a $x_\alpha$ in every $\mathcal{S}_\alpha$ such that $y=A^\top x_\alpha$. Because $A$ is invertible, $A^\top$ is injective, so that $x_\alpha=x$ for some $x\in\cap_\alpha\{\mathcal{S}_\alpha\}$. Then $y=A^\top x\in A^\top \cap_\alpha \{\mathcal{S}_\alpha\}$, completing the proof.
\end{proof}

\begin{lemma}[Pseudo-outcome]
\label{lemma:pseudo_outcome_unbiased}
    Under Assumptions~\ref{ass:identificationX} and \ref{ass:smooth}, let
\(\xi = \xi(O;\bar\pi,\bar{m})\) be defined as in \eqref{eq:pseudo_outcome}, where $\bar{\pi}(\cdot \mid \cdot)$ and $\bar{m}(\cdot, \cdot)$ may not be equal to the true $\pi(\cdot\mid \cdot)$ and $m(\cdot, \cdot)$. Then
\[
\E\{\xi(O;\bar{\pi}, \bar{m})\mid X = x)\} = \mu(x)
\]
for almost every \(x\), provided that either \(\bar\pi = \pi\) or \(\bar{m} = m\).
\end{lemma}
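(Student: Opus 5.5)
The plan is to work with the population version of the pseudo-outcome in \eqref{eq:pseudo_outcome}, in which the empirical averages over $i$ are replaced by their expectations over the marginal law of $C$. For fixed working nuisances $(\bar\pi,\bar m)$, for instance obtained by cross-fitting on an independent fold, define
\[
\xi(O;\bar\pi,\bar m)=\frac{\bar f(X)}{\bar\pi(X\mid C)}\{Y-\bar m(X,C)\}+\int \bar m(X,c)\,dP(c),
\qquad \bar f(x)\coloneqq\int \bar\pi(x\mid c)\,dP(c).
\]
The statement should be read for this version. The sample averages in \eqref{eq:pseudo_outcome} converge to these integrals, and the own-observation term $i=j$ contributes only $O(1/n)$. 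I would state this reduction explicitly at the start, since it is the only point where care is needed.

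Throughout I use two facts. First, the $g$-formula \eqref{eq:identification_gcomp} gives $\mu(x)=\int m(x,c)\,dP(c)$ under Assumption~\ref{ass:identificationX}. Second, I condition on $X=x$ by iterated expectations, first taking $\E(\cdot\mid X,C)$ and then integrating over $C\mid X=x$. Strong positivity, together with positivity of $\bar\pi$, guarantees that all the ratios are well defined.

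\emph{Case $\bar m=m$.} The augmentation term has conditional mean zero:
\[
\E\Big[\tfrac{\bar f(X)}{\bar\pi(X\mid C)}\{Y-m(X,C)\}\,\Big|\,X,C\Big]=\tfrac{\bar f(X)}{\bar\pi(X\mid C)}\{m(X,C)-m(X,C)\}=0,
\]
and this holds whatever $\bar\pi$ is. The remaining term is deterministic given $X=x$ and equals $\int m(x,c)\,dP(c)=\mu(x)$.

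\emph{Case $\bar\pi=\pi$.} Now $\bar f=f_X$, the true marginal density of $X$. The key step is Bayes' rule,
\[
dP(c\mid x)=\frac{\pi(x\mid c)\,dP(c)}{f_X(x)}.
\]
Using it,
\[
\E\Big[\tfrac{f_X(X)}{\pi(X\mid C)}\{Y-\bar m(X,C)\}\,\Big|\,X=x\Big]=\int \frac{f_X(x)}{\pi(x\mid c)}\{m(x,c)-\bar m(x,c)\}\frac{\pi(x\mid c)}{f_X(x)}\,dP(c)=\int\{m(x,c)-\bar m(x,c)\}\,dP(c).
\]
Adding the term $\int\bar m(x,c)\,dP(c)$ cancels the $\bar m$ contributions and leaves $\int m(x,c)\,dP(c)=\mu(x)$.

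The algebra in both cases is routine. The main obstacle is justifying the passage from the empirical averages in \eqref{eq:pseudo_outcome} to the population integrals. The identity is exact for the population pseudo-outcome, so "for almost every $x$" should be read in that sense. For the sample version the equality holds only up to $O(1/n)$ terms, coming from the $i=j$ summand and from using the empirical rather than the true law of $C$. I would handle this either by using a leave-one-out average, or by noting that under cross-fitting the averages are conditionally unbiased for $\bar f(x)$ and $\int\bar m(x,c)\,dP(c)$.
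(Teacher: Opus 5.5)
Your argument is correct and is essentially the paper's. The paper simply defers to the Web Appendix of Kennedy et al.\ (2017), whose proof is this same iterated-expectation and Bayes'-rule computation, written there as the single remainder $\int\{m(x,c)-\bar m(x,c)\}\{\pi(x\mid c)\bar f(x)/(\bar\pi(x\mid c)f_X(x))-1\}\,dP(c)$, which vanishes when either nuisance is correct. You are also right that \eqref{eq:pseudo_outcome} must be read as its population version for the identity to be exact; one small refinement is that only the $i=j$ summand creates bias, not the use of the empirical law of $C$. Writing $\xi_j=n^{-1}Y_j+(1-n^{-1})\xi_j^{(-j)}$, with $\xi_j^{(-j)}$ the leave-one-out version, shows the $i\neq j$ terms are exactly unbiased by independence and the whole finite-sample bias is $n^{-1}\{\E(Y\mid X=x)-\mu(x)\}$.
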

\begin{proof}
    This is a direct result of Section 3 in the Web Appendix of \citeSM{kennedyNonparametricMethodsDoubly2017SM}, extended to a multidimensional treatment variable $X\in\mathbb{R}^p$. Their paper is focused on a univariate treatment but the argument readily applies to $p>1$.
\end{proof}

\begin{lemma}[Robins, Thms.\ 3.2--3.3]\label{lem:robins_no_phi}
Assume the conditional density $\pi(X\mid C)$ is unknown, a MSM holds, Assumption~\ref{ass:identificationX}--~\ref{ass:well_defined_Z}, and the data $O$. Let $\mathcal H$ be a collection of measurable vector-valued functions $h$ indexing a class of estimating equations.
For each $h\in\mathcal H$, let $U_{\mathrm{sm}}(\beta,h)$ denote a regular association-model score for $\beta$. Let $W=\pi(X\mid C)/\pi^*(X)$ denote the inverse probability weight. For each $h\in\mathcal H$, define the locally efficient projected score
\[
U(\beta,h)
=
\frac{U_{\mathrm{sm}}(\beta,h)}{W}
-
\E\!\left[\frac{U_{\mathrm{sm}}(\beta,h)}{W}\Big|X,C\right]
+
\E\!\left\{\E\!\left[\frac{U_{\mathrm{sm}}(\beta,h)}{W}\Big|X,C\right]\Big|C\right\}.
\]

Let $\beta_0$ denote the true parameter and define
\[
\kappa(h)
=
-\left.\frac{\partial}{\partial\beta}\E\{U(\beta,h)\}\right|_{\beta=\beta_0}.
\]

Under regularity conditions, any estimator $\hat\beta(h)$ solving
\[
\sum_{i=1}^n U_i(\beta,h)=0
\]
is regular and asymptotically linear with influence function $\mathrm{IF}(h)=\kappa(h)^{-1}U(\beta_0,h)$. Moreover, $\{\mathrm{IF}(h):h\in\mathcal H\}$ is the set of all influence functions for $\beta_0$.
\end{lemma}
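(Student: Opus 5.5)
The plan is to follow Robins' original argument: treat the interventional law as a ``full-data'' model, then map full-data estimating functions into observed-data ones by inverse weighting and projection. Write $\E^*$ for expectation under the law in which $C$ keeps its observed marginal, $X\sim\pi^*$ independently of $C$, and $Y\mid X=x,C$ is distributed as $Y^x\mid C$. Under Assumption~\ref{ass:identificationX} the g-formula gives $\E^*(Y\mid X=x)=\E(Y^x)=g(\beta_0^\top x)$. So under $\E^*$ the MSM is exactly an association multiple-index model for $(Y,X)$. By \cite{maSemiparametricApproachDimension2012}, its regular association scores $U_{\mathrm{sm}}(\beta,h)$ satisfy $\E^*\{U_{\mathrm{sm}}(\beta_0,h)\}=0$, and $\{\kappa_{\mathrm{sm}}(h)^{-1}U_{\mathrm{sm}}(\beta_0,h)\}$ exhausts the full-data influence functions.

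\textbf{Step 1 (unbiasedness).} Strong positivity makes $W$ bounded away from zero, so the change of measure $\E[f(O)/W]=\E^*[f(O)]$ holds for integrable $f$. Hence $\E[U_{\mathrm{sm}}/W]=0$ at $\beta_0$. The correction $\E[U_{\mathrm{sm}}/W\mid X,C]-\E\{\E[U_{\mathrm{sm}}/W\mid X,C]\mid C\}$ has mean zero by iterated expectations, so $\E\{U(\beta_0,h)\}=0$.

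I would read the last term of the displayed $U$ as the conditional mean given $C$ under the \emph{observed} law of $X\mid C$. With that reading, the correction is the $L^2(\Prob)$ projection of $U_{\mathrm{sm}}/W$ onto
\[
\mathcal T_\pi=\{a(X,C):\E[a\mid C]=0\},
\]
which is the tangent space of the unrestricted exposure mechanism.

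\textbf{Step 2 (characterizing all influence functions).} The observed likelihood factors as $p(C)\,\pi(X\mid C)\,p(Y\mid X,C)$. The MSM restricts only the variation-dependent pair $\{p(C),p(Y\mid X,C)\}$, through the identity $\int \E(Y\mid x,c)\,dP(c)=g(\beta^\top x)$, while $\pi$ is unrestricted. This is the coarsening-at-random structure of \cite{robinsMarginalStructuralModels2000a}: $X$ plays the role of the coarsening variable and $\pi$ the coarsening probability. I would then invoke the general representation theorem (Robins--Rotnitzky; \cite{tsiatisSemiparametricTheoryMissing2006}, Thms.~10.1 and 10.4). It says that when the coarsening mechanism is unknown but modeled nonparametrically, every observed-data influence function has the form $\{\varphi/W-\Pi(\varphi/W\mid\mathcal T_\pi)\}$ for some full-data influence function $\varphi$, and conversely. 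Two checks are needed. First, $\varphi\mapsto\varphi/W$ maps the full-data nuisance orthocomplement into the observed one; this uses the change of measure from Step 1 together with the fact that $p(Y\mid X,C)$ is shared by $\Prob$ and $\E^*$. Second, subtracting the projection produces exact orthogonality to $\mathcal T_\pi$. Substituting $\varphi=U_{\mathrm{sm}}(\beta_0,h)$ and normalizing by $\kappa(h)$ gives the claimed set $\{\mathrm{IF}(h)\}$.

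\textbf{Step 3 (RAL property of $\hat\beta(h)$).} I would use a standard Z-estimator argument. Expand $n^{-1}\sum_i U_i(\hat\beta,h)$ around $\beta_0$, then use invertibility of $\kappa(h)$ and a Donsker (or cross-fitting) condition on the estimated conditional expectations. This gives $\sqrt n(\hat\beta-\beta_0)=n^{-1/2}\sum_i\kappa(h)^{-1}U_i(\beta_0,h)+o_p(1)$. Regularity follows because $\mathrm{IF}(h)$ is orthogonal to the nuisance tangent space, so the estimator is locally unbiased along all parametric submodels.

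\textbf{Main obstacle.} The delicate step is Step 2's converse: showing that \emph{every} observed-data influence function arises this way. The nuisance tangent space of the MSM is not a product space, because $p(C)$ and $p(Y\mid X,C)$ are variationally dependent through the MSM constraint. I would handle this as Robins does. Work in the full-data model on $(C,\{Y^x\}_x)$, in which the MSM is a restriction only on the full-data law. Then the observed-data nuisance tangent space decomposes as the image of the full-data nuisance space plus $\mathcal T_\pi$. Identifiability of $\beta$ up to span (Assumptions~\ref{ass:existence_CCMS}--\ref{ass:well_defined_Z}) is needed so that $\kappa(h)$ is invertible on the appropriate Stiefel/Grassmann parametrization.
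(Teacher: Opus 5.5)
Your proposal follows essentially the paper's route: the paper proves this lemma only by citing Robins (2000), Theorems 3.1--3.3, specialized to a single time point with an unknown, unrestricted exposure density, and your change of measure to $\E^*$, inverse weighting of the association scores, and removal of their projection onto $\mathcal{T}_\pi=\{a(X,C):\E[a\mid C]=0\}$ is the argument behind that citation, with the converse likewise deferred to Robins. If you write the converse out, do not rely on Tsiatis's Chapter~10 theorems: they are built on complete-case weighting and so assume the full data is observed with positive probability, which fails for the potential-outcome process $(C,\{Y^x\}_x)$ of a continuous exposure; instead use the general CAR representation with an arbitrary unbiased mapping, via $\E[u(Y,X)/W\mid C,\{Y^x\}_x]=\int u(Y^x,x)\,\pi^*(x)\,dx$, or compute the orthocomplement of the nuisance tangent space directly in the factorized model $p(c)\,\pi(x\mid c)\,p(y\mid x,c)$.
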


\begin{proof}
    This result is a consolidation and direct consequence of Theorems 3.1, 3.2, and 3.3 from \citeSM{robinsMarginalStructuralModels2000aSM} when the conditional density $\pi(x\mid c)$ is unknown and there is only one time point recorded as opposed to multiple timepoints or censoring times. A similar result was given in Theorem 1 of \citeSM{nabiSemiparametricCausalSufficient2022SM}. 
\end{proof}

\begin{remark}[Known exposure density]\label{rem:robins_known_density}
If instead the exposure-process density is assumed known, one may enlarge the class to $\{\hat\beta(h,\phi):h\in\mathcal H,\phi\in\Phi\}$, and then
$\{\mathrm{IF}(h,\phi):h\in\mathcal H,\phi\in\Phi\}$ is the full set of influence functions (Robins, Thm.\ 3.1).
\end{remark}

% Theorem 1 materials
\subsection{Intermediate results for Theorem~\ref{thm:csMAVE_convergence}}\label{subsec:intermediate_thm1}
The following set of lemmas all relate to the proof of Theorem~\ref{thm:csMAVE_convergence}. We first introduce some notation that will be used in both the intermediate results and main proof.
\paragraph{Notation.}
Let the exact-response be $Y_i$ and the generated-response be $\widehat Y_i=Y_i+\rho_{ni}$. Define $r_{n,\infty}=\max_{1\le i\le n}|\rho_{ni}|$. Assume $r_{n,\infty}=O_p(R_n)$ and $R_n=o(h)$. For any candidate
$
\beta\in\mathbb R^{p\times d}
$
with $\beta^\top\beta=I_d$, define
\begin{equation*}
    P_\beta=\beta\beta^\top,
    \qquad
    \Delta_\beta=(I_p-P_\beta)\beta_0=(I_p-\beta\beta^\top)\beta_0.
\end{equation*}
With slight abuse of notation, denote the $k$th column of $\Delta_\beta$ as
\begin{equation*}
    \delta_{\beta,k}=(I_p-\beta\beta^\top)\beta_{0k},
    \qquad k=1,\ldots,d_0.
\end{equation*}
For fixed $(\beta,x)$, define
\begin{equation*}
    X_{h,i}(\beta,x)
    =
    \begin{pmatrix}
        1\\
        h^{-1}\beta^\top(X_i-x)
    \end{pmatrix},
    \qquad
    K_{h,i}(x)=K_h(X_i-x)=h^{-p}K\left(\frac{X_i-x}{h}\right),
\end{equation*}
and
\begin{equation*}
    S_n(\beta,x)
    =
    \frac1n\sum_{i=1}^n
    K_{h,i}(x)X_{h,i}(\beta,x)X_{h,i}(\beta,x)^\top, \qquad
    \varsigma_n(x)=\frac1n\sum_{i=1}^nK_{h,i}(x).
\end{equation*}
We also define
\begin{equation*}
\begin{aligned}
    N_n(\beta,x)
    &=
    S_n(\beta,x)^{-1}
    \frac1n\sum_{i=1}^n
    K_{h,i}(x)X_{h,i}(\beta,x)(X_i-x)^\top, \\
    L_{n,i}(\beta,x)
    &=(X_i-x)^\top-X_{h,i}(\beta,x)^\top N_n(\beta,x).
\end{aligned}
\end{equation*}
For a generic response vector $Y=(Y_1,\ldots,Y_n)$, define the normalized profiled least-squares MAVE criterion
\begin{equation*}
    Q_n^Y(\beta)
    =
    \frac1{n^2}
    \sum_{j=1}^n\sum_{i=1}^n
    \frac{K_{h,i}(X_j)}{\varsigma_n(X_j)}
    \{Y_i-a_j-b_j^\top\beta^\top(X_i-X_j)\}^2.
\end{equation*}
\begin{assumption}[Uniform nonsingularity]
\label{ass:mave-local-moments}
Assume that $S_n(\beta,x)$ is uniformly nonsingular in the sense that
\begin{equation*}
    \sup_{\beta,x}\|S_n(\beta,x)^{-1}\|=O_p(1).
\end{equation*}
\end{assumption}

\begin{lemma}[Exact-response expansion]
\label{lem:exact-response-WY-expansion}
Let $\mathcal S_{Y,k,n}(\beta)$ denote the $k$th direction of the profiled least-squares MAVE score based on the exact response $Y_i=g(\beta_0^\top X_i)$. Then for $k=1,\ldots,d_0$,
\begin{equation}
\begin{aligned}
    \mathcal S_{Y,k,n}(\beta)
    &=
    -h^2\sum_{\ell=1}^{d_0}
    M_{k\ell,n}\delta_{\beta,\ell}
    +h^4\Delta_\beta a_{k,n}  \\
    &\quad
    +O_p\!\left((h^3+h\delta_n)\|\Delta_\beta\|\right)
    +O_p(h^5+h^3\delta_n),
\end{aligned}
\label{eq:exact-score-expansion}
\end{equation}
where
\begin{equation*}
    M_{k\ell,n}
    =
    \frac1n\sum_{j=1}^n
    V_k(\beta_0^\top X_j)V_\ell(\beta_0^\top X_j),
\end{equation*}
and $a_{k,n}\in\mathbb R^{d_0}$ satisfies $\|a_{k,n}\|=O_p(1)$.
\end{lemma}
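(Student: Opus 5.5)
We plan to follow the standard MAVE score analysis of \citeSM{xiaAdaptiveEstimationDimension2002SM} and \citeSM{wangAdaptiveEstimationMAVE2012SM}, specialised to the noiseless response $Y_i=g(\beta_0^\top X_i)$. First, we profile out $(a_j,b_j)$ at each anchor $X_j$ by local linear least squares with design $X_{h,i}(\beta,X_j)$. Then we differentiate $Q_n^Y(\beta)$ with respect to $\beta$ in the direction associated with $\beta_{0k}$. This writes the $k$th score component as a weighted sum
\[
\frac1{n^2}\sum_{j}\sum_i \frac{K_{h,i}(X_j)}{\varsigma_n(X_j)}\, L_{n,i}(\beta,X_j)^\top\, \hat b_{jk}\,\{Y_i-\hat a_j-\hat b_j^\top\beta^\top(X_i-X_j)\},
\]
where the factor $L_{n,i}$ is the residual of $(X_i-X_j)$ after local projection onto the fitted design. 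Assumption~\ref{ass:mave-local-moments} makes $S_n^{-1}$, and hence $N_n$ and the $\hat b_j$, uniformly $O_p(1)$.

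Next, we Taylor expand $g$ to third order around $\beta_0^\top X_j$, using Assumption~\ref{ass:csMAVE}(b):
\[
Y_i = g(\beta_0^\top X_j)+V(\beta_0^\top X_j)^\top\beta_0^\top(X_i-X_j)+\tfrac12(X_i-X_j)^\top\beta_0 H_j\beta_0^\top(X_i-X_j)+O(h^3)
\]
on the kernel support. We then decompose $\beta_0^\top(X_i-X_j)=\beta_0^\top P_\beta(X_i-X_j)+\Delta_\beta^\top(X_i-X_j)$. The component in $\text{span}(\beta)$ is absorbed exactly by the local linear fit. The component $\Delta_\beta^\top(X_i-X_j)$ cannot be absorbed, so it survives in the residual. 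Multiplying it by $L_{n,i}$ and summing with the kernel weights gives the local second-moment matrix of $(I-P_\beta)(X_i-X_j)$. By the kernel normalisation $\int UU^\top K=I$ in Assumption~\ref{ass:csMAVE}(c), this matrix is $h^2(I-P_\beta)\{1+O_p(h+\delta_n)\}$ uniformly in $j$.

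Combining this with $\hat b_{jk}\approx V_k(\beta_0^\top X_j)$ produces the leading term $-h^2\sum_\ell M_{k\ell,n}\delta_{\beta,\ell}$. The quadratic Taylor term is even in $(X_i-X_j)$. Paired with the odd factor $L_{n,i}$, it cancels at order $h^3$ by spherical symmetry, leaving an $h^4$ contribution that is linear in $\Delta_\beta$; this is $h^4\Delta_\beta a_{k,n}$. The remaining pieces are the cubic Taylor remainder and the uniform kernel-estimation error $\delta_n$ from replacing local moments by their expectations (using Assumption~\ref{ass:csMAVE}(a) and $nh^p/\log n\to\infty$). Together these give $O_p((h^3+h\delta_n)\|\Delta_\beta\|)$, plus the $\Delta_\beta$-free terms $O_p(h^5+h^3\delta_n)$. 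There is no $h^{-1}\delta_n^2$ term, because the response carries no noise.

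The main obstacle is the bookkeeping of uniformity. Every local moment expansion must hold uniformly over $\beta$ on the Stiefel manifold and over the anchors $X_j$. It must also track which terms are proportional to $\Delta_\beta$ and which are not, so that the odd-moment cancellation is credited at the correct order. We would handle this with a uniform law for kernel averages over a VC class indexed by $(\beta,x)$, giving rate $\delta_n$ by standard bracketing. The $h^4$ coefficient would be controlled only in norm ($\|a_{k,n}\|=O_p(1)$), rather than computed explicitly.
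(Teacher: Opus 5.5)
Your route is the same as the paper's sketch. You profile out $(a_j,b_j)$, Taylor expand $g$, split $\beta_0=P_\beta\beta_0+\Delta_\beta$, residualize through $L_{n,i}$, read off $-h^2\sum_\ell M_{k\ell,n}\delta_{\beta,\ell}$, and drop the $\varepsilon$-terms that produce $h^{-1}\delta_n^2$. The gap is in the $h^4$ bookkeeping, which is exactly what separates the $h^3$ rate of Theorem~\ref{thm:csMAVE_convergence} from an $h^2$ rate.

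\textbf{The cubic term.} You stop at the quadratic term with an unstructured $O(h^3)$ remainder and place it in $O_p((h^3+h\delta_n)\|\Delta_\beta\|)+O_p(h^5+h^3\delta_n)$. Paired with $L_{n,i}=O_p(h)$, that remainder is $O_p(h^4)$ with no factor of $\Delta_\beta$, so it fits neither bucket. It also does not vanish by symmetry, since cubic times linear is a fourth moment. The paper keeps the cubic term $P_{3,i}$ explicit and credits it to $h^4\Delta_\beta a_{k,n}$. This is legitimate because $L_{n,i}(\beta,x)\beta=0$ exactly: $\beta^\top(X_i-x)$ lies in the span of the local regressors $X_{h,i}(\beta,x)$. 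Hence $L_{n,i}^\top=(I_p-P_\beta)L_{n,i}^\top$. In the fourth-moment contraction, the lone factor $(I_p-P_\beta)(X_i-x)$ must pair with one of the three factors $\beta_0^\top(X_i-x)$, which produces $(I_p-P_\beta)\beta_0=\Delta_\beta$. What remains after the explicit cubic term contributes $O(h^5)$ only if $g'''$ is Lipschitz. Continuity alone gives $o(h^4)$, and the same caveat applies to the paper's $o(h^3)$ Peano remainder.

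\textbf{The quadratic term.} Spherical symmetry removes only the $h^3$ part. The surviving $h^4$ part comes from the gradient of $f_X$ in the local weights, and at a fixed anchor it is not linear in $\Delta_\beta$. Write $c=\nabla\log f_X(x)$, $A(z)=\tfrac12\beta_0\nabla^2g(z)\beta_0^\top$, $A=A(\beta_0^\top x)$, and $\kappa_4=\int u_1^2u_2^2K(u)\,du$. Use $L_{n,i}^\top\approx(I_p-P_\beta)\{(X_i-x)-h^2c\}$, where the $-h^2c$ shift is the local intercept correction. At the population level, the quadratic term then contributes the following to $\varsigma_n(x)^{-1}n^{-1}\sum_iK_{h,i}(x)L_{n,i}^\top e^Y_{n,i}$:
\[
h^4\bigl\{(\kappa_4-1)\,\mathrm{tr}(A)\,(I_p-P_\beta)c+2\kappa_4\,(I_p-P_\beta)Ac\bigr\}+O(h^5).
\]
The second term equals $\kappa_4h^4\Delta_\beta\nabla^2g\,\beta_0^\top c$ and is fine, but the first is not proportional to $\Delta_\beta$. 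Moreover, $\kappa_4\neq1$ for bounded-support kernels allowed by Assumption~\ref{ass:csMAVE}(c). For example, the normalized spherical Epanechnikov kernel has $\kappa_4=(p+4)/(p+6)$, whereas the Gaussian has $\kappa_4=1$. So $\Delta_\beta$-linearity can only appear after averaging over the anchors $X_j$. With $\psi(z)=V_k(z)\,\mathrm{tr}\{A(z)\}$, integration by parts gives $\E\{\psi(\beta_0^\top X)\nabla\log f_X(X)\}=-\beta_0\E\{\nabla\psi(\beta_0^\top X)\}$ plus a boundary term, and $(I_p-P_\beta)\beta_0=\Delta_\beta$. Your argument is missing this global step, and also control of the boundary term (e.g.\ by trimming anchors near $\partial\mathcal X$). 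The paper's sketch likewise just asserts the linearity and defers to Wang--Xia, but the local-symmetry mechanism you offer does not deliver it.
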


\begin{proof}[Proof Sketch]
This is the least-squares MAVE expansion obtained by \citeSM{wangAdaptiveEstimationMAVE2012SM}, which is a generalization of \citeSM{xiaAdaptiveEstimationDimension2002SM}. For $X_i$ in a kernel neighborhood of $x$, Taylor expansion and Assumption~\ref{ass:csMAVE}(b) give
\begin{equation}
    g(\beta_0^\top X_i)
    =g(\beta_0^\top x)
    +(X_i-x)^\top\beta_0V(\beta_0^\top x)
    +P_{2,i}(x)+P_{3,i}(x)+R_i(x),
\label{eq:taylor-g}
\end{equation}
where $P_{2,i}(x)=O(h^2)$, $P_{3,i}(x)=O(h^3)$, and the Peano remainder satisfies $R_i(x)=o(h^3)$ uniformly on the kernel support because $g$ has bounded continuous third derivatives. Using
\begin{equation*}
    \beta_0=\beta\beta^\top\beta_0+(I_p-\beta\beta^\top)\beta_0,
\end{equation*}
the part of the first-order term in $\operatorname{span}(\beta)$ is absorbed by the local fitted slope, while the residual part contributes
\begin{equation*}
    (X_i-x)^\top(I_p-\beta\beta^\top)\beta_0V(\beta_0^\top x).
\end{equation*}
After profiling out the local intercept and slope, the term $X_i-x$ is residualized into $L_{n,i}(\beta,x)$. This is the role of $L_{n,i}$, $Q_{n,i}$, $\Xi_{n,i}$, and $R_{n,i}$ in Appendix B of \citeSM{wangAdaptiveEstimationMAVE2012SM}. 

Substituting the profiled residual into the $k$th coordinate of the profiled score yields the leading curvature term $-\sum_{\ell=1}^{d_0}\delta_{\beta,\ell}h^2M_{k\ell,n}$. The second- and third-order Taylor terms yield the bias contribution $h^4\Delta_\beta a_{k,n}$. The empirical and local smoothing remainder terms are bounded by $O_p(h^5+h^3\delta_n)$,
following the same procedure in equations (B.10)--(B.14) of \citeSM{wangAdaptiveEstimationMAVE2012SM}. However, the main difference between standard MAVE and csMAVE is that our choice of exact-response $Y=\mu(X)$ is a deterministic function of $X$. Therefore, all error terms involving $\varepsilon_i$ and $\Xi_{n,i}$ which were the source of the $h^{-1}\delta_n^2$ term in the subspace rate, are absent here. This gives \eqref{eq:exact-score-expansion}.
\end{proof}
\begin{lemma}[Profiling identities]
\label{lem:profiling-residualization}
(i) For any response vector $Y=(Y_1,\ldots,Y_n)$, define the local linear regression intercept and slope estimates for fixed $(\beta,x)$ as
\begin{equation*}
    \widehat\gamma_Y(\beta,x)
    =
    \begin{pmatrix}
        \widehat a_Y(\beta,x)\\
        h\widehat b_Y(\beta,x)
    \end{pmatrix}.
\end{equation*}
Then
\begin{equation}
    \widehat\gamma_Y(\beta,x)
    =
    S_n(\beta,x)^{-1}
    \frac1n\sum_{i=1}^n
    K_{h,i}(x)X_{h,i}(\beta,x)Y_i.
\label{eq:gamma-Z}
\end{equation}
(ii) Recall
\begin{equation*}
\begin{aligned}
    N_n(\beta,x)
    &=
    S_n(\beta,x)^{-1}
    \frac1n\sum_{i=1}^n
    K_{h,i}(x)X_{h,i}(\beta,x)(X_i-x)^\top, \\
    L_{n,i}(\beta,x)
    &=(X_i-x)^\top-X_{h,i}(\beta,x)^\top N_n(\beta,x).
\end{aligned}
\end{equation*}
Then
\begin{equation}
    \frac1n\sum_{i=1}^n
    K_{h,i}(x)L_{n,i}(\beta,x)^\top X_{h,i}(\beta,x)^\top=0.
\label{eq:L-orthogonality-final}
\end{equation}
(iii) Furthermore, if $\widehat Y_i=Y_i+\rho_{ni}$, then
\begin{equation}
    e_{n,i}^{\widehat Y}(\beta,x)
    =e_{n,i}^{Y}(\beta,x)+\rho_{ni}-\Phi_{n,i}(\beta,x),
\label{eq:residual-decomposition-final}
\end{equation}
where
\begin{equation*}
    e_{n,i}^{Y}(\beta,x)
    =Y_i-X_{h,i}(\beta,x)^\top\widehat\gamma_Y(\beta,x),
\end{equation*}
\begin{equation*}
    \Phi_{n,i}(\beta,x)
    =X_{h,i}(\beta,x)^\top\widehat\gamma_\rho(\beta,x),
\end{equation*}
and
\begin{equation*}
    \widehat\gamma_\rho(\beta,x)
    =S_n(\beta,x)^{-1}
    \frac1n\sum_{i=1}^n
    K_{h,i}(x)X_{h,i}(\beta,x)\rho_{ni}.
\end{equation*}
\end{lemma}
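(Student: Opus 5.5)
The three parts are finite-sample algebraic identities for weighted least squares at fixed $(\beta,x)$. No probabilistic argument is needed; the only assumption used is Assumption~\ref{ass:mave-local-moments}, which guarantees $S_n(\beta,x)$ is invertible so every inverse below is well defined.

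For (i), reparametrize the local linear fit at $x$. We have $a+b^\top\beta^\top(X_i-x)=X_{h,i}(\beta,x)^\top\gamma$ with $\gamma=(a,hb^\top)^\top$, since the factor $h^{-1}$ in $X_{h,i}$ is absorbed into $hb$. Minimizing $n^{-1}\sum_i K_{h,i}(x)\{Y_i-X_{h,i}^\top\gamma\}^2$ over $\gamma$ is an ordinary weighted least squares problem with design rows $X_{h,i}$ and weights $K_{h,i}(x)\ge0$. Its normal equations are
\[
S_n(\beta,x)\gamma=n^{-1}\sum_i K_{h,i}(x)X_{h,i}Y_i .
\]
Invertibility of $S_n$ gives \eqref{eq:gamma-Z}, and the minimizer is unique. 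The normalization by $\varsigma_n(X_j)$ in $Q_n^Y$ is a positive constant at fixed $x=X_j$, so it does not change the minimizer.

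For (ii), apply (i) columnwise with "response" $(X_i-x)^\top$. Then $N_n(\beta,x)$ is exactly the stacked WLS coefficient matrix, and $L_{n,i}$ is the corresponding residual row vector. The claim is the normal-equation orthogonality of WLS residuals to the regressors, which follows by direct computation:
\[
\frac1n\sum_i K_{h,i}X_{h,i}L_{n,i}=\frac1n\sum_i K_{h,i}X_{h,i}(X_i-x)^\top-S_nN_n=0 ,
\]
by the definition of $N_n$. Taking the transpose gives \eqref{eq:L-orthogonality-final}. One point needs care: the displayed product $L_{n,i}^\top X_{h,i}^\top$ is a $p\times(d+1)$ matrix, so I will state the orientation explicitly so that it matches the transpose of the expression above.

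For (iii), the map $Y\mapsto\widehat\gamma_Y$ in \eqref{eq:gamma-Z} is linear in the response vector, so $\widehat\gamma_{\widehat Y}=\widehat\gamma_Y+\widehat\gamma_\rho$. Substituting into
\[
e^{\widehat Y}_{n,i}=Y_i+\rho_{ni}-X_{h,i}^\top(\widehat\gamma_Y+\widehat\gamma_\rho)
\]
and regrouping gives $e^Y_{n,i}+\rho_{ni}-\Phi_{n,i}$, which is \eqref{eq:residual-decomposition-final}.

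I do not expect a genuine obstacle. The only delicate points are bookkeeping: keeping the $h$-rescaling of the slope consistent between $\gamma$ and $b$, and fixing the transpose conventions in (ii). Everything else reduces to the WLS normal equations together with invertibility of $S_n$.
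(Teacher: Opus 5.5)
Your proposal is correct and follows the paper's proof essentially step for step. Part (i) uses the reparametrized WLS normal equations, part (ii) uses the direct computation $\frac1n\sum_i K_{h,i}(x)X_{h,i}(\beta,x)L_{n,i}(\beta,x)=\frac1n\sum_i K_{h,i}(x)X_{h,i}(\beta,x)(X_i-x)^\top-S_n(\beta,x)N_n(\beta,x)=0$ followed by transposition, and part (iii) uses linearity of $Y\mapsto\widehat\gamma_Y$. The orientation concern you raise in (ii) is not an issue, since $L_{n,i}(\beta,x)^\top X_{h,i}(\beta,x)^\top$ is exactly the transpose of $X_{h,i}(\beta,x)L_{n,i}(\beta,x)$.
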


\begin{proof}
The local fitted value is
\begin{equation*}
    \widehat a_Y(\beta,x)
    +\widehat b_Y(\beta,x)^\top\beta^\top(X_i-x)
    =X_{h,i}(\beta,x)^\top\widehat\gamma_Y(\beta,x).
\end{equation*}
Then $\widehat\gamma_Y(\beta,x)$ minimizes
\begin{equation*}
    \frac1n\sum_{i=1}^n
    K_{h,i}(x)
    \{Y_i-X_{h,i}(\beta,x)^\top\gamma\}^2.
\end{equation*}
The weighted least-squares normal equations give \eqref{eq:gamma-Z}.

By the definition of $L_{n,i}(\beta,x)$,
\begin{equation*}
\begin{aligned}
    &\frac1n\sum_iK_{h,i}(x)X_{h,i}(\beta,x)L_{n,i}(\beta,x) \\
    &=\frac1n\sum_iK_{h,i}(x)X_{h,i}(\beta,x)(X_i-x)^\top
    -S_n(\beta,x)N_n(\beta,x)
    =0.
\end{aligned}
\end{equation*}
Taking transposes leads to \eqref{eq:L-orthogonality-final}.

Finally, \eqref{eq:gamma-Z} shows that local least squares is linear in the response. Therefore
\begin{equation*}
    \widehat\gamma_{\widehat Y}(\beta,x)
    =\widehat\gamma_Y(\beta,x)+\widehat\gamma_\rho(\beta,x).
\end{equation*}
Substituting this into the definition of the profiled residual gives \eqref{eq:residual-decomposition-final}.
\end{proof}

\begin{lemma}[Generated-response local-fit bounds]
\label{lem:local-fit-bounds}
Under Assumption~\ref{ass:mave-local-moments} and $r_{n,\infty}=O_p(R_n)$,
\begin{equation}
    \sup_{\beta,x}\|\widehat\gamma_\rho(\beta,x)\|=O_p(R_n),
\label{eq:gamma-rho-rate}
\end{equation}
\begin{equation}
    \sup_{\beta,x}\|\widehat b_\rho(\beta,x)\|=O_p(R_n/h),
\label{eq:b-rho-rate}
\end{equation}
and, on the kernel support,
\begin{equation}
    \sup_{\substack{i,\beta,x:\ K_{h,i}(x)\ne0}}
    |\rho_{ni}-\Phi_{n,i}(\beta,x)|=O_p(R_n).
\label{eq:rho-minus-Phi-rate}
\end{equation}
\end{lemma}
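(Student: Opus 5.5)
The plan is to read all three bounds off the closed-form local least-squares fit in Lemma~\ref{lem:profiling-residualization}, together with the uniform nonsingularity in Assumption~\ref{ass:mave-local-moments}.

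\emph{Bound on $\widehat\gamma_\rho$.} By \eqref{eq:gamma-Z} applied to the response $\rho$,
\[
\widehat\gamma_\rho(\beta,x)=S_n(\beta,x)^{-1}\frac1n\sum_{i=1}^n K_{h,i}(x)X_{h,i}(\beta,x)\rho_{ni}.
\]
I first bound the vector $\frac1n\sum_i K_{h,i}(x)X_{h,i}(\beta,x)\rho_{ni}$ uniformly. Each summand satisfies
\[
\|K_{h,i}(x)X_{h,i}(\beta,x)\rho_{ni}\|\le r_{n,\infty}\,K_{h,i}(x)\,\|X_{h,i}(\beta,x)\|.
\]
Because $K$ has bounded support (Assumption~\ref{ass:csMAVE}(c)) and $\beta^\top\beta=I_d$, every $i$ with $K_{h,i}(x)\neq0$ has $\|h^{-1}\beta^\top(X_i-x)\|\le\|h^{-1}(X_i-x)\|\le c_K$. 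Hence $\|X_{h,i}(\beta,x)\|\le 1+c_K$ uniformly in $(i,\beta,x)$ on the kernel support.

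The norm of the vector is therefore at most $(1+c_K)\,r_{n,\infty}\,\varsigma_n(x)$. So I need $\sup_x\varsigma_n(x)=O_p(1)$. This is the one genuinely probabilistic step, and the one I expect to be the main obstacle. I would invoke standard uniform consistency of kernel density estimators, $\sup_x|\varsigma_n(x)-\E K_h(X-x)|=O_p(\delta_n)=o_p(1)$ under $nh^p/\log n\to\infty$. Combined with the fact that $\E K_h(X-x)\to f_X(x)$ is bounded by Assumption~\ref{ass:csMAVE}(a), this gives the required bound. Alternatively, if one wishes to avoid this, one can note that $\varsigma_n(x)$ equals the $(1,1)$ entry of $S_n(\beta,x)$, and assume $\sup\|S_n\|=O_p(1)$ as part of the same local-moment regularity. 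Either way, multiplying by $\sup\|S_n^{-1}\|=O_p(1)$ yields \eqref{eq:gamma-rho-rate}.

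\emph{Bound on $\widehat b_\rho$.} The second block of $\widehat\gamma_\rho$ is $h\widehat b_\rho$. Hence $\|\widehat b_\rho\|\le h^{-1}\|\widehat\gamma_\rho\|=O_p(R_n/h)$, uniformly, which is \eqref{eq:b-rho-rate}.

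\emph{Bound on $\rho_{ni}-\Phi_{n,i}$.} On the kernel support,
\[
|\Phi_{n,i}(\beta,x)|=|X_{h,i}(\beta,x)^\top\widehat\gamma_\rho(\beta,x)|\le(1+c_K)\sup\|\widehat\gamma_\rho\|=O_p(R_n).
\]
Also $|\rho_{ni}|\le r_{n,\infty}=O_p(R_n)$. The triangle inequality then gives \eqref{eq:rho-minus-Phi-rate}.

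All suprema are over $\beta$ on the Stiefel manifold and $x\in\mathcal X$. The bounds are deterministic multiples of $r_{n,\infty}$, $\sup\varsigma_n$ and $\sup\|S_n^{-1}\|$, so uniformity is automatic once those three quantities are controlled.
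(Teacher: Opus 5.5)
Your proposal is correct and follows the paper's proof essentially step for step. Both arguments use the closed form of $\widehat\gamma_\rho$ from Lemma~\ref{lem:profiling-residualization}, the bound $\|X_{h,i}(\beta,x)\|\le C$ on the kernel support, and $\sup\|S_n^{-1}\|=O_p(1)$, then read off the $h\widehat b_\rho$ block and finish with the triangle inequality. The one difference is that you explicitly justify $\sup_x\varsigma_n(x)=O_p(1)$, borrowing $nh^p/\log n\to\infty$ and the bounded density $f_X$ from the hypotheses of Theorem~\ref{thm:csMAVE_convergence}, whereas the paper uses this step silently when it bounds $\frac1n\sum_i|K_{h,i}(x)|$ and concludes $O_p(R_n)$.
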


\begin{proof}
If $K_{h,i}(x)\ne0$, bounded support of $K$ by Assumption~\ref{ass:csMAVE}(c) implies $\|X_i-x\|\le Ch$. Hence
\begin{equation*}
    \|X_{h,i}(\beta,x)\|\le C
\end{equation*}
on the kernel support. By Lemma~\ref{lem:profiling-residualization},
\begin{equation*}
    \widehat\gamma_\rho(\beta,x)
    =S_n(\beta,x)^{-1}
    \frac1n\sum_iK_{h,i}(x)X_{h,i}(\beta,x)\rho_{ni}.
\end{equation*}
Using Assumption~\ref{ass:mave-local-moments} and $r_{n,\infty}=O_p(R_n)$,
\begin{equation*}
\begin{aligned}
    \|\widehat\gamma_\rho(\beta,x)\|
    &\le
    \|S_n(\beta,x)^{-1}\|
    \left\|
    \frac1n\sum_iK_{h,i}(x)X_{h,i}(\beta,x)\rho_{ni}
    \right\| \\
    &\le
    C\|S_n(\beta,x)^{-1}\|r_{n,\infty}
    \frac1n\sum_i|K_{h,i}(x)|
    =O_p(R_n).
\end{aligned}
\end{equation*}
This proves \eqref{eq:gamma-rho-rate}. Since
$
\widehat\gamma_\rho=(\widehat a_\rho,h\widehat b_\rho^\top)^\top,
$
\eqref{eq:b-rho-rate} follows. Finally, on the kernel support,
\begin{equation*}
    |\Phi_{n,i}(\beta,x)|
    \le
    \|X_{h,i}(\beta,x)\|\,\|\widehat\gamma_\rho(\beta,x)\|
    =O_p(R_n),
\end{equation*}
which combined with the triangle inequality gives \eqref{eq:rho-minus-Phi-rate}.
\end{proof}
\begin{lemma}[Generated-response score perturbation]
\label{lem:score-perturbation}
The difference between the $k$th direction of the generated-response score and that of the exact-response score follows
\begin{equation}
    \mathcal S_{\widehat Y,k,n}(\beta)-\mathcal S_{Y,k,n}(\beta)
    =
    O_p(hR_n)+o_p(h^2)\|\Delta_\beta\|,
    \qquad k=1,\ldots,d_0.
\label{eq:score-perturbation-final}
\end{equation}
\end{lemma}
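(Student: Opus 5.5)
We need to bound the difference of the profiled scores under $\widehat Y_i=Y_i+\rho_{ni}$ and under the exact response $Y_i$. The plan is to write the profiled score in the form used in Lemma~\ref{lem:exact-response-WY-expansion}, namely (up to normalization)
\begin{equation*}
\mathcal S_{Y,k,n}(\beta)=\frac1{n^2}\sum_{j}\sum_i \frac{K_{h,i}(X_j)}{\varsigma_n(X_j)}\,\widehat b_{Y,k}(\beta,X_j)\,L_{n,i}(\beta,X_j)^\top\, e^{Y}_{n,i}(\beta,X_j),
\end{equation*}
where $\widehat b_{Y,k}$ is the $k$th local slope. We then use the linearity identities of Lemma~\ref{lem:profiling-residualization}(iii), namely $e^{\widehat Y}_{n,i}=e^Y_{n,i}+(\rho_{ni}-\Phi_{n,i})$ and $\widehat b_{\widehat Y}=\widehat b_Y+\widehat b_\rho$. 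Expanding the product, the difference $\mathcal S_{\widehat Y,k,n}-\mathcal S_{Y,k,n}$ splits into three terms:
\begin{equation*}
T_1=\textstyle\sum \widehat b_{Y,k}\,L^\top(\rho-\Phi),\qquad T_2=\textstyle\sum \widehat b_{\rho,k}\,L^\top e^Y,\qquad T_3=\textstyle\sum \widehat b_{\rho,k}\,L^\top(\rho-\Phi),
\end{equation*}
each carrying the same kernel weights $K_{h,i}(X_j)/\{n^2\varsigma_n(X_j)\}$. Each term is bounded separately.

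\textbf{Term $T_1$.} On the kernel support we have $\|X_i-X_j\|\le Ch$. Together with Assumption~\ref{ass:mave-local-moments}, this gives $\|N_n\|=O_p(h)$ and hence $\|L_{n,i}\|=O_p(h)$ uniformly. The slope $\widehat b_Y$ is $O_p(1)$ because $g$ is smooth. By \eqref{eq:rho-minus-Phi-rate}, $|\rho-\Phi|=O_p(R_n)$ uniformly. Since the normalized weights sum to one, $T_1=O_p(hR_n)$.

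\textbf{Term $T_3$.} By \eqref{eq:b-rho-rate}, $\|\widehat b_\rho\|=O_p(R_n/h)$, so $T_3=O_p(R_n/h\cdot h\cdot R_n)=O_p(R_n^2)$. This is $O_p(hR_n)$ because $R_n=o(h)$.

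\textbf{Term $T_2$.} This is the main obstacle, since the crude bound $O_p(R_n/h)\cdot O_p(h)\cdot\sup|e^Y|$ must be sharpened. I would Taylor-expand $Y_i=g(\beta_0^\top X_i)$ around $X_j$ as in \eqref{eq:taylor-g}. The local linear fit removes the intercept and the part of the linear term lying in $\mathrm{span}(\beta)$. What remains of $e^Y_{n,i}$ is the residualized first-order piece $L_{n,i}\Delta_\beta V(\beta_0^\top X_j)$ plus residualized quadratic terms. The first-order piece is $O_p(h\|\Delta_\beta\|)$ and the quadratic terms are $O_p(h^2)$. Substituting gives
\begin{equation*}
T_2=O_p\big((R_n/h)\cdot h\cdot(h\|\Delta_\beta\|+h^2)\big)=O_p(hR_n)+O_p(R_n)\,\|\Delta_\beta\|.
\end{equation*}
Because $R_n=o(h^2)$ under Assumption~\ref{ass:csMAVE}(e), the second part is $o_p(h^2)\|\Delta_\beta\|$. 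If only $R_n=o(h)$ were available, I would instead use the orthogonality \eqref{eq:L-orthogonality-final} to cancel the leading quadratic contribution, gaining an extra factor of $h$.

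Summing $T_1$, $T_2$ and $T_3$ yields \eqref{eq:score-perturbation-final}. The delicate point is keeping all suprema uniform over $(\beta,x)$, which relies on the bounded kernel support and on Assumption~\ref{ass:mave-local-moments}.
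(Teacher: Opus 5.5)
Your proposal follows the paper's proof. Your $T_1$ is the paper's $H_{k,n}+R^Y_{k,n}$, and your $T_2$ and $T_3$ are exactly $R^{\rho,1}_{k,n}$ and $R^{\rho,2}_{k,n}$. The paper splits $\widehat b_{Y,k}=V_k+\{\widehat b_{Y,k}-V_k\}$ and uses the orthogonality identity to drop $\Phi$ from $H_{k,n}$. Your crude bound $O_p(1)\cdot O_p(h)\cdot O_p(R_n)$ gives the same $O_p(hR_n)$ without that step.

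There is an arithmetic slip in $T_2$. You have $(R_n/h)\cdot h\cdot h\|\Delta_\beta\| = hR_n\|\Delta_\beta\|$, not $R_n\|\Delta_\beta\|$. The correct bound is therefore
\[
T_2=O_p\bigl(hR_n\|\Delta_\beta\|\bigr)+O_p\bigl(h^2R_n\bigr).
\]
With this, $hR_n\|\Delta_\beta\|=o_p(h^2)\|\Delta_\beta\|$ follows from $R_n=o(h)$ alone. That is the standing assumption in the appendix notation where the lemma sits, and the only rate condition the paper's proof uses; you yourself use it for $T_3$.

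As written, your argument instead needs the stronger $R_n=o(h^2)$ of Assumption~\ref{ass:csMAVE}(e). This is harmless for Theorem~\ref{thm:csMAVE_convergence}, which assumes it, but the lemma's own setting does not provide it. The orthogonality fallback you sketch is also unnecessary: the extra factor of $h$ is already there once the product is computed correctly.
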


\begin{proof}
For the exact response, the profiled least-squares objective is
\begin{equation*}
    Q_n^Y(\beta)
    =
    \frac1{n^2}
    \sum_{j=1}^n\sum_{i=1}^n
    \frac{K_{h,i}(X_j)}{\varsigma_n(X_j)}
    \{e_{n,i}^{Y}(\beta,X_j)\}^2,
\end{equation*}
For the generated response $\widehat Y_i=Y_i+\rho_{ni}$,
\begin{equation*}
    Q_n^{\widehat Y}(\beta)
    =
    \frac1{n^2}
    \sum_{j=1}^n\sum_{i=1}^n
    \frac{K_{h,i}(X_j)}{\varsigma_n(X_j)}
    \{e_{n,i}^{\widehat Y}(\beta,X_j)\}^2.
\end{equation*}

Following Lemma~\ref{lem:profiling-residualization}, we can write the profiled residual as
\begin{equation}
    e_{n,i}^{Y}(\beta,x)
    =
    L_{n,i}(\beta,x)\Delta_\beta V(\beta_0^\top x)
    +
    A_{n,i}(\beta,x),
\label{eq:S7-exact-residual-A}
\end{equation}
where
\begin{equation*}
\begin{aligned}
    A_{n,i}(\beta,x)
    &=
    h^2\bigl\{P_{2,h,i}(x)+hP_{3,h,i}(x)\bigr\}
    +
    R_i(x)  \\
    &\quad-
    X_{h,i}(\beta,x)^\top
    S_n(\beta,x)^{-1}
    \frac1n\sum_{\ell=1}^n
    K_{h,\ell}(x)X_{h,\ell}(\beta,x)  \\
    &\qquad\qquad\times
    \left[
        h^2\bigl\{P_{2,h,\ell}(x)+hP_{3,h,\ell}(x)\bigr\}
        +
        R_\ell(x)
    \right].
\end{aligned}
\end{equation*}
The Taylor-remainder expression inside the brackets is $O(h^2)$ uniformly on the kernel support. Since $X_{h,i}(\beta,x)=O(1)$ and $\sup_{\beta,x}\|S_n(\beta,x)^{-1}\|=O_p(1)$ by Assumption~\ref{ass:mave-local-moments},
\begin{equation*}
    A_{n,i}(\beta,x)=O_p(h^2)
\end{equation*}
Combining this with
$\|L_{n,i}(\beta,X_j)\|=O_p(h)$ we have
\begin{equation}
\begin{aligned}
    e_{n,i}^{Y}(\beta,X_j)
    &=
    L_{n,i}(\beta,X_j)\Delta_\beta V(\beta_0^\top X_j)
    +
    O_p(h^2)  \\
    &=
    O_p\bigl(h\|\Delta_\beta\|\bigr)+O_p(h^2).
\end{aligned}
\label{eq:S7-exact-residual-rate}
\end{equation}
By \eqref{eq:residual-decomposition-final}, the generated-response residual is
\begin{equation}
\begin{aligned}
    e_{n,i}^{\widehat Y}(\beta,x)
    &=
    L_{n,i}(\beta,x)\Delta_\beta V(\beta_0^\top x)
    +
    A_{n,i}(\beta,x)  \\
    &\quad+
    \rho_{ni}
    -
    \Phi_{n,i}(\beta,x).
\end{aligned}
\label{eq:S7-generated-residual-A}
\end{equation}
Substituting \eqref{eq:S7-generated-residual-A} into $Q_n^{\widehat Y}(\beta)$ gives
\begin{equation*}
\begin{aligned}
    Q_n^{\widehat Y}(\beta)
    &=
    \frac1{n^2}
    \sum_{j=1}^n\sum_{i=1}^n
    \frac{K_{h,i}(X_j)}{\varsigma_n(X_j)}
    \Big[
        L_{n,i}(\beta,X_j)\Delta_\beta V(\beta_0^\top X_j) \\
    &\qquad\qquad\qquad\qquad
        +
        A_{n,i}(\beta,X_j)
        +
        \rho_{ni}
        -
        \Phi_{n,i}(\beta,X_j)
    \Big]^2 .
\end{aligned}
\end{equation*}

We now take the score following \citeSM{wangAdaptiveEstimationMAVE2012SM} equations (B.6) and (B.7). Let
\begin{equation*}
    \mathcal D_{Y,k,n,i}(\beta,X_j)
    \coloneqq
    -\frac{\partial e_{n,i}^{Y}(\beta,X_j)}{\partial\delta_{k}},
    \qquad
    \mathcal D_{\widehat Y,k,n,i}(\beta,X_j)
    \coloneqq
    -\frac{\partial e_{n,i}^{\widehat Y}(\beta,X_j)}
    {\partial\delta_{k}},
\end{equation*}
Then
\begin{equation*}
\begin{aligned}
    \mathcal S_{Y,k,n}(\beta)
    &=
    \frac1{n^2}
    \sum_{j=1}^n\sum_{i=1}^n
    \frac{K_{h,i}(X_j)}{\varsigma_n(X_j)}
    e_{n,i}^{Y}(\beta,X_j)
    \mathcal D_{Y,k,n,i}(\beta,X_j),
\end{aligned}
\end{equation*}
and
\begin{equation*}
\begin{aligned}
    \mathcal S_{\widehat Y,k,n}(\beta)
    &=
    \frac1{n^2}
    \sum_{j=1}^n\sum_{i=1}^n
    \frac{K_{h,i}(X_j)}{\varsigma_n(X_j)}
    e_{n,i}^{\widehat Y}(\beta,X_j)
    \mathcal D_{\widehat Y,k,n,i}(\beta,X_j).
\end{aligned}
\end{equation*}
We know from least squares that
\begin{equation*}
    \mathcal D_{Y,k,n,i}(\beta,X_j)
    =
    \widehat b_{Y,k}(\beta,X_j)
    L_{n,i}(\beta,X_j)^\top.
\end{equation*}
We decompose this derivative as
\begin{equation}
\begin{aligned}
    \mathcal D_{Y,k,n,i}(\beta,X_j)
    &=
    V_k(\beta_0^\top X_j)
    L_{n,i}(\beta,X_j)^\top
    +
    \mathcal B^Y_{k,n,i}(\beta,X_j),
\end{aligned}
\label{eq:S7-DY-decomposition}
\end{equation}
where
\begin{equation}
    \mathcal B^Y_{k,n,i}(\beta,X_j)
    \coloneqq
    \bigl\{
        \widehat b_{Y,k}(\beta,X_j)-V_k(\beta_0^\top X_j)
    \bigr\}
    L_{n,i}(\beta,X_j)^\top .
\label{eq:S7-BY-def}
\end{equation}
Similarly, since
$\widehat\gamma_{\widehat Y}=\widehat\gamma_Y+\widehat\gamma_\rho$,
\begin{equation*}
    \widehat b_{\widehat Y,k}(\beta,X_j)
    =
    \widehat b_{Y,k}(\beta,X_j)
    +
    \widehat b_{\rho,k}(\beta,X_j).
\end{equation*}
Therefore
\begin{equation}
    \mathcal D_{\widehat Y,k,n,i}(\beta,X_j)
    =
    \mathcal D_{Y,k,n,i}(\beta,X_j)
    +
    \mathcal B^\rho_{k,n,i}(\beta,X_j),
\label{eq:S7-DYhat-decomposition}
\end{equation}
where
\begin{equation}
    \mathcal B^\rho_{k,n,i}(\beta,X_j)
    \coloneqq
    \widehat b_{\rho,k}(\beta,X_j)
    L_{n,i}(\beta,X_j)^\top .
\label{eq:S7-Brho-def}
\end{equation}

Subtracting the two scores and using \eqref{eq:residual-decomposition-final}, \eqref{eq:S7-DY-decomposition}, and \eqref{eq:S7-DYhat-decomposition} decomposes the difference as
\begin{equation}
\begin{aligned}
    &\mathcal S_{\widehat Y,k,n}(\beta)-\mathcal S_{Y,k,n}(\beta)\\
    &=
    H_{k,n}(\beta)
    +
    R^Y_{k,n}(\beta)
    +
    R^{\rho,1}_{k,n}(\beta)
    +
    R^{\rho,2}_{k,n}(\beta),
\end{aligned}
\label{eq:S7-score-four-term-decomp}
\end{equation}
where
\begin{equation}
\begin{aligned}
    H_{k,n}(\beta)
    &=
    \frac1{n^2}
    \sum_{j=1}^n\sum_{i=1}^n
    \frac{K_{h,i}(X_j)}{\varsigma_n(X_j)}
    V_k(\beta_0^\top X_j)
    L_{n,i}(\beta,X_j)^\top  \\
    &\qquad\qquad\times
    \bigl\{\rho_{ni}-\Phi_{n,i}(\beta,X_j)\bigr\},
\end{aligned}
\label{eq:S7-Hkn-def}
\end{equation}
\begin{equation}
\begin{aligned}
    R^Y_{k,n}(\beta)
    &=
    \frac1{n^2}
    \sum_{j=1}^n\sum_{i=1}^n
    \frac{K_{h,i}(X_j)}{\varsigma_n(X_j)}
    \bigl\{\rho_{ni}-\Phi_{n,i}(\beta,X_j)\bigr\}
    \mathcal B^Y_{k,n,i}(\beta,X_j),
\end{aligned}
\label{eq:S7-RY-def}
\end{equation}
\begin{equation}
\begin{aligned}
    R^{\rho,1}_{k,n}(\beta)
    &=
    \frac1{n^2}
    \sum_{j=1}^n\sum_{i=1}^n
    \frac{K_{h,i}(X_j)}{\varsigma_n(X_j)}
    e_{n,i}^{Y}(\beta,X_j)
    \mathcal B^\rho_{k,n,i}(\beta,X_j),
\end{aligned}
\label{eq:S7-Rrho1-def}
\end{equation}
and
\begin{equation}
\begin{aligned}
    R^{\rho,2}_{k,n}(\beta)
    &=
    \frac1{n^2}
    \sum_{j=1}^n\sum_{i=1}^n
    \frac{K_{h,i}(X_j)}{\varsigma_n(X_j)}
    \bigl\{\rho_{ni}-\Phi_{n,i}(\beta,X_j)\bigr\}
    \mathcal B^\rho_{k,n,i}(\beta,X_j).
\end{aligned}
\label{eq:S7-Rrho2-def}
\end{equation}

We first bound \eqref{eq:S7-Hkn-def}. By \eqref{eq:L-orthogonality-final} from Lemma~\ref{lem:profiling-residualization}, for fixed $x$,
\begin{equation*}
\begin{aligned}
    &\frac1n\sum_{i=1}^n
    K_{h,i}(x)
    L_{n,i}(\beta,x)^\top
    \Phi_{n,i}(\beta,x)  \\
    &=
    \left\{
        \frac1n\sum_{i=1}^n
        K_{h,i}(x)
        L_{n,i}(\beta,x)^\top
        X_{h,i}(\beta,x)^\top
    \right\}
    \widehat\gamma_\rho(\beta,x)
    =
    0.
\end{aligned}
\end{equation*}
Therefore
\begin{equation*}
\begin{aligned}
    H_{k,n}(\beta)
    &=
    \frac1{n^2}
    \sum_{j=1}^n\sum_{i=1}^n
    \frac{K_{h,i}(X_j)}{\varsigma_n(X_j)}
    V_k(\beta_0^\top X_j)
    L_{n,i}(\beta,X_j)^\top
    \rho_{ni}.
\end{aligned}
\end{equation*}
By boundedness of $V_k$, bounded kernel support, Assumption~\ref{ass:mave-local-moments},
and $r_{n,\infty}=O_p(R_n)$,
\begin{equation}
    H_{k,n}(\beta)=O_p(hR_n).
\label{eq:S7-Hkn-rate}
\end{equation}

Next consider \eqref{eq:S7-RY-def}. For fixed $X_j$, $\mathcal B^Y_{k,n,i}(\beta,X_j)$ is $\{\widehat b_{Y,k}(\beta,X_j)-V_k(\beta_0^\top X_j)\}L_{n,i}(\beta,X_j)^\top$, where the scalar factor is constant in the inner sum over $i$. Hence its $\Phi$ part cancels by \eqref{eq:L-orthogonality-final} as well. Since $\widehat b_{Y,k}(\beta,X_j)$ and $V_k(\beta_0^\top X_j)$ are uniformly bounded on the local neighborhood, $\|\mathcal B^Y_{k,n,i}(\beta,X_j)\|=O_p(h)$. Therefore by \eqref{eq:rho-minus-Phi-rate} in Lemma~\ref{lem:profiling-residualization}
\begin{equation}
    R^Y_{k,n}(\beta)=O_p(hR_n).
\label{eq:S7-RY-rate}
\end{equation}

It remains to bound the final two terms \eqref{eq:S7-Rrho1-def} and \eqref{eq:S7-Rrho2-def}. By Lemma~\ref{lem:local-fit-bounds},
\begin{equation*}
    \widehat b_\rho(\beta,x)=O_p(R_n/h),
    \qquad
    \rho_{ni}-\Phi_{n,i}(\beta,x)=O_p(R_n).
\end{equation*}
Together with $\|L_{n,i}(\beta,X_j)\|=O_p(h)$,
\begin{equation}
    \|\mathcal B^\rho_{k,n,i}(\beta,X_j)\|=O_p(R_n).
\label{eq:S7-Brho-rate}
\end{equation}
Using \eqref{eq:S7-exact-residual-rate} and
\eqref{eq:S7-Brho-rate},
\begin{equation}
\begin{aligned}
    R^{\rho,1}_{k,n}(\beta)
    &=
    O_p(R_n)
    \Bigl[
        O_p\bigl(h\|\Delta_\beta\|\bigr)+O_p(h^2)
    \Bigr] \\
    &=
    O_p\bigl(hR_n\|\Delta_\beta\|\bigr)
    +
    O_p(h^2R_n).
\end{aligned}
\label{eq:S7-Rrho1-rate}
\end{equation}
Similarly,
\begin{equation}
    R^{\rho,2}_{k,n}(\beta)
    =
    O_p(R_n)\,O_p(R_n)
    =
    O_p(R_n^2).
\label{eq:S7-Rrho2-rate}
\end{equation}
Since $R_n=o(h)$, combining \eqref{eq:S7-Rrho1-rate} and \eqref{eq:S7-Rrho2-rate} gives
\begin{equation*}
\begin{aligned}
    &O_p\bigl(hR_n\|\Delta_\beta\|\bigr)
    +
    O_p(h^2R_n)
    +
    O_p(R_n^2) \\
    &\qquad=
    o_p(h^2)\|\Delta_\beta\|+O_p(hR_n).
\end{aligned}
\end{equation*}
Putting it together with the first \eqref{eq:S7-Hkn-rate} and second term \eqref{eq:S7-RY-rate} in the decomposition \eqref{eq:S7-score-four-term-decomp} gives
\begin{equation*}
    \mathcal S_{\widehat Y,k,n}(\beta)-\mathcal S_{Y,k,n}(\beta)
    =
    O_p(hR_n)+o_p(h^2)\|\Delta_\beta\|,
    \qquad k=1,\ldots,d_0.
\end{equation*}
\end{proof}
\begin{lemma}[Generated-response score expansion]
\label{lem:generated-WY-expansion}
For $k=1,\ldots,d_0$,
\begin{equation}
\begin{aligned}
    \mathcal S_{\widehat Y,k,n}(\beta)
    &=
    -h^2\sum_{\ell=1}^{d_0}
    M_{k\ell,n}\delta_{\beta,\ell}
    +
    h^4\Delta_\beta a_{k,n} \\
    &\quad
    +
    O_p\{(h^3+h\delta_n)\|\Delta_\beta\|\}
    +
    O_p(h^5+h^3\delta_n+hR_n) \\
    &\quad
    +
    o_p(h^2)\|\Delta_\beta\|.
\end{aligned}
\label{eq:generated-score-expansion}
\end{equation}
\end{lemma}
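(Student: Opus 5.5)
This expansion is obtained by combining the two preceding lemmas additively; essentially no new estimation is required. For each $k=1,\ldots,d_0$, we write the generated-response score as the exact-response score plus a perturbation:
\begin{equation*}
\mathcal S_{\widehat Y,k,n}(\beta)
=
\mathcal S_{Y,k,n}(\beta)
+
\bigl\{\mathcal S_{\widehat Y,k,n}(\beta)-\mathcal S_{Y,k,n}(\beta)\bigr\}.
\end{equation*}

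We then treat the two pieces separately.

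\begin{enumerate}
\item \textbf{Exact-response part.} We substitute Lemma~\ref{lem:exact-response-WY-expansion} for $\mathcal S_{Y,k,n}(\beta)$. This produces the leading curvature term $-h^2\sum_{\ell}M_{k\ell,n}\delta_{\beta,\ell}$ and the bias term $h^4\Delta_\beta a_{k,n}$. It also gives the remainders $O_p\{(h^3+h\delta_n)\|\Delta_\beta\|\}$ and $O_p(h^5+h^3\delta_n)$.

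\item \textbf{Perturbation part.} We substitute Lemma~\ref{lem:score-perturbation} for the bracketed difference. This contributes $O_p(hR_n)+o_p(h^2)\|\Delta_\beta\|$.

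\item \textbf{Collecting terms.} The $O_p(hR_n)$ term is $\beta$-free, so it merges with $O_p(h^5+h^3\delta_n)$ into $O_p(h^5+h^3\delta_n+hR_n)$. The $o_p(h^2)\|\Delta_\beta\|$ term is kept separate, since it cannot be absorbed into $O_p\{(h^3+h\delta_n)\|\Delta_\beta\|\}$ without a rate on $R_n/h$ relative to $h$. The result is exactly the displayed expansion.
\end{enumerate}

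The only point requiring care is uniformity in $\beta$. The later proof of Theorem~\ref{thm:csMAVE_convergence} evaluates the expansion at the random minimizer $\hat\beta$, so the $O_p$ and $o_p$ bounds must hold uniformly over the Stiefel set $\{\beta:\beta^\top\beta=I_d\}$.

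\begin{itemize}
\item For the exact-response part, uniformity is inherited from the MAVE expansion in the cited work.
\item For the perturbation part, we check that every bound in Lemma~\ref{lem:score-perturbation} relied only on quantities that are uniform in $(\beta,x)$. These are $\sup_{\beta,x}\|S_n^{-1}\|=O_p(1)$ (Assumption~\ref{ass:mave-local-moments}), the uniform rates of Lemma~\ref{lem:local-fit-bounds}, bounded kernel support, $r_{n,\infty}=O_p(R_n)$, and $R_n=o(h)$.
\end{itemize}

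If this verification goes through, the remainders from the two lemmas are uniform, and their sum is uniform as well. This uniformity check is the main (if mild) obstacle; the rest is bookkeeping.
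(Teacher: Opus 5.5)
Your proposal is correct and is exactly the paper's argument: the paper writes $\mathcal S_{\widehat Y,k,n}(\beta)=\mathcal S_{Y,k,n}(\beta)+\{\mathcal S_{\widehat Y,k,n}(\beta)-\mathcal S_{Y,k,n}(\beta)\}$, substitutes Lemma~\ref{lem:exact-response-WY-expansion} and Lemma~\ref{lem:score-perturbation}, and collects the remainders just as you do. Your remark on uniformity over $\beta$, which is needed because Theorem~\ref{thm:csMAVE_convergence} evaluates the expansion at $\hat\beta$, is a sensible point the paper leaves implicit, and your check that the perturbation bounds rely only on uniform-in-$(\beta,x)$ quantities is the right one.
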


\begin{proof}
By definition,
\begin{equation*}
    \mathcal S_{\widehat Y,k,n}(\beta)
    =
    \mathcal S_{Y,k,n}(\beta)
    +
    \{\mathcal S_{\widehat Y,k,n}(\beta)-\mathcal S_{Y,k,n}(\beta)\}.
\end{equation*}
Substituting results from Lemma~\ref{lem:exact-response-WY-expansion} and  Lemma~\ref{lem:score-perturbation} yields \eqref{eq:generated-score-expansion}.
\end{proof}

\section{Proofs of Main Results}\label{sec:proofs}

\subsection{Proof of Corollary~\ref{cor:identification_Z}}\label{sec:proof_cor_identification_Z}
\begin{proof}
Identification of $\mu_\beta(z):=\E(Y^z)$ by formulas \eqref{eq:identification_gcomp} and \eqref{eq:identification_ipw} immediately follows if $Z$ satisfies consistency, weak positivity, and conditional ignorability. Together, Assumptions~\ref{ass:identificationX}(a) and \ref{ass:well_defined_Z} imply $Z$ is consistent for $Y$, and $Y^z$ is a well-defined potential outcome. Assumption~\ref{ass:strong_positivity_Z}(b) is strong positivity of $Z$, which implies weak positivity. It remains to show that Assumption~\ref{ass:identificationX}(c), conditional ignorability of $X$, implies $Y^z \indep Z \mid C$ for all $z\in \mathcal{Z}$.

It is known that for any measurable function $h$, 
\begin{equation*}
    Y^x \indep X \mid C \Rightarrow Y^x \indep h(X) \mid C.
\end{equation*}
By setting $Z:=h(X)=\beta^\top X$, it follows that:
\begin{equation*}
    Y^x \indep Z \mid C.
\end{equation*}
By Assumption~\ref{ass:well_defined_Z}, for any $z\in\mathcal{Z}$, we can choose some $x_z$ with $\beta^\top x_z = z$ such that $Y^z=Y^x$, where $x=x_z$. For convenience, denote the potential outcome $Y(do(X=x_z))$ as $Y^{x_z}$. For any measurable set $A$: 
\begin{equation*}
\Prob(Y^z\in A\mid Z, C) = \Prob(Y^{x_z}\in A \mid Z, C) = \Prob(Y^{x_z} \in A\mid C)=\Prob(Y^z\in A \mid C),
\end{equation*}
which is the desired conditional ignorability for $Z$. 
\end{proof}

% \clearpage

\subsection{Proof of Proposition \ref{prop:affine}} 
\label{sec:proofs_prop:affine}

\begin{proof}
Let $U=AX+b$. We will interpret interventions on $U$ as implemented through the corresponding intervention on $X$, so that $U$ and $X$ are the same intervention expressed in different coordinates. Using do-notation for the potential outcomes to clarify the two distinct interventions, we define the potential outcome under $U$ as $Y(do(U=u)):=Y(do(X=A^{-1}(u-b))$. Let $\mu_U(u):=\E(Y(do(U=u))$. It follows that for all $u$ in the support of $U$,
\begin{equation*}
\mu_U(u)=\mu\big(A^{-1}(u-b)\big).
\end{equation*}

We will use a similar argument to Theorem 8.3 of \citeSM{liSufficientDimensionReduction2018SM}. We define the following sets of causal SDR subspaces. Let any basis matrix $\beta\in\mathbb{R}^{p\times d}$ be full column rank. 
\[
\begin{aligned}
\Apo{x} &= \{\text{span}(\beta): \mu(x)=g(\beta^\top x) \text{for some } g\}, \\
\Apo{u} &= \{\text{span}(\tilde{\beta}): \mu_U(u)=\tilde{g}(\tilde{\beta}^\top u) \text{for some } \tilde{g}\}, \\
A^\top\Apo{u} &= \{A^\top\mathcal{S}:\mathcal{S} \in \Apo{u}\}.
\end{aligned}
\]
We now show that $\Apo{x}=A^\top\Apo{u}$. Suppose $\mathcal{S}\in \Apo{x}$. By definition there exists a basis matrix $\beta$ of $\mathcal{S}$ such that
\begin{align*}
    \mu(x)&=g(\beta^\top x) \\
          &= g\big(\beta^\top A^{-1}(Ax+b)-\beta^\top A^{-1}b\big) \\
          &= g\big(\beta^\top A^{-1}u-\beta^\top A^{-1}b\big) \\
          &= \tilde{g}\big(\beta^\top A^{-1}u\big) \\
          &= \tilde{g}\big((A^{-\top}\beta)^\top u\big), 
\end{align*}
where $\tilde{g}$ is a function accounting for the constant shift from $\beta^\top A^{-1}b$. Since $x=A^{-1}(u-b)$, it follows that $\mu_U(u)=\mu(x)$ and $A^{-\top}\beta$ is a basis matrix for a subspace in $\Apo{u}$. Then $\mathcal{S}\in A^\top\Apo{u}$. 

A similar derivation in the reverse direction shows that if $\mathcal{S}\in A^\top \Apo{u}$, then $\mathcal{S}\in \Apo{x}$, and we conclude that $\Apo{x}=A^\top\Apo{u}$. Rewriting each set, 
\begin{equation}
    \{\mathcal{S}:\mathcal{S}\in \Apo{x}\}=A^\top \{\mathcal{S}':\mathcal{S}'\in \Apo{u}\}=\{A^\top\mathcal{S}':\mathcal{S}'\in \Apo{u}\}.
    \label{eq:prop1_sets}
\end{equation}
Taking the intersection on both sides of \eqref{eq:prop1_sets} and using Lemma \ref{lemma:subspace_tranpose},
\[
\begin{aligned}
    \mathcal{S}_{\E(Y^x)}&=\cap\{\mathcal{S}:\mathcal{S}\in \Apo{x}\} \\
                         &=\cap\{A^\top\mathcal{S}':\mathcal{S}'\in \Apo{u}\}\\
                         &=A^\top\cap\{\mathcal{S}':\mathcal{S}'\in \Apo{u}\} \\
                         &=A^\top \mathcal{S}_{\E(Y^{Ax+b})}.
\end{aligned}
\]

\end{proof}

% \clearpage

\subsection{Proof of Proposition~\ref{prop:FWL}}
\label{sec:proofs_prop:FWL}

\begin{proof}
The residualized forms of both $Y$ and $X$ are:
\[
\begin{aligned}
\tilde{X}&=X-\E(X\mid C) = [f(C) + \varepsilon_X] - f(C)=\varepsilon_X, \\
\tilde{Y}&=Y - \E(Y\mid C)= [g+h(C)+\varepsilon_Y] - \E(g \mid C) - h(C) = g-\E(g\mid C)+\varepsilon_Y.
\end{aligned}
\]
By definition, $g(\beta^\top X)=g\Big(\beta^\top \big( f(C)+\varepsilon_X \big)\Big)$. Denote
\begin{equation*}
    T=\beta^\top f(C), \hspace{2em} U=\beta^\top \varepsilon_X=\beta^\top \tilde{X}.
\end{equation*}
therefore $g(\beta^\top X) = g(T+U)$. It follows that $T \indep U$. By independence, 
\[
\E( g \mid C)=\E(g(T+U) \mid T=t)=\E_U(g(t+U))=\psi(t).
\]
Therefore $\tilde{Y}=g(T+U)-\psi(T)+\varepsilon_Y$ and $\tilde{X}=\varepsilon_X$. Taking the conditional expectation,
\[
\begin{aligned}
    \E(\tilde{Y} \mid \tilde{X}) &= \E( g(T+U) - \psi(T) + \varepsilon_Y \mid \varepsilon_X) \\
    &=  \E(g(T+U) - \psi(T) + \varepsilon_Y \mid U=u) \\
    &= \E(g(T+U) \mid U=u) - \E(\psi(T)) - \E(\varepsilon_Y \mid U) \\
    &= \phi(u)-c+0 = \tilde{g}(u) \\
    &= \tilde{g}(\beta_0^\top \tilde{x}).
\end{aligned}
\]
\end{proof}

\subsection{Proof of Theorem \ref{thm:csMAVE_convergence}}
\label{sec:proofs_thm:csMAVE_convergence}
\begin{proof}
By definition of $\hat{\beta}$,
\begin{equation}
    \mathcal S_{\widehat Y,k,n}(\widehat\beta)=0,
    \qquad k=1,\ldots,d_0.
\label{eq:score-zero-generated}
\end{equation}
Let
\begin{equation*}
    \widehat\Delta
    =
    \Delta_{\widehat\beta}
    =
    (I_p-\widehat\beta\widehat\beta^\top)\beta_0
    =
    (\widehat\delta_1,\ldots,\widehat\delta_{d_0}).
\end{equation*}
Evaluating Lemma~\ref{lem:generated-WY-expansion} at $\beta=\widehat\beta$ and using
\eqref{eq:score-zero-generated}, we obtain, for each $k=1,\ldots,d_0$,
\begin{equation}
\begin{aligned}
    0
    &=
    -h^2\sum_{\ell=1}^{d_0}
    M_{k\ell,n}\widehat\delta_\ell
    +
    h^4\widehat\Delta a_{k,n} \\
    &\quad+
    O_p(h^5+h^3\delta_n+hR_n)
    +
    o_p(h^2)\|\widehat\Delta\|.
\end{aligned}
\label{eq:score-at-betahat}
\end{equation}
Define
\begin{equation*}
    M_n=(M_{k\ell,n})_{k,\ell=1}^{d_0},
    \qquad
    A_n=(a_{1,n},\ldots,a_{d_0,n}).
\end{equation*}
Since $M_n$ is symmetric, stacking \eqref{eq:score-at-betahat} over
$k=1,\ldots,d_0$ gives
\begin{equation}
    0
    =
    -h^2\widehat\Delta M_n
    +
    h^4\widehat\Delta A_n
    +
    O_p(h^5+h^3\delta_n+hR_n)
    +
    o_p(h^2)\|\widehat\Delta\|.
\label{eq:stacked-score}
\end{equation}
By Assumption~\ref{ass:csMAVE}(d) and Law of Large Numbers,
\begin{equation*}
    M_n
    =
    \frac1n\sum_{j=1}^n
    V(\beta_0^\top X_j)V(\beta_0^\top X_j)^\top
    =
    M+o_p(1),
\end{equation*}
where
\begin{equation*}
    M
    =
    E\{V(\beta_0^\top X)V(\beta_0^\top X)^\top\}
\end{equation*}
is nonsingular. It follows that $\|M_n^{-1}\|=O_p(1)$. Multiplying
\eqref{eq:stacked-score} on the right by $M_n^{-1}$ yields
\begin{equation}
    h^2\widehat\Delta
    =
    h^4\widehat\Delta A_nM_n^{-1}
    +
    O_p(h^5+h^3\delta_n+hR_n)
    +
    o_p(h^2)\|\widehat\Delta\|.
\label{eq:delta-linear-equation}
\end{equation}
By Lemma~\ref{lem:exact-response-WY-expansion}, $\|A_n\|=O_p(1)$. Combined with $\|M_n^{-1}\|=O_p(1)$, we have
\begin{equation*}
    h^4\widehat\Delta A_nM_n^{-1}
    =
    O_p(h^4\|\widehat\Delta\|)
    =
    o_p(h^2\|\widehat\Delta\|),
\end{equation*}
because $h\to0$. Therefore \eqref{eq:delta-linear-equation} implies
\begin{equation*}
    h^2\|\widehat\Delta\|
    =
    O_p(h^5+h^3\delta_n+hR_n)
    +
    o_p(h^2\|\widehat\Delta\|).
\end{equation*}
Absorbing the $o_p(h^2\|\widehat\Delta\|)$ term into the left-hand side and dividing by $h^2$ yields
\begin{equation*}
    \|\widehat\Delta\|
    =
    O_p\left(h^3+h\delta_n+h^{-1}R_n\right).
\end{equation*}
\end{proof}

% \clearpage
\subsection{Proof of Theorem \ref{thm:structural_dimension_consistency}}
\label{sec:proofs_thm:structural_dimension_consistency}
% Proof of structural dimension consistency
\begin{proof}
Recall that our proposed model for causal SDR using the ERS variant $\{\mu(X_i), X_i\}_{i=1}^n$ is defined deterministically as $\mu(X)=g(\beta_0^\top X)$ for some smooth function $g(\cdot)$. From here on out we will use $Y$ in place of $\mu(X)$ to maintain consistency with the notation for the response variable from \citeSM{xiaAdaptiveEstimationDimension2002SM}. 

We first consider the ideal case where $Y$ is known. For a working dimension $d$, we explicitly write the bandwidth $h$ as $h_d$ to make obvious the dependence and define the Nadaraya-Watson leave-one-out (NW-LOO) estimator $\hat{a}_{d0,j}$ of $Y_j$ using the estimated directions $\hat{\beta}$:
\begin{equation*}
    \hat{a}_{d0,j} := \frac{\sum_{i\neq j}^n K^{(i,j)}_{h_d} Y_i}{\sum_{i\neq j}^n K^{(i,j)}_{h_d}}, 
    \qquad K_{h_d}^{(i,j)} := K_{h_d}\{\hat{\beta}_1^\top(X_i-X_j),\dots, \hat{\beta}_d^\top(X_i-X_j)\}.
\end{equation*}
The structural dimension estimator is defined as $\hat{d} := \argmin_{0\leq d\leq p} \text{CV}_n(d; Y, X, \hat{\beta})$, where the cross-validated error is:
\begin{equation}
    \text{CV}_n(d;Y,X,\hat{\beta}) := n^{-1} \sum_{j=1}^n (Y_j-\hat{a}_{d0,j})^2, \qquad d=1,\dots,p,
    \label{eq:CV_error}
\end{equation}
and $\text{CV}_n(0) := n^{-1}\sum_{i=1}^n(Y_i-\bar{Y})^2$.

When $d < d_0$, it is clear that as $n \to \infty$, $\Prob\{\text{CV}_n(d) > \text{CV}_n(d_0)\} \to 1$ due to considering too small a subspace. When $d \geq d_0$, \citeSM{xiaAdaptiveEstimationDimension2002SM} established that for models with $\Var(\varepsilon)=\sigma^2 > 0$, the CV error expands as:
\[
    \text{CV}_n(d;Y,X,\hat{\beta}) = \sigma^2 + h^4_d J_d + \frac{\alpha_d}{nh^d_d}\{1+o_p(1)\} + O_p(n^{-1/2}+h^5_d),
\]
and
\begin{equation*}
    \begin{aligned}
    J_d &= \int\Bigg[\frac{1}{2}\,\mathrm{tr}\!\left\{\nabla^{2}g(v)\right\}+f_d(v)^{-1}\,\nabla g(v)^{\top}\nabla f_d(v)\Bigg]^2 \, f_d(v)\, dv, \\
    \alpha_d &= \E\{\E(\varepsilon^2 \mid \beta^\top X) / f_d(\beta^\top X)\} \int K^2(v_1,...,v_d) dv_1...dv_d,
    \end{aligned}
\end{equation*}
where $f_d(\cdot)$ is the marginal density of $\beta^\top X$. Under our deterministic setting where $\varepsilon=0$, the irreducible error $\sigma^2$, the variance penalty $\alpha_d$, and the $O_p(n^{-1/2})$ empirical noise fluctuation are all zero. The expression then simplifies to:
\begin{equation}
    \text{CV}_n(d; Y,X,\hat{\beta}) = h^4_d J_d + O_p(h_d^5). 
    \label{eq:CV_known_order}
\end{equation}
Notice that $J_d = O(1)$ as it does not depend on $n$, only $d$. It will also be important for $J_d>0$ for all $d\in\{d_0, d_0+1,...,p\}$, otherwise without the leading term the asymptotic monotonicity of $\text{CV}_n$ in $d$ would be ruined. This is due to the previous overfitting term containing $\alpha_d$ now being zero. Under the standard optimal bandwidth $h_d \asymp n^{-1/(d+4)}$, the leading penalty term $h^4_d$ strictly increases with $d$, and the remainder $O_p(h^5_d)$ is asymptotically negligible. Consequently, $\Prob\{\text{CV}_n(d) > \text{CV}_n(d_0)\} \to 1$ for all $d > d_0$, establishing consistency when $Y$ is known.

We now move on to our scenario for causal SDR where the true response is replaced by an estimate $\hat{Y}_i = Y_i + \Delta(X_i)$, with an estimation error bounded by $\max_i \Delta(X_i) = O_p(R_n)$ for some $R_n = o(1)$ as $n \to \infty$. By the linearity of the NW estimator, the LOO prediction using $\hat{Y}$ separates additively: $\hat{a}_{d0,j}^* = \hat{a}_{d0,j} + \bar{\Delta}_{d,j}$, where $\bar{\Delta}_{d,j}$ is the NW-LOO smoother applied solely to the error $\Delta(X_i)$:
\begin{equation*}
    \hat{a}^*_{d0,j} := \frac{\sum_{i\neq j}^n K^{(i,j)}_{h_d} \hat{Y_i}}{\sum_{i\neq j}^n K^{(i,j)}_{h_d}}=\frac{\sum_{i\neq j}^n K^{(i,j)}_{h_d} {Y_i}}{\sum_{i\neq j}^n K^{(i,j)}_{h_d}}+\frac{\sum_{i\neq j}^n K^{(i,j)}_{h_d} \Delta(X_i)}{\sum_{i\neq j}^n K^{(i,j)}_{h_d}}=\hat{a}_{d0,j} + \bar{\Delta}_{d,j}.
\end{equation*}

Substituting this into the CV criterion yields:
\begin{equation}
\begin{aligned}
   \text{CV}_n(d;\hat{Y},X,\hat{\beta}) &= n^{-1} \sum_{j=1}^n (\hat{Y}_j-\hat{a}^*_{d0,j})^2 \\ &=n^{-1}\sum_{j=1}^n \big[ (Y_j - \hat{a}_{d0,j}) + (\Delta(X_j) - \bar{\Delta}_{d,j})\big]^2 \\
   &= \text{CV}_n(d;Y,X,\hat{\beta}) + 2B_{1,d} + B_{2,d}.
\end{aligned}
\label{eq:CV_error_perturbed}
\end{equation}
Here, $B_{2,d} = n^{-1}\sum_{j=1}^n (\Delta(X_j) - \bar{\Delta}_{d,j})^2$ and $B_{1,d}$ is the cross-term. Since we have already derived the asymptotic order of the first term by \eqref{eq:CV_known_order}, it remains to bound the last two terms. 

Start with $B_{2,d}$. Because $\Delta(X)$ is uniformly bounded by $O_p(R_n)$ and the NW weights are non-negative and sum to one, 
\begin{equation*}
    |\bar{\Delta}_{d,j} | \leq \max_{i\neq j} |\Delta (X_i) | = O_p(R_n).
\end{equation*}
By the triangle inequality, the LOO residual for $\Delta(X)$ is also bounded uniformly:
\begin{equation*}
    | \Delta(X_j) - \bar{\Delta}_{d,j} | \leq |\Delta (X_j) | + |\bar{\Delta}_{d,j}|=O_p(R_n).
\end{equation*}
Then it is clear that 
\begin{equation}
    B_{2,d} = n^{-1}\sum_{j=1}^n (\Delta(X_j) - \bar{\Delta}_{d,j})^2 \leq n^{-1}\sum_{j=1}^n [O_p(R_n)]^2 =O_p(R_n^2).
    \label{eq:B2d_order}
\end{equation}

To bound $B_{1,d}$, we apply Cauchy-Schwarz and by \eqref{eq:CV_known_order} and \eqref{eq:B2d_order}:
\begin{align*}
    |B_{1,d}| &\leq \left\{ n^{-1}\sum_{j=1}^n (Y_j-\hat{a}_{d0,j})^2 \right\}^{1/2} \left\{ n^{-1}\sum_{j=1}^n (\Delta(X_j) - \bar{\Delta}_{d,j})^2 \right\}^{1/2} \\
    &= \{ \text{CV}_n(d; Y,X,\hat{\beta}) \}^{1/2} \{ B_{2,d} \}^{1/2} \\
    &= \{ O_p(h_d^4) \}^{1/2} \{ O_p(R_n^2) \}^{1/2} = O_p(R_n h_d^2).
\end{align*}
Substituting these bounds back into the expansion \eqref{eq:CV_error_perturbed}, we obtain:
\begin{align*}
    \text{CV}_n(d;\hat{Y},X,\hat{\beta}) &= h_d^4 J_d + O_p(h_d^5) + O_p(R_n h_d^2) + O_p(R_n^2).
\end{align*}

It follows that so long as $R_n=o(h_d^2)$ as given by Assumption~\ref{ass:csMAVE}(e),  $\text{CV}_n(d;\hat{Y},X,\hat{\beta})$ behaves asymptotically identically to $\text{CV}_n(d;Y,X,\hat{\beta})$, preserving the monotonicity of the error when $d \geq d_0$ and $d$ increases. This is a stronger condition than needed for mere consistency of $\hat{d}$ but we enforce it to preserve the asymptotic properties of the structural dimension estimator for the original MAVE procedure. Then $\hat{d} := \argmin_{0\leq d\leq p} \text{CV}_n(d; \hat{Y}, X, \hat{\beta})$ is consistent for $d_0$.
\end{proof}

\subsection{Proof of Proposition~\ref{prop:MAVE_asymptotic_efficient}}
\label{sec:proof_MAVE_eff}
% Proof of asymptotic efficiency of MAVE
\begin{proof}
Define the following terms:
$$
Z_i=\beta^\top X_i, \qquad \ell_i=\ell(Z_i) = \E(Y\mid Z=Z_i), \qquad e_i=Y_i-\ell_i.
$$
In addition let $\ell'(\cdot)$ be the $d$-length row vector with entries equal to the gradient of $\ell(\cdot)$ with respect to $Z$. 
Recall that the efficient score for estimating the central mean subspace is:
\begin{equation*}
S_{\mathrm{eff}}(\beta)= \frac{1}{\sigma^{2}(X)} \left[ X - \frac{\mathbb{E}(X / \sigma^{2}(X)\mid \beta ^{\top}X)}{\mathbb{E}(1 / \sigma^{2}(X)\mid \beta ^{\top}X)} \right]\ell'(\beta ^{\top}X)[Y - \ell(\beta ^{\top}X)].
\end{equation*}
Assuming the conditional variance $\sigma^2(X)$ is a fixed constant, our goal is to show that the MAVE sample objective function converges to the expected efficient score:
\begin{equation}\label{eq:expected_eff_score}
\begin{aligned}
\mathbb{E}\big\{\left[ X - \mathbb{E}(X\mid \beta ^{\top}X) \right]\ell'(\beta ^{\top}X)[Y - \ell(\beta ^{\top}X)]\big\} \\
= \mathbb{E}\big\{\left[ X - \mathbb{E}(X\mid Z) \right]\ell'(Z)e\big\}.
\end{aligned}
\end{equation}

We will focus on the refined MAVE (RMAVE) objective function which is given by:
\begin{equation}\label{eq:RMAVE_objective}
\argmin_{\beta,a_j,b_j} \sum_{i=1}^n \sum_{j=1}^n [Y_{i}-a_{j}-b_{j}^{\top}\beta ^{\top}(X_{i}-X_{j})]^{2} w_{ij}, \qquad w_{ij}=\frac{K_{h}(\beta^\top( X_{i}- X_{j}))}{\sum_{k=1}^n K_{h}(\beta^\top (X_{j}-X_{k}))}.
\end{equation}
The RMAVE estimate for $\hat{\beta}$ is more efficient than that of standard MAVE as it smooths over the lower dimensional $Z$ rather than $X$. The estimator of the CMS is then the solution to the following profiled score:
\begin{equation}
\label{eq:score_MAVE}
S_{\mathrm{MAVE}}(\beta)=n^{-1}\sum_{j=1}^n\sum_{i=1}^n w_{ij}[Y_{i}-\hat{a}_{j}-\hat{b}_{j}^{\top}\beta ^{\top}(X_{i}-X_{j})][X_{i}-X_{j}]\hat{b}_{j}^\top=0,
\end{equation}
where $\hat{a}_j=\hat{a}(Z_j)$ and $\hat{b}_j=\hat{b}(Z_j)$ are the solutions of \eqref{eq:score_MAVE} keeping $\beta$ fixed to an initial estimate fit with a separate procedure like the outer product of gradients (OPG) method \citepSM{xiaAdaptiveEstimationDimension2002SM}. Notice that \eqref{eq:score_MAVE} is the gradient of \eqref{eq:RMAVE_objective} with respect to $\beta$, but treating $w_{ij}$ as fixed constants rather than as $\beta$-dependent weights. The $w_{ij}$ should therefore be treated as fixed quantities within an update, rather than as terms to be differentiated as part of the score. More precisely, at iteration $t$, the weights have the form
$$
w_{ij}^{(t)} = w_{ij}(\widehat\beta^{(t-1)}),
$$
and the update $\widehat\beta^{(t)}$ is obtained by solving the corresponding
weighted estimating equation while treating $w_{ij}^{(t)}$ as fixed. After the
update, the weights are recomputed using $\widehat\beta^{(t)}$. Therefore, we analyze the asymptotic score at convergence of this iterative procedure. At such a point at iteration $t$, the weights and the estimate $\widehat\beta$ are evaluated at asymptotically equivalent values, so the dependence of $w_{ij}$ on
$\beta$ may be ignored in the score.

We first examine the asymptotic behavior of $\hat{a}_j$ and $\hat{b}_j$. Let $\delta_n = \left(\frac{\log n}{nh^d}\right)^{1/2}$. Under standard regularity conditions, uniform convergence of local linear regression \citepSM{fanVariableBandwidthLocal1992SM} and our assumption that $h\to 0$ and $nh^{d+2} / \log (n) \to \infty$ give us:
\begin{equation}\label{eq:A_B_bound}
\begin{aligned}
    \max_{1\leq j\leq n} \lVert A_j \rVert &=  \max_{1\leq j\leq n} \lVert \hat{a}_j - \ell(Z_j) \rVert \leq \sup_{z \in \mathcal{Z}} \lVert \hat{a}(z) - \ell(z) \rVert = O_p(h^2+ \delta_n)=o_p(1), \\
    \max_{1\leq j\leq n} \lVert B_j \rVert &= \max_{1\leq j\leq n} \lVert \hat{b}_j - \ell'(Z_j) \rVert \leq \sup_{z \in \mathcal{Z}} \lVert \hat{b}(z) - \ell(z) \rVert = O_p\left(h^2+ \frac{\delta_n}{h}\right)=o_p(1).
\end{aligned}
\end{equation}
A first-order Taylor expansion of $\ell(\cdot)$ also gives us the Lagrange remainder form, $\ell(Z_i)=\ell(Z_j)+\ell'(Z_j)(Z_i-Z_j) + O(h^2)$, which is given by bounded, continuous derivatives of $\ell(\cdot)$ up to order three. We now decompose the residual term as
\begin{align*}
\hat{r}_{ij} &= Y_{i}-\hat{a}_{j}-\hat{b}_{j}^{\top}(Z_i-Z_j)\\
    &= (\ell_i+e_i) - (\ell_j + A_j) - (\ell'_j+B_j^\top) (Z_i-Z_j) \\
    &= e_i + \big[\ell_i - \ell_j - \ell'_j(Z_i-Z_j)\big] - A_j - B_j^\top(Z_i-Z_j) \\ 
    &= e_i + \tau_{ij} - A_j - B_j^\top(Z_i-Z_j), \qquad \tau_{ij}=\ell_i-\ell_j-\ell'_j(Z_i-Z_j).
\end{align*}
Let $\rho_{ij}=\tau_{ij} - A_j - B_j^\top (Z_i-Z_j)$ so that $\hat{r}_{ij}=e_i+\rho_{ij}$. On the bounded support of $K_h(\cdot)$, it follows that $\lVert Z_i-Z_j \rVert = O(h)$. Then
\begin{equation}\label{eq:rho_bound}
\begin{aligned}
    \max_{i,j: w_{ij}>0} \|\rho_{ij}\| 
    &\leq  \max_{i,j: w_{ij}>0} \| \tau_{ij}\| + \max_{j}\|A_j\| + \max_j \|B_j\| \cdot \|Z_i-Z_j \| \\
    &\leq O(h^2) +O_p\left( h^2 + \delta_n\right) + O_p\left( h^2 + \frac{\delta_n}{h}\right)O_p(h)=O_p(h^2+\delta_n)=o_p(1).   
\end{aligned}
\end{equation}
Plugging terms back into \eqref{eq:score_MAVE}, we get the following decomposition of $S_{\mathrm{MAVE}}(\beta)$:
\begin{equation*}
\begin{aligned}
S_{\mathrm{MAVE}}(\beta)
&=
\frac{1}{n}\sum_{j=1}^n\sum_{i=1}^n
w_{ij}(X_i-X_j)\hat{b}_j\hat{r}_{ij} 
=
\frac{1}{n}\sum_{j=1}^n\sum_{i=1}^n
w_{ij}(X_i-X_j)
\{\ell'_j+B_j^\top\}\{e_i+\rho_{ij}\} \\
&=
\frac{1}{n}\sum_{j=1}^n\sum_{i=1}^n
w_{ij}(X_i-X_j)\ell'_je_i \\
&\quad+
\frac{1}{n}\sum_{j=1}^n\sum_{i=1}^n
w_{ij}(X_i-X_j)B_j^\top e_i \\
&\quad+
\frac{1}{n}\sum_{j=1}^n\sum_{i=1}^n
w_{ij}(X_i-X_j)\ell'_j\rho_{ij} \\
&\quad+
\frac{1}{n}\sum_{j=1}^n\sum_{i=1}^n
w_{ij}(X_i-X_j)B_j^\top\rho_{ij} \\
&=: T_{0n}+T_{1n}+T_{2n}+T_{3n}.
\end{aligned}
\end{equation*}

Since $X$ has bounded support and $\E(|e_i|^3)<\infty$, it follows that
\begin{equation}\label{eq:weighted_sum_constant}
    \begin{aligned}
        \frac{1}{n} \sum_{j=1}^n \sum_{i=1}^n w_{ij} ( X_i-X_j) &= O_p(1), \\
        \frac{1}{n} \sum_{j=1}^n \sum_{i=1}^n w_{ij} e_i( X_i-X_j) &= O_p(1).
    \end{aligned}
\end{equation}
Combining \eqref{eq:weighted_sum_constant} with Assumption C8 in Section~\ref{sec:supp_MAVE_assumption} (which implies that $\ell'_j$ is bounded), together with the vanishing bounds on $B_j$ \eqref{eq:A_B_bound} and $\rho_{ij}$ \eqref{eq:rho_bound}, it follows that $T_{1n}$, $T_{2n}$, and $T_{3n}$ are all $o_p(1)$. Thus, we have reduced the MAVE score to its leading term $T_{0n}$ and a set of asymptotically negligible terms. Writing $\ell_j$ now as $\ell(Z_j)$ to make the dependence on $Z$ clear, it follows that:
$$
S_{\mathrm{MAVE}}(\beta)
=
\frac{1}{n}\sum_{j=1}^n\sum_{i=1}^n
w_{ij}(X_i-X_j)\ell'(Z_j)e_i
+o_p(1).
$$
Collecting the terms inside the inner sum over $i$:
\begin{align}\label{eq:Smave_innersum}
S_{\mathrm{MAVE}}(\beta)
&=
\frac{1}{n}\sum_{j=1}^n
\left[
\sum_{i=1}^n w_{ij}X_i e_i
-
X_j\sum_{i=1}^n w_{ij}e_i
\right]\ell'(Z_j)
+o_p(1).
\end{align}
The two weighted sums are Nadaraya--Watson estimators evaluated at $Z_j$. Under uniform convergence for local linear smoothers:
\begin{align*}
\sum_{i=1}^n w_{ij}X_i e_i
&=
\E(Xe\mid Z=Z_j)+r_{0n,j}, \qquad \max_{1\leq j\leq n} |r_{0n,j}| = o_p(1), \\
\sum_{i=1}^n w_{ij}e_i
&=
\E(e\mid Z=Z_j)+r_{1n,j}, \qquad \max_{1\leq j\leq n} |r_{1n,j}| = o_p(1).
\end{align*}
Since $\E(e\mid Z)=0$, the second term of the inner sum in \eqref{eq:Smave_innersum} is asymptotically negligible. By the law of large numbers, 
\begin{align*}
S_{\mathrm{MAVE}}(\beta)
&=
\frac{1}{n}\sum_{j=1}^n
\E(Xe\mid Z=Z_j)\ell'(Z_j)
+o_p(1) \\
&=
\E\left\{\E(Xe\mid Z)\ell'(Z)\right\}
+o_p(1),
\end{align*}

To resolve the $\E(Xe\mid Z)$ term, we again use the fact that $\E(e\mid Z)=0$: 
\begin{align*}
\E(Xe\mid Z)
&=
\E\left[\{X-\E(X\mid Z)\}e\mid Z\right]
+
\E(X\mid Z)\E(e\mid Z) \\
&=
\E\left[\{X-\E(X\mid Z)\}e\mid Z\right].
\end{align*}
Finally, plugging back into the score and using the tower rule:
\begin{align*}
S_{\mathrm{MAVE}}(\beta)
&=
\E\left[
\E\left\{\{X-\E(X\mid Z)\}e\mid Z\right\}\ell'(Z)
\right]
+o_p(1) \\
&=
\E\left[
\{X-\E(X\mid Z)\}\ell'(Z)e
\right]
+o_p(1).
\end{align*}
which converges to the expected efficient score in \eqref{eq:expected_eff_score}  as $n\to\infty$.
\end{proof}

\clearpage

\section{Reproducibility of SDR Efficient Score}\label{sec:reproducing_sdr}

Because our causal SDR method relies on both the association-based MAVE \citepSM{xiaAdaptiveEstimationDimension2002SM} and efficient score \citepSM{maEstimationEfficiencyCentral2014SM,luoNewEstimatorEfficient2016SM} for estimating $\CMS$, we conducted a replication study to verify the correctness of our implementation. For MAVE, we used the \texttt{R} package \texttt{MAVE} \citepSM{XiaHang2025MAVESM}, written by the first author of the original MAVE paper. For the efficient score–based one-step Newton–Raphson estimator, we implemented the algorithm from scratch. We first explain why we do not adopt the approach of \citeSM{maEstimationEfficiencyCentral2014SM}, and then justify our use of the method proposed by \citeSM{luoNewEstimatorEfficient2016SM}, including a replication of their simulation results. The software implementation of the Newton–Raphson estimator is available at \url{https://github.com/tXiao95/causalpca/blob/main/R/Seff.R}.

\subsection{Implementation details}

We use one-step estimation \citepSM{vandervaartAsymptoticStatistics1998SM} where the initial estimate is from MAVE and the updates are from the efficient score $S_\text{eff}$ with $\hat{S}_{\text{eff}}$ indicating the version with plug-in nuisances. Using the Newton--Raphson algorithm and iterating until convergence, for any $k$, with $\E_n$ denoting the empirical mean and $\dagger$ the Moore-Penrose generalized inverse, we obtain
\begin{equation*}
    \text{vec}(\hat{\beta}^{(k+1)}) = \text{vec}(\hat{\beta}^{(k)}) + \E_n^{\dagger}\left[\text{vec}\{\hat{S}_{\text{eff}}(\hat{\beta}^{(k)})\}^{\otimes2}\right] \E_n\left[\text{vec}\{\hat{S}_{\text{eff}}(\hat{\beta}^{(k)})\}\right].
\end{equation*}
We define convergence as when $\left\lVert P_{\hat{\beta}^{(k)}}-P_{\hat{\beta}^{(k+1)}}\right\rVert_2 < p / n$ or the number of iterations reaches $100$, closely following the stopping rule established by \citeSM{luoNewEstimatorEfficient2016SM}.

\subsection{Description of Ma and Zhu (2014)}

To develop asymptotic theory for the basis matrix $\beta$, \citeSM{maEstimationEfficiencyCentral2014SM} impose an identifiability constraint in which the top $d \times d$ block of $\beta \in \mathbb{R}^{p \times d}$ is fixed to the identity matrix $I_d$. While this constraint ensures uniqueness in the remainder of $\beta$, it requires prior knowledge of which $d$ exposures are linearly independent in the central mean subspace (CMS). If the selected independent exposures are incorrect, estimation of the remaining entries of $\beta$ may be invalid. Therefore this procedure is not robust to the ordering of the original $p$ variables in $X$.

 Due to the reliance on this identifiability condition, we do not adopt their efficient score approach. However, to understand the numerical performance of MAVE relative to the efficient score, we still examine their simulation results for ``Scenario 1," a nonlinear, heteroskedastic variance data generating process which compares MAVE to three implementations of their efficient score, targeting the $(p-d)\times d$ free parameters: 1) a constant variance model (EE1), 2) parametric model (EE2) and 3) semiparametric model (EE3) for $\sigma^2(X)$. (Table~\ref{tab:ma_zhu_sim_results}). The true values (Truth) and efficient score using known $\sigma^2(X)$ (Oracle) are also included for reference. The EE methods showed only modest improvements over MAVE. Although they tend to yield slightly smaller standard errors, MAVE attains lower mean error across all entries in this setting. The remaining simulation scenarios reported in their study (not shown here) report larger gains for EE, but the overall pattern is consistent. In their empirical studies, EE provides only modest improvements relative to MAVE.

\begin{table}[!htbp]
\centering
\caption{Abridged Table 1 from Ma and Zhu (2014) for Scenario 1.}
\label{tab:ma_zhu_sim_results}
\begin{tabular}{lccccc}
\toprule
Method & $\beta_1$ & $\beta_2$ & $\beta_3$ & $\beta_4$ & $\beta_5$ \\
\midrule

Truth 
& 0.553 & 1.106 & 1.032 & 0.885 & 0.590 \\

\midrule

Oracle 
& 0.563 (0.041) & 1.129 (0.052) & 1.054 (0.048) & 0.902 (0.046) & 0.601 (0.040) \\

EE1 
& 0.555 (0.048) & 1.113 (0.063) & 1.040 (0.057) & 0.888 (0.055) & 0.592 (0.048) \\

EE2 
& 0.563 (0.041) & 1.131 (0.053) & 1.056 (0.048) & 0.903 (0.047) & 0.602 (0.041) \\

EE3 
& 0.564 (0.040) & 1.131 (0.051) & 1.056 (0.047) & 0.903 (0.045) & 0.602 (0.040) \\

MAVE 
& 0.551 (0.043) & 1.104 (0.055) & 1.033 (0.055) & 0.883 (0.048) & 0.590 (0.043) \\

\bottomrule
\end{tabular}
\end{table}

\subsection{Reproducing EE from Luo and Cai (2016)}

Unlike \citeSM{maEstimationEfficiencyCentral2014SM}, \citeSM{luoNewEstimatorEfficient2016SM} report substantial improvements in CMS estimation when relying on the efficient score. However, they do not impose an identifiability constraint and instead estimate $\sigma^2(X)$ in the score via kernel smoothing. Because this approach is invariant to the ordering of $X$, we adopt it for the Newton–Raphson procedure. 

To verify correctness of our implementation, we sought to replicate Table 1 from \citeSM{luoNewEstimatorEfficient2016SM}. The comparison focuses on MAVE, EE4: the efficient estimator using bandwidth $h_1 = n^{-1/4p}$ for kernel estimation of $\sigma^2(X)$, and the Oracle estimator, which substitutes the true $\sigma^2(X)$ into the efficient score. All simulations were conducted under the specifications in Section 4 of the original paper, including implementation details such as stopping criteria and bandwidth selection. Because the paper does not clearly specify whether the reported 2-norm corresponds to the Frobenius or spectral norm, we include both in Table~\ref{tab:luo_cai_replication}.

\begin{table}[!t]
\centering
\caption{Original Table 1 from Luo and Cai (2016) and our replication attempt under both Frobenius and spectral norm. Entries are mean (SD) across simulation runs.}
\label{tab:luo_cai_replication}
\begin{tabular}{lccc}
\toprule
Model & MAVE & EE4 & Oracle \\
\midrule

\multicolumn{4}{l}{\textbf{Panel A: Original paper (Table 1)}} \\
\midrule
I   & 0.282 (0.060) & 0.298 (0.066) & 0.281 (0.052) \\
II  & 0.471 (0.109) & 0.291 (0.048) & 0.218 (0.082) \\
III & 0.117 (0.057) & 0.077 (0.038) & 0.056 (0.029) \\
IV  & 0.398 (0.358) & 0.164 (0.132) & 0.168 (0.140) \\
V   & 0.653 (0.420) & 0.193 (0.141) & 0.150 (0.132) \\

\addlinespace
\midrule
\multicolumn{4}{l}{\textbf{Panel B: Replication Study (Frobenius norm)}} \\
\midrule
I   & 0.152 (0.035) & 0.165 (0.038) & 0.156 (0.037) \\
II  & 0.063 (0.017) & 0.063 (0.017) & 0.057 (0.017) \\
III & 0.507 (0.116) & 0.494 (0.114) & 0.458 (0.369) \\
IV  & 0.849 (0.282) & 0.837 (0.285) & 0.773 (0.292) \\
V   & 1.252 (0.243) & 1.239 (0.255) & 1.016 (0.345) \\

\addlinespace
\midrule
\multicolumn{4}{l}{\textbf{Panel C: Replication Study (Spectral norm)}} \\
\midrule
I   & 0.108 (0.025) & 0.117 (0.027) & 0.110 (0.026) \\
II  & 0.044 (0.012) & 0.045 (0.012) & 0.041 (0.012) \\
III & 0.316 (0.081) & 0.308 (0.080) & 0.286 (0.213) \\
IV  & 0.541 (0.200) & 0.535 (0.201) & 0.496 (0.205) \\
V   & 0.795 (0.166) & 0.789 (0.173) & 0.666 (0.239) \\

\bottomrule
\end{tabular}
\end{table}
In the original study, Model I is the only setting with constant variance, under which MAVE is expected to be efficient (Proposition~\ref{prop:MAVE_asymptotic_efficient}), whereas EE4 and Oracle should outperform MAVE in the rest of the models as they exhibit heteroscedastic variance. This pattern is clearly reflected in the original simulation results in Panel A. However, we were unable to reproduce their findings. In our implementation, EE4 and Oracle made only modest reductions in the error reported in the original paper. We have confirmed with the original authors that their code is no longer available.

Despite this discrepancy, two observations provide reassurance regarding the correctness of our implementation. First, Model I is the only setting in which MAVE outperforms EE4 and Oracle. In all other models, EE4 and Oracle uniformly outperform MAVE, although the magnitude of improvement is modest compared to the original study. Second, the Oracle estimator consistently achieves the lowest error out of the three methods, with more noticeable gains in Models IV and V. These patterns align with theoretical expectations. The primary difference from the original study lies in the magnitude of improvement attributed to EE4 and Oracle and the scale of the errors.

A plausible explanation is that differences arise from the implementation of the initial MAVE estimator rather than the Newton–Raphson update itself. The \texttt{MAVE} R package used in our study was released in 2025, well before the publication of  \citeSM{luoNewEstimatorEfficient2016SM}. The specific implementation details for both MAVE and the efficient score used in the original paper are unknown and may differ. We do not know whether such differences are sufficient to explain the observed discrepancy in CMS estimation error. Our observed results are closer to the simulations from \citeSM{maEstimationEfficiencyCentral2014SM}, where EE only led to modest improvements in estimation error over MAVE, even under heteroskedastic variance. We include this replication study to promote transparency and provide support for our decision to use MAVE.

\clearpage

\section{Additional Simulation Results}\label{sec:additional_sim}

\subsection{MAVE vs. EE estimation}\label{sec:sim_mave_vs_ee}

\begin{table}[!htbp]
\centering
\caption{MAVE vs. EE estimation error for $\CCMS$.}
\label{tab:mave_vs_ee_sim}
% --- Start Resizebox ---
\resizebox{\textwidth}{!}{
\begin{tabular}{ll ccccc}
\toprule
 & & \multicolumn{5}{c}{Method} \\
\cmidrule(lr){3-7}
Sample Size & Estimator & Oracle & RA & DR & PO & RP \\
\midrule
\multirow{2}{*}{$n = 100$} 
& MAVE & 0.216 (0.136) & 1.343 (0.341) & 1.537 (0.267) & 1.487 (0.318) & 1.768 (0.218) \\
& EE   & 0.635 (0.261) & 1.352 (0.332) & 1.543 (0.262) & 1.491 (0.315) & 1.776 (0.217) \\
\addlinespace % Elegant vertical gap instead of a hard line

\multirow{2}{*}{$n = 500$} 
& MAVE & 0.030 (0.010) & 0.447 (0.392) & 0.568 (0.369) & 0.663 (0.450) & 1.413 (0.259) \\
& EE   & 0.379 (0.131) & 0.463 (0.384) & 0.589 (0.361) & 0.657 (0.445) & 1.443 (0.254) \\
\addlinespace

\multirow{2}{*}{$n = 1000$} 
& MAVE & 0.015 (0.005) & 0.223 (0.288) & 0.303 (0.257) & 0.432 (0.401) & 1.192 (0.240) \\
& EE   & 0.316 (0.107) & 0.232 (0.282) & 0.323 (0.253) & 0.412 (0.393) & 1.226 (0.237) \\
\addlinespace

\multirow{2}{*}{$n = 2500$} 
& MAVE & 0.006 (0.002) & 0.073 (0.157) & 0.140 (0.151) & 0.287 (0.345) & 0.913 (0.255) \\
& EE   & 0.254 (0.086) & 0.080 (0.153) & 0.162 (0.148) & 0.246 (0.333) & 0.952 (0.249) \\
\addlinespace

\multirow{2}{*}{$n = 5000$} 
& MAVE & 0.003 (0.001) & 0.026 (0.035) & 0.078 (0.058) & 0.221 (0.293) & 0.676 (0.285) \\
& EE   & 0.219 (0.070) & 0.031 (0.035) & 0.100 (0.061) & 0.165 (0.280) & 0.697 (0.271) \\
\bottomrule
\end{tabular}
} % --- End Resizebox ---
\end{table}

Table~\ref{tab:mave_vs_ee_sim} compares the error in estimating $\CCMS$ between MAVE and the efficient one-step estimator (EE) in the main simulation study described in Section~\ref{sec:sims}. Since the error was not meaningfully reduced for any of the causal SDR methods, and often even made the Oracle method worse, we opted to use MAVE in our implementation. The consistently worse error for EE applied to the Oracle first-stage estimator may stem from the deterministic nature of $\mu(x)$, under which $\mathrm{Var}(\mu(X)\mid X=x)$ is degenerate and therefore not well-defined for estimation. The one causal SDR variant where improvements were made is PO, where both mean error and SD decreased, possibly due to the nonzero variance present in the pseudo-outcomes. All other variants appear to have slightly higher error after the update with no meaningful change. This echoes what we observed in our effort to reproduce the association-based SDR results of \citeSM{maEstimationEfficiencyCentral2014SM} and \citeSM{luoNewEstimatorEfficient2016SM} in Appendix~\ref{sec:reproducing_sdr}. Therefore, throughout the main text and Appendix we only report the results of MAVE estimation for causal SDR. 

\clearpage

\subsection{DGP with additive confounding}\label{subsec:add_sim_add_confound}

We simulated the same DGP described in the main experiment \eqref{eq:main_sim_dgp} under an additive confounding assumption (Assumption~\ref{ass:additive_confounding}) where $f(Z,C)=0$ to give an example where the assumptions underlying the RP variant of causal SDR were met. Table~\ref{tab:frob_error_additive} reports the error in estimating $\CCMS$ under the additive DGP in a format similar to Table~\ref{tab:frob_error_dhat}. Under this setting, RP now estimates $\CCMS$ more accurately than the baseline methods, unlike in the main experiment with interactions, where its error was much larger. However, RP still does not outperform the RA, DR, and PO variants (only does better than PO at $n=2500$ and beyond). Although RP avoids the need to estimate $\mu(X)$, it still requires correctly estimating all $p+1$ regressions, which remains nontrivial even when each regression is lower-dimensional in the $q$ covariates.

% Frobenius error table
\begin{table}[!htbp]
\centering
\caption{Frobenius norm error between estimated and true subspace using $\hat{d}$ under additive confounding. The RP variant is highlighted in bold.}
\label{tab:frob_error_additive}
\resizebox{\textwidth}{!}{
\begin{tabular}{llccccc}
\toprule
Category & Method & $n = 100$ & $n = 500$ & $n = 1000$ & $n = 2500$ & $n = 5000$ \\ 
\midrule
\multirow{3}{*}{Baseline} 
& PCA         & 1.997 (0.002) & 1.997 (0.001) & 1.997 (0.001) & 1.997 (0.000) & 1.997 (0.000) \\
& pCCA        & 1.300 (0.126) & 1.142 (0.179) & 0.988 (0.205) & 0.781 (0.202) & 0.604 (0.174) \\
& MAVE        & 1.415 (0.216) & 1.159 (0.089) & 1.100 (0.043) & 1.067 (0.021) & 1.053 (0.011) \\
\midrule
\multirow{5}{*}{CSDR} 
& Oracle & 0.216 (0.136) & 0.030 (0.010) & 0.015 (0.005) & 0.006 (0.002) & 0.003 (0.001) \\
& RA     & 1.048 (0.340) & 0.238 (0.235) & 0.105 (0.137) & 0.037 (0.070) & 0.018 (0.036) \\
& DR     & 1.274 (0.295) & 0.303 (0.211) & 0.160 (0.128) & 0.083 (0.074) & 0.054 (0.037) \\
& PO     & 1.228 (0.356) & 0.399 (0.349) & 0.272 (0.339) & 0.191 (0.283) & 0.160 (0.253) \\
& \textbf{RP}     & \textbf{1.534 (0.258)} & \textbf{0.654 (0.263)} & \textbf{0.330 (0.136)} & \textbf{0.179 (0.046)} & \textbf{0.120 (0.029)} \\
\bottomrule
\end{tabular}
}
\end{table}

Table~\ref{tab:ers_rmse_additive} summarizes the RMSE for estimating the ERS at the 100 evaluation points under additive confounding, using the same format as Table~\ref{tab:rmse_ers}. RP again shows marked improvement. However, it still underperforms relative to the RA and DR variants, and even PO, which it outperformed at large $n$ in estimating $\CCMS$ back in Table~\ref{tab:frob_error_additive}.

\begin{table}[!htbp]
\centering
\caption{Mean (SD) RMSE for ERS estimation across 100 evaluation points under additive confounding. The RP variant is highlighted in bold.}
\label{tab:ers_rmse_additive}
\resizebox{\textwidth}{!}{
\begin{tabular}{llccccc}
\toprule
Category & Method & $n = 100$ & $n = 500$ & $n = 1000$ & $n = 2500$ & $n = 5000$ \\
\midrule
\multirow{3}{*}{Baseline}
& pCCA   & 9.20 (5.44) & 9.57 (4.97) & 8.57 (4.82) & 5.85 (3.68) & 4.18 (2.58) \\
& MAVE   & 6.98 (3.85) & 2.94 (0.90) & 2.43 (0.45) & 2.14 (0.29) & 2.04 (0.21) \\
& Full $X$ & 6.97 (3.87) & 3.50 (0.92) & 2.71 (0.50) & 2.09 (0.31) & 1.87 (0.22) \\
\midrule
\multirow{5}{*}{CSDR}
& Oracle & 3.87 (3.02) & 1.12 (0.70) & 0.79 (0.38) & 0.56 (0.28) & 0.44 (0.28) \\
& RA     & 4.78 (3.14) & 1.32 (0.73) & 0.86 (0.47) & 0.58 (0.35) & 0.45 (0.26) \\
& DR     & 5.36 (3.41) & 1.35 (0.67) & 0.88 (0.42) & 0.59 (0.30) & 0.45 (0.24) \\
& PO     & 5.28 (3.48) & 1.45 (0.79) & 0.97 (0.51) & 0.67 (0.39) & 0.53 (0.42) \\
& \textbf{RP}     & \textbf{8.47 (5.13)} & \textbf{2.92 (1.23)} & \textbf{1.82 (0.67)} & \textbf{1.10 (0.38)} & \textbf{0.78 (0.31)} \\
\bottomrule
\end{tabular}
}
\end{table}

\subsection{Super Learner for ERS estimation}\label{subsec:add_sim_superlearner}

Table~\ref{tab:rmse_ers} and Figure~\ref{fig:sim_pointwise_coverage} showed that the Full $X$ method has higher error and lower coverage in estimating the ERS compared to the causal SDR methods. To assess the extent to which this performance gap is driven by the choice of nuisance function estimator, we repeated the simulation experiment from Section~\ref{sec:sims}, replacing the neural network used in the second-stage ERS estimation with a SuperLearner ensemble comprised of linear regression (\texttt{glm}), elastic net (\texttt{glmnet}), generalized additive models (\texttt{gam}), multivariate adaptive regression splines (\texttt{earth}), gradient boosted trees (\texttt{xgboost}), and random forests (\texttt{ranger}).

\begin{table}[!htbp]
\centering
\caption{Mean (SD) RMSE for ERS estimation across simulation runs using SuperLearner ensemble in place of neural network for second-stage. The focus, Full $X$, is bolded. Full $X$ (new) refers to ERS estimation using the SuperLearner ensemble, whereas Full $X$ (old) is the implementation from the main text using the neural network.}
\label{tab:ers_rmse_superlearner}
\resizebox{\textwidth}{!}{
\begin{tabular}{llccccc}
\toprule
Category & Method & $n = 100$ & $n = 500$ & $n = 1000$ & $n = 2500$ & $n = 5000$ \\
\midrule
\multirow{4}{*}{Baseline}
& pCCA     & 14.28 (9.87) & 9.11 (6.61) & 7.71 (4.30) & 6.30 (3.35) & 5.27 (3.02) \\
& MAVE     & 10.88 (7.16) & 5.66 (2.50) & 4.48 (1.45) & 3.75 (1.08) & 3.48 (0.85) \\
& \textbf{Full $X$ (new)} & \textbf{7.24 (6.08)} & \textbf{3.26 (2.02)} & \textbf{2.34 (1.19)} & \textbf{1.67 (0.61)} & \textbf{1.35 (0.32)} \\
& \textbf{Full $X$ (old)} & \textbf{9.36 (5.86)} & \textbf{3.98 (1.15)} & \textbf{3.01 (0.70)} & \textbf{2.35 (0.38)} & \textbf{2.12 (0.26)} \\
\midrule
\multirow{5}{*}{CSDR}
& Oracle & 5.80 (5.16) & 2.40 (1.55) & 1.61 (0.74) & 1.06 (0.61) & 0.79 (0.57) \\
& RA     & 8.31 (6.77) & 2.89 (1.73) & 1.85 (0.94) & 1.13 (0.69) & 0.79 (0.56) \\
& DR     & 9.43 (6.81) & 2.93 (1.73) & 1.88 (1.00) & 1.14 (0.66) & 0.81 (0.55) \\
& PO     & 9.02 (6.86) & 3.07 (1.75) & 2.02 (1.08) & 1.25 (0.79) & 0.90 (0.69) \\
& RP     & 14.63 (10.60) & 7.13 (4.76) & 4.87 (2.51) & 3.20 (1.28) & 2.54 (1.05) \\
\bottomrule
\end{tabular}
}
\end{table}
In general, using the SuperLearner ensemble in place of the neural network resulted in higher ERS estimation error and undercoverage for the causal SDR methods. In contrast, it led to improved error (Table~\ref{tab:ers_rmse_superlearner}) and substantially improved coverage (Figure~\ref{fig:sim_pointwise_coverage_SLxgboost}) for the baseline methods, including Full $X$. Their performance was still inferior to that of the causal SDR approaches. This indicates that although the choice of nuisance estimator affects both accuracy and coverage, the advantages of causal SDR persist across choice of flexible learners, reflecting gains in estimation efficiency and stability beyond what can be achieved through choice of nuisance model alone. 
\begin{figure}[!htbp]
    \centering
    \includegraphics[width=0.99\textwidth]{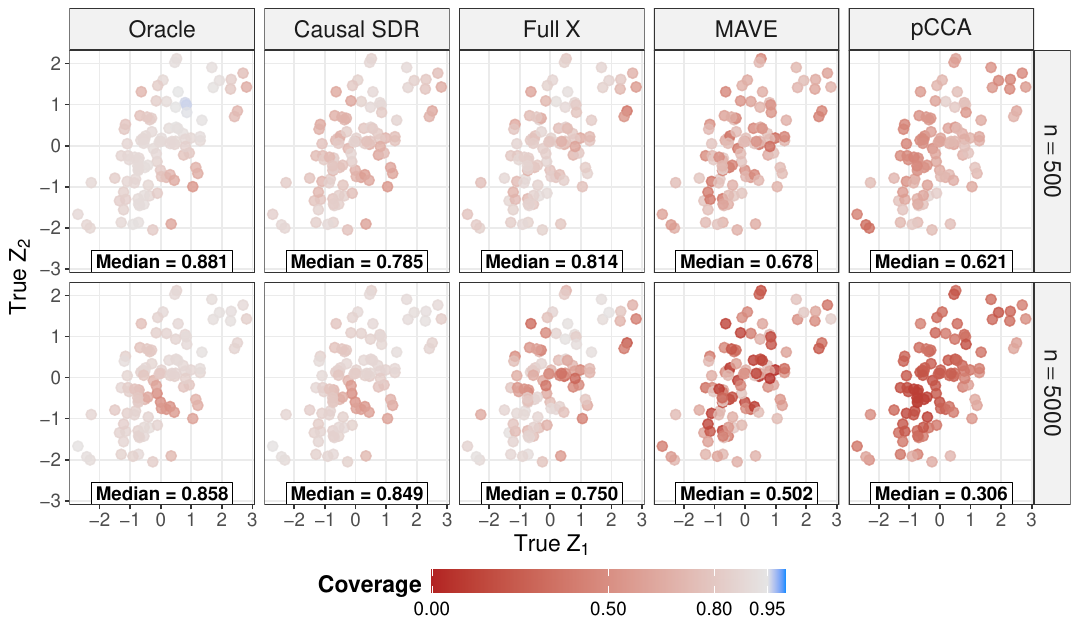}
    \caption{Pointwise 95\% coverage of ERS estimates across 100 evaluation points under SuperLearner second-stage estimation instead of neural network.}
    \label{fig:sim_pointwise_coverage_SLxgboost}
\end{figure}

\subsection{DGP with Nonsparse Dimension Reduction}\label{subsec:add_sim_nonsparse}
In the simulation from the main text in Section~\ref{sec:sims}, our $\beta$ was defined as:
$$
\beta_{\text{}}^\top = 
\begin{bmatrix}
1 & 0 & 0 & 0 & 0 & 0 & 0 & 0 & 0 & 0 \\
0 & 1 & 0 & 0 & 0 & 0 & 0 & 0 & 0 & 0
\end{bmatrix},
$$
where only $X_1$ and $X_2$ contribute to the CCMS. To evaluate causal SDR under a nonsparse dimension reduction where more of the original variables lie in $\CCMS$, we kept the original simulation design in \eqref{eq:main_sim_dgp} but changed $\beta$ to: 
$$
\beta_{new}^\top = 
\begin{bmatrix}
0.5 & 0 & 0.5 & 0 & 0 & 0 & 0.5 & 0 & 0.5 & 0 \\
0 & 0.5 & 0 & 0.5 & 0 & 0 & 0 & 0.5 & 0 & 0.5
\end{bmatrix}.
$$

Here, all of the original variables with the exception of $X_5$ and $X_6$ contribute equally to the CCMS. We then used the exact same learners in the first and second stage to estimate $\mu(X)$ and $\mu_\beta(z)$. However, while causal SDR still outperformed the alternatives, we found the benefit of causal SDR to be much less pronounced compared to the main simulation, and the error in estimating both $\CCMS$ and the ERS to be much higher (results not shown). Our hypothesis was that once more $X$ variables are required for estimation of $\mu(X)$, estimation becomes a greater challenge especially for a highly nonlinear DGP with several interactions. 

To investigate whether the problem was due to slow convergence of the first stage estimator, we moved from the simple SuperLearner ensemble with \texttt{SL.glm}, \texttt{SL.glmnet}, and \texttt{SL.earth} to a \texttt{torch} neural network architecture featuring \textit{three} hidden layers ($128 \times 64 \times 32$ nodes) trained for 1000 epochs with an initial learning rate of 0.01 and a step-decay scheduler that halved the learning rate every 200 epochs. This architecture was also used to estimate $\mu_\beta(z)$ in the second stage. Since the main simulation already showed the RP variant does not perform well under this DGP due to interactions between $Z$ and $C$, we did not include it in this simulation experiment. 

% Errors were higher 
The errors in estimating $\CCMS$ under nonsparse $\beta$ are shown in Table~\ref{tab:frob_error_nonsparse}. Overall, errors are higher for all methods compared to Table~\ref{tab:frob_error_dhat}. The pCCA method no longer shows improvement with increasing $n$. The Oracle variant also exhibits substantially higher error, starting at $1.019$ at $n=100$, compared to $0.216$ in the main simulation. Although the error for the causal SDR methods decreases as $n$ approaches $5000$, the rate of improvement is slower than in the sparse $\beta$ setting.

% Frobenius error table
\begin{table}[!tbp]
\centering
\caption{Frobenius norm error between estimated and true subspace evaluated at $\hat{d}$ under the nonsparse dimension reduction.}
\label{tab:frob_error_nonsparse}
\resizebox{\textwidth}{!}{
\begin{tabular}{llccccc}
\toprule
Category & Method & $n = 100$ & $n = 500$ & $n = 1000$ & $n = 2500$ & $n = 5000$ \\ 
\midrule
\multirow{3}{*}{Baseline} 
& PCA   & 1.647 (0.027) & 1.650 (0.012) & 1.651 (0.008) & 1.651 (0.005) & 1.651 (0.004) \\
& pCCA  & 1.740 (0.105) & 1.741 (0.104) & 1.737 (0.103) & 1.734 (0.107) & 1.722 (0.109) \\
& MAVE  & 1.578 (0.221) & 1.317 (0.203) & 1.265 (0.184) & 1.205 (0.152) & 1.175 (0.118) \\
\midrule
\multirow{4}{*}{CSDR} 
& Oracle & 1.019 (0.077) & 0.230 (0.287) & 0.062 (0.022) & 0.025 (0.009) & 0.013 (0.004) \\
& RA     & 1.324 (0.190) & 0.982 (0.127) & 0.718 (0.373) & 0.143 (0.034) & 0.114 (0.024) \\
& DR     & 1.456 (0.228) & 1.000 (0.103) & 0.810 (0.329) & 0.156 (0.046) & 0.122 (0.025) \\
& PO     & 1.448 (0.232) & 1.027 (0.146) & 0.813 (0.349) & 0.256 (0.250) & 0.201 (0.216) \\
\bottomrule
\end{tabular}
}
\end{table}

Structural dimension estimation was also more challenging under nonsparse $\beta$ (Table~\ref{tab:dim_selection_nonsparse}).  The RA, DR, and PO variants each incorrectly select $d=1$  over $80\%$ of the time at $n=500$, and the Oracle correctly identifies $d_0=2$ in only $88\%$ of simulations (compared to $100\%$ in the sparse $\beta$ case). However, all causal SDR methods eventually recover the correct dimension with increasing $n$, consistent with the theoretical results. Under the nonsparse $\beta$, MAVE also eventually selects the correct dimension.

% Dimension selection
\begin{table}[!htbp]
\centering
\caption{Dimension selection proportions under nonsparse dimension reduction. The true dimension ($d_0 = 2$) is highlighted in bold.}
\label{tab:dim_selection_nonsparse}
\begin{tabular}{ll c c >{\bfseries}c c c c c}
\toprule
\cmidrule(lr){3-9}
Sample Size & Method & $d=0$ & $d=1$ & $d=2$ & $d=3$ & $d=4$ & $d=5$ & $d=6$ \\ 
\midrule
\multirow{5}{*}{$n = 500$} 
& MAVE   & 0.00 & 0.24 & 0.66 & 0.09 & 0.01 & 0.00 & 0.00 \\
& Oracle & 0.00 & 0.12 & 0.88 & 0.00 & 0.00 & 0.00 & 0.00 \\
& RA     & 0.00 & 0.92 & 0.08 & 0.00 & 0.00 & 0.00 & 0.00 \\
& DR     & 0.00 & 0.95 & 0.05 & 0.00 & 0.00 & 0.00 & 0.00 \\
& PO     & 0.01 & 0.84 & 0.14 & 0.01 & 0.00 & 0.00 & 0.00 \\
\midrule
\multirow{5}{*}{$n = 5000$} 
& MAVE   & 0.00 & 0.00 & 1.00 & 0.00 & 0.00 & 0.00 & 0.00 \\
& Oracle & 0.00 & 0.00 & 1.00 & 0.00 & 0.00 & 0.00 & 0.00 \\
& RA     & 0.00 & 0.00 & 1.00 & 0.00 & 0.00 & 0.00 & 0.00 \\
& DR     & 0.00 & 0.00 & 1.00 & 0.00 & 0.00 & 0.00 & 0.00 \\
& PO     & 0.00 & 0.01 & 0.98 & 0.00 & 0.00 & 0.00 & 0.00 \\
\bottomrule
\end{tabular}
\end{table}

While MAVE correctly selects $d=2$, the SIS values in Figure~\ref{fig:sim_SIS_nonsparse} show that it does not correctly recover the variable importance structure. It places most weight on $X_2$ and $X_3$, rather than distributing weight equally across $X_1$–$X_4$ and $X_7$–$X_{10}$. Scaled PCA and pCCA primarily emphasize $X_1$ and $X_2$. In contrast, the causal SDR methods accurately capture the true contribution of the original variables to $\CCMS$.

% ERS estimation harder
ERS estimation errors are also higher in this setting (Table~\ref{tab:ers_rmse_nonsparse}). The pCCA method has especially high errors, consistent with its poor estimation of $\CCMS$. The overall ranking mirrors the main simulation. Oracle performs best, followed by RA, DR, and PO, with Full $X$ and then baseline methods trailing.

% SIS Plot (?)
\begin{figure}[!t]
    \centering
    \includegraphics[width=1\textwidth]{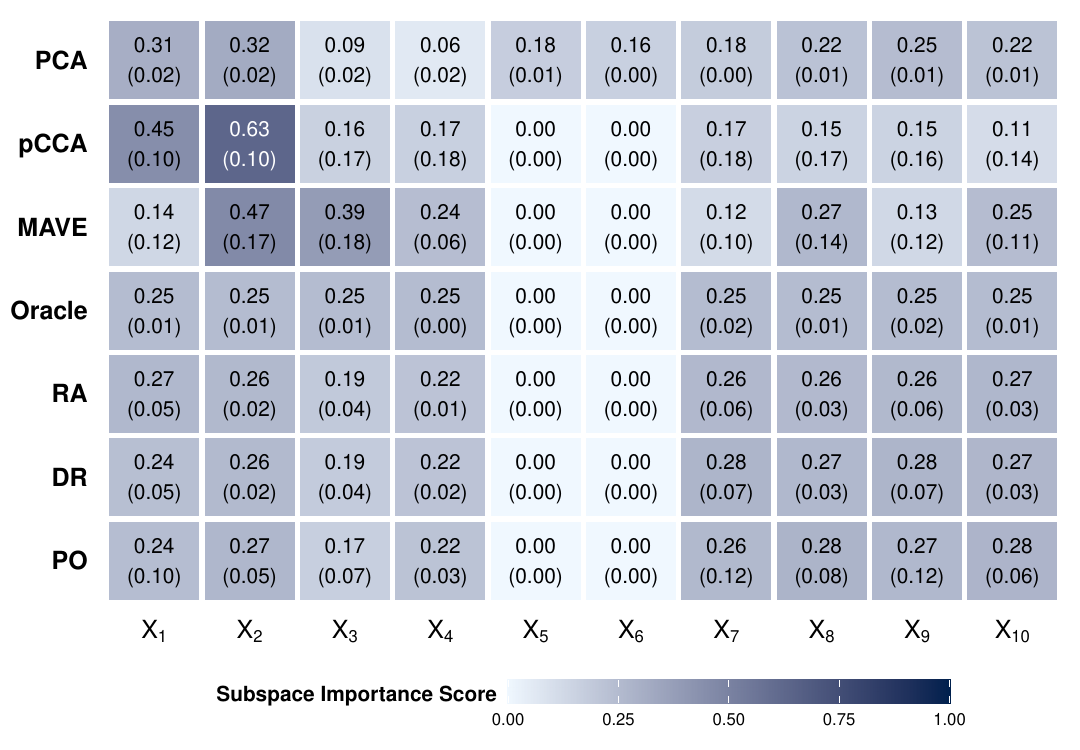}
    \caption{Subspace importance scores (SIS) for each method under a working dimension of $d=2$ at $n=1000$ under nonsparse dimension reduction.}
    \label{fig:sim_SIS_nonsparse}
\end{figure}

Although the improvements made in nuisance function estimation (moving from SuperLearner to the neural network) reduced error in estimating the ERS, they were still insufficient for valid inference. In particular, pointwise 95\% coverage was only attained for a subset of the 100 evaluation points. Undercoverage was most common for points near the edge of the observed density, possibly due to near positivity issues (Figure~\ref{fig:sim_pointwise_coverage_nonsparse}). More accurate nuisance estimation would likely improve inference but we do not pursue further model tuning and refinement here.

% RMSE Table (?)
\begin{table}[!bp]
\centering
\caption{Mean (SD) RMSE for ERS estimation across simulation runs under nonsparse dimension reduction.}
\label{tab:ers_rmse_nonsparse}
\resizebox{\textwidth}{!}{
\begin{tabular}{llccccc}
\toprule
Category & Method & $n = 100$ & $n = 500$ & $n = 1000$ & $n = 2500$ & $n = 5000$ \\
\midrule
\multirow{3}{*}{Baseline}
& pCCA     & 33.71 (20.48) & 27.55 (12.80) & 27.75 (12.51) & 28.10 (14.05) & 30.09 (16.57) \\
& MAVE     & 16.77 (7.88)  & 7.00 (2.49)   & 5.34 (1.43)   & 4.00 (0.80)   & 3.51 (0.54) \\
& Full $X$ & 15.18 (6.37)  & 6.77 (1.28)   & 4.56 (0.76)   & 3.44 (0.49)   & 3.06 (0.36) \\
\midrule
\multirow{4}{*}{CSDR}
& Oracle & 8.11 (5.61) & 2.75 (1.06) & 1.97 (0.79) & 1.41 (0.48) & 1.17 (0.34) \\
& RA     & 12.90 (5.24) & 5.49 (1.14) & 3.57 (0.97) & 1.71 (0.48) & 1.37 (0.35) \\
& DR     & 14.35 (6.57) & 5.45 (1.17) & 3.70 (0.96) & 1.72 (0.47) & 1.38 (0.40) \\
& PO     & 14.54 (8.81) & 5.61 (1.29) & 3.74 (1.03) & 1.92 (0.71) & 1.52 (0.55) \\
\bottomrule
\end{tabular}
}
\end{table}

% Coverage Plot (?)
\begin{figure}[!tbp]
    \centering
    \includegraphics[width=.99\textwidth]{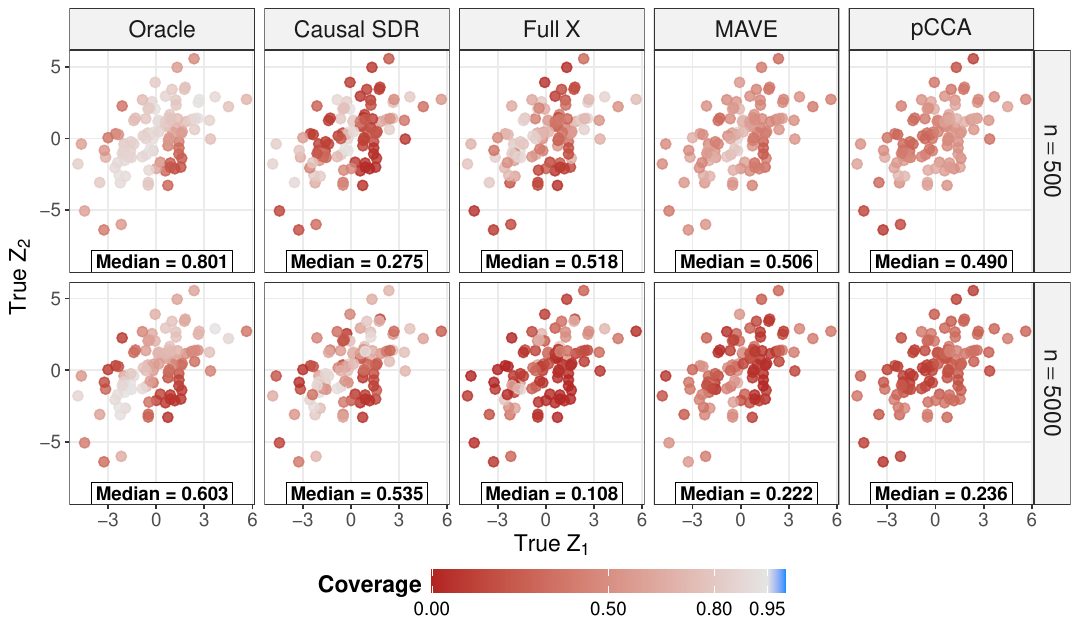}
    \caption{Pointwise 95\% coverage of ERS estimates across 100 evaluation points under nonsparse dimension reduction.}
    \label{fig:sim_pointwise_coverage_nonsparse}
\end{figure}

This extra simulation highlights the increased difficulty of causal SDR in the nonsparse $\beta$ setting. Although the structural dimension remains low at $d_0=2$, a larger set of the original variables in $X$ contributes to the ERS, making the first-stage estimation  of $\mu(X)$ more difficult. As a result, more flexible learners were required for both first- and second-stage estimation, yet errors were higher than in the main experiment when estimating $\CCMS$ and the ERS. This underscores the importance of sufficiently fast convergence rates for nuisance estimation that can depend on the underlying DGP, not only for consistency but also for achieving valid inference. At the same time, this experiment demonstrates the value of causal SDR in diverse scenarios. Despite the increased error and undercoverage for ERS estimation, the causal SDR variants still improves over the baseline methods. These results reinforce that causal SDR is not only useful for interpretability of effects, but can also deliver practical benefits for estimation and inference.

\clearpage

\section{Additional Real Data Analysis Results on ATL-AA}\label{sec:additional_RDA}
We compare the results of our ATL-AA analysis with two widely used approaches for environmental mixtures: the single-index \textit{qgcomp} 
\citepSM{keilQuantileBasedGComputationApproach2020SM} and the full exposure–response model \textit{BKMR} \citepSM{bobbBayesianKernelMachine2015SM}. These methods operate under fundamentally different modeling frameworks and produce distinct estimands, making them difficult to directly compare to CSDR. For this reason, we did not include them in the simulation study. We do present them here as part of the real data analysis to assess whether the general conclusions are consistent across methods. Code to run the study is available at \url{https://github.com/tXiao95/causalpca/blob/main/code/ATL-AA/analysis.R}.

\subsection{Quantile $g$-computation (qgcomp)}
The estimated qgcomp mixture effect was $\psi_1 = -26.3$ (95\% CI: $-78.1, 25.4$), corresponding to a 26.3-gram decrease in birthweight for a one-quartile simultaneous increase in all four PFAS exposures, albeit with substantial uncertainty. Our estimated ERS in Figure~\ref{fig:PFAS_ERS} also showed a general downward trend with increasing PFAS exposure. The qgcomp weights indicate that PFOS contributes 57.6\% of the negative association, followed by PFNA (35.4\%) and PFOA (7.0\%), while PFHxS is the only exposure with a positive contribution. This aligns with the CSDR results, where $\beta = (0.77, 0.26, 0.14, -0.56)$ and the corresponding subspace importance scores are $\lambda=(0.60, 0.07, 0.02, 0.31)$, again highlighting PFOS as the primary driver of the negative trend and PFHxS as the only exposure associated with increases in birthweight.

\begin{figure}[!tbp]
    \centering
    \includegraphics[width=1\textwidth]{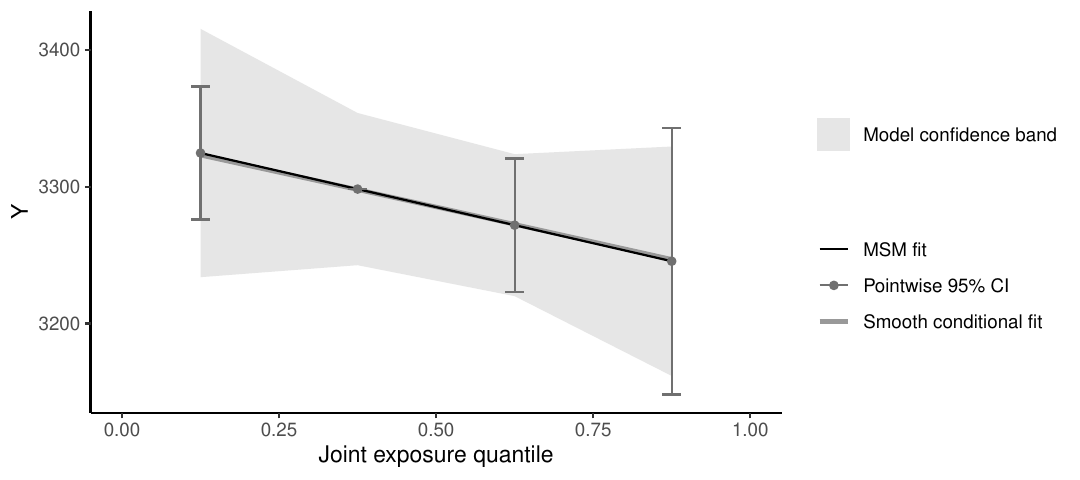}
    \caption{Dose-response curve estimated by qgcomp with $B=500$ bootstrap iterations.}
    \label{fig:ATL_AA_qgcomp}
\end{figure}

We also ran qgcomp with $B = 500$ bootstrap iterations and cubic basis functions to estimate a nonlinear dose–response curve (Figure~\ref{fig:ATL_AA_qgcomp}). However, the resulting curve was approximately linear. Overall, qgcomp yields conclusions consistent with those from CSDR. Because qgcomp does not extrapolate beyond the range of observed quantiles, its estimated dose–response is most comparable to the central region of the ERS shown in Figure~\ref{fig:PFAS_ERS}.

\subsection{Bayesian kernel machine regression (BKMR)}

We ran BKMR with 10,000 MCMC iterations and variable selection. The univariate response functions in Figure~\ref{fig:ATL_AA_bkmr_univariate} are broadly consistent with CSDR and qgcomp: increasing PFOS, PFOA, and PFNA is associated with decreases in birthweight, whereas PFHxS shows a positive association. However, the variable selection procedure yielded very low posterior inclusion probabilities (PIPs): 0.001 (PFOS), 0.029 (PFOA), 0.023 (PFNA), and 0.005 (PFHxS), suggesting weak evidence for the importance of any individual exposure. In this case, relative variable importance is difficult to assess, since the BKMR model did not identify a strong effect for any single component of the mixture. The weak effect was also present in the qgcomp results, where the estimated $\psi_1$ exhibited substantial uncertainty. While we did not conduct a formal hypothesis test for CSDR, the central region of the ERS (where the observed data was most concentrated) was relatively flat with wide confidence intervals.

\begin{figure}[!bp]
    \centering
    \includegraphics[width=.95\textwidth]{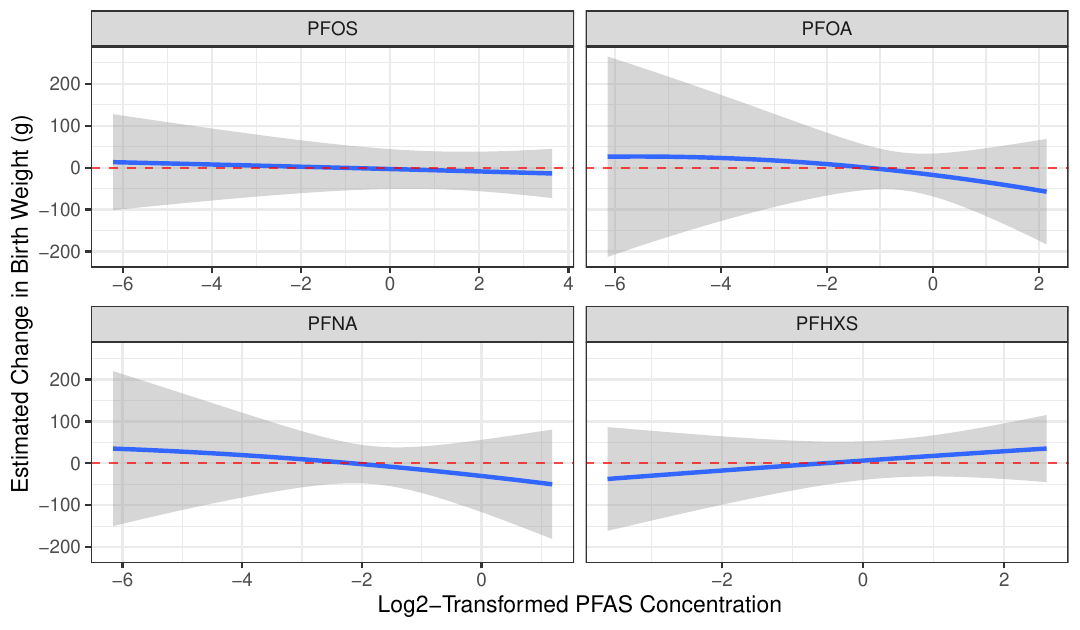}
    \caption{Univariate change in response after setting all other PFAS exposures to their median value.}
    \label{fig:ATL_AA_bkmr_univariate}
\end{figure}

% PIP
% Overall effect
Figure~\ref{fig:ATL_AA_bkmr_diff} visualizes the overall effect estimated by BKMR by comparing birthweight when all exposures are set to a given quantile versus the median. The results show an approximately linear decrease in birthweight with increasing PFAS levels, consistent with the dose–response patterns observed from both CSDR and qgcomp.

\begin{figure}[!bp]
    \centering
    \includegraphics[width=.95\textwidth]{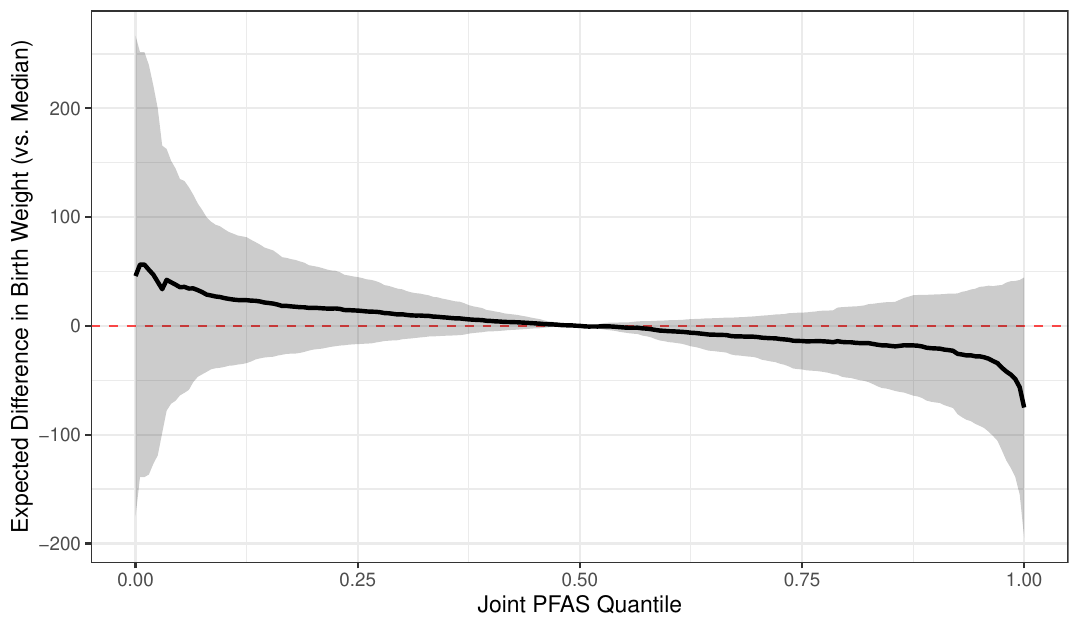}
    \caption{Difference in birthweight when all PFAS exposures are set to a common quantile vs. the median, estimated from posterior draws of BKMR.}
    \label{fig:ATL_AA_bkmr_diff}
\end{figure}

\clearpage

\bibliographystyleSM{apalike}
\bibliographySM{references_SM}

\end{document}